\documentclass[10pt]{article} 
\usepackage[preprint]{tmlr}

\usepackage{algorithm}
\usepackage{multirow}
\usepackage{algorithmic}
\usepackage{natbib}
\usepackage{microtype}
\usepackage{booktabs} 
\usepackage{hyperref}

\usepackage{mathtools}
\usepackage{amsmath, amsfonts, amsthm, graphicx, amssymb, mathtools, nicefrac, bbm}

\usepackage[capitalize,noabbrev]{cleveref}


\usepackage{breakurl}
\newtheorem{theorem}{Theorem}[section]

\newtheorem{lemma}[theorem]{Lemma}

\newtheorem{definition}[theorem]{Definition}
\newtheorem{assumption}[theorem]{Assumption}
\newtheorem{remark}[theorem]{Remark}
 
\newtheorem{claim}[theorem]{Claim}

\usepackage{graphicx}
\usepackage{physics}
\usepackage{esvect}
\usepackage{cancel}


\newcommand{\E}{\mathbb{E}}
\newcommand{\Rorth}{\mathbb{R}_{\geq 0}}
\newcommand{\Rpos}{\mathbb{R}_{>0}}
\newcommand{\cD}{\mathcal{D}}
\newcommand{\cS}{\mathcal{S}}

\newcommand\C{\mathcal{C}}

\newcommand{\defn}{:=}

\newcommand{\numprod}{n}

\newcommand{\sminus}{s_{-i}}

\newcommand{\expnt}[1]{\exp \left( #1 \right)}
\def\code#1{\texttt{#1}} 

\usepackage{xspace}
\newcommand{\rs}{recommender system\xspace}

\usepackage{relsize}

\usepackage{todonotes}
\usepackage{xcolor}

\DeclareMathOperator*{\argmax}{arg\,max}

\def\code#1{\texttt{#1}} 

\usepackage{xfrac}
\usepackage[utf8]{inputenc}
\usepackage{hyperref}
\usepackage{xcolor}
\usepackage{url}
\usepackage{bm}
\usepackage{mathtools}
\usepackage{authblk}
\usepackage{mdframed}
\usepackage{lipsum}

\usepackage{wrapfig}
\usepackage{graphicx}
\graphicspath{ {./Figures/} }
\usepackage{subfloat}
\usepackage{caption}
\usepackage{subcaption}

\usepackage{hyperref}
\usepackage{url}

\title{Producers Equilibria and Dynamics in Engagement-Driven Recommender Systems}


\author{\name Krishna Acharya \email krishna.acharya@gatech.edu \\
    \addr Georgia Institute of Technology
    \AND
    \name Varun Vangala \email vvangala3@gatech.edu \\
    \addr Georgia Institute of Technology
    \AND
    \name Jingyan Wang \email jingyanw@ttic.edu \\
    \addr Toyota Technological Institute at Chicago
    \AND
    \name Juba Ziani \email jziani3@gatech.edu \\
    \addr Georgia Institute of Technology
}


\let\classAND\AND
\let\AND\relax
\usepackage{algorithmic}

\let\AND\classAND
\AtBeginEnvironment{algorithmic}{\let\AND\algoAND}

\floatname{algorithm}{Algorithm}

\begin{document}

\maketitle

\begin{abstract}
Online platforms such as YouTube, Instagram heavily rely on recommender systems to decide what content to present to users. Producers, in turn, often create content that is likely to be recommended to users and have users engage with it. To do so, producers try to align their content with the preferences of their targeted user base. In this work, we explore the equilibrium behavior of producers who are interested in maximizing user \emph{engagement}. We study two variants of the content-serving rule for the platform's recommender system, and provide a structural characterization of producer behavior at equilibrium: namely, each producer chooses to focus on a single embedded feature.
We further show that specialization, defined as different producers optimizing for distinct types of content, naturally emerges from the competition among producers trying to maximize user engagement. We provide a heuristic for computing equilibria of our engagement game, and evaluate it experimentally. We highlight i) the performance and convergence of our heuristic, ii) the degree of producer specialization, and iii) the impact of the content-serving rule on producer and user utilities at equilibrium and provide guidance on how to set the content-serving rule \footnote{Code available at \url{https://github.com/krishnacharya/recsys_eq}}.
\end{abstract}

\section{Introduction}
Recommender systems have transformed our interactions with online platforms like Instagram, Spotify, and YouTube \citep{stiger19, qian2022digital}. These systems curate content to enhance user experiences, fostering user retention and engagement \citep{Goodrow_2021, Instagram_2023}. This often translates into increased revenue for the platform. Due to their importance to today's digital industry, there has been a large body of work aiming at developing new and improving existing recommendation algorithms, e.g.,~\citep{koren2009matrix,li2010contextual,lu2012recommender,wang2012nonnegative,luo2014efficient,covington2016deep, he2017neural, two-tower2}.

In 2023, the creator economy, driven largely by these systems, was valued at a staggering \$250 billion~\citep{businessinsider}. Content producers have adapted to this landscape to act \emph{strategically} \citep{milli2023choosing, merrill2021five, Mack_2019}. They often compete against each other and tailor their content to maximize \emph{exposure} (or how many users does a producer reach)  or \emph{engagement} (or how much users engage with a given producer's content). This competition can be modelled as a game, where understanding the dynamics offers insights into content creation incentives and phenomena like content specialization. However, comprehending these dynamics, both theoretically and empirically, presents challenges when considering the complex nature of user preferences and content strategies in high-dimensional spaces. Finding Nash equilibria is in general a computationally difficult process, especially as the number of players and the action space increase. In the case of recommender systems, user preferences and producer contents are represented by high-dimensional vectors, and the set of possible strategy profiles for the producers rapidly grows intractably large as the size of the game increases. 

In this work, we aim to provide new insights into producer competition and the equilibria of recommender systems. We take a departure from much of the related literature that aims at understanding equilibria and dynamics when producers try to maximize \emph{exposure} (i.e., how many users see their content). We instead focus on producers interested in maximizing \emph{engagement}, a metric that encompasses not just exposure but also how much users interact with content. 

Our main goal is to understand the extent and conditions under which content specialization occurs—defined as different producers choosing to create distinct types of content—instead of all producers producing the same homogeneous content.

\paragraph{Summary of contributions} Our main contributions are as follows:
\begin{itemize}
\item In Section~\ref{sec:model}, we formally introduce our model. We assume producers aim to maximize user \emph{engagement} instead of user \emph{exposure}, where the latter is typical in works characterizing producer equilibria in recommender systems \citep{hron2022modeling, meena-ss}. We rely on the softmax rule used in previous work~\citep{hron2022modeling, chen2019top} for showing content to users, but also introduce a new linear-proportional content-serving rule as a baseline for comparison.
\item In Section~\ref{sec:eq_structure}, we provide our main theoretical structural characterization result. We \emph{mathematically prove} that at equilibrium, producers prefer producing content that targets a single (embedded) feature at a time, rather than investing across several features.
\item In Section~\ref{sec:simple_eq}, we characterize the pure Nash equilibrium with the linear content-serving rule for a simplified setting in which we assume all users are ``single-minded'', i.e. are only interested in a single type of content. The closed-form equilibrium we derive exhibits \emph{specialization}, defined as when different producers split themselves across different types of content, rather than all producing the same, homogeneous content. While this setting is simplified, we note in the experimental results of Section~\ref{sec:experiments} that the insights it provides do carry on to the general softmax content-serving rule and to more general user preferences.
\item In Section~\ref{sec:experiments}, we present experiments on synthetic data and three real-world datasets—Movielens-100k, Amazon Music, and Rent the Runway~\citep{movielens,amazon-ratings,rentrunway}. First, we introduce a computationally efficient heuristic based on best-response dynamics (Algorithm~\ref{alg:bestrep_dynamics}) for computing pure Nash equilibria of our engagement game. We observe that this heuristic almost always converges to a pure Nash equilibrium within relatively few steps.  Second, our experiments further characterize producer specialization, showing that it occurs even under more complex settings than those in Section~\ref{sec:simple_eq}. We then study the effect of the temperature parameter in the softmax content-serving rule on both producer specialization and utility. Specifically, we demonstrate that the degree of producer specialization is monotonic with respect to temperature: higher temperatures (representing greater exploration in the content shown to users) incentivize more homogeneous content production, while lower temperatures (favoring the most relevant content for each user) encourages specialization, with producers focusing on distinct types of content. Lastly, we show that both producer and user utilities decrease monotonically with increasing temperature. A low softmax temperature yields the highest utility for both producers and users, emphasizing the benefits of selecting lower temperatures in the softmax content-serving rule.
\end{itemize}

\subsection{Related work} The study of strategic behavior and incentives among producers in recommender systems has seen much interest recently. These interactions can broadly be classified into two types of games: exposure-based, where producers are rewarded for maximizing the reach of their content, and engagement-based, where producer rewards depend not only on reach but also on how well the recommended content aligns with user preferences.
\paragraph{Exposure games} The seminal works of \citet{raifer2017information, basat2017game, ben2018game, ben2019convergence, ben2019recommendation, ben2020content} introduce game-theoretic models of competition among producers aiming to maximize exposure. While these early studies constrain content creators to finite strategy spaces by requiring them to select from a pre-specified, finite catalog, more recent works—such as \citet{meena-ss, hron2022modeling}, which are most closely related to ours—have relaxed this assumption by focusing on creators with continuous strategy sets. 

\paragraph{Engagement games} The works of \citep{topk-recsys, immorlica2024clickbait, huttenlocher2024matching} model producer rewards based on engagement. \citet{topk-recsys} study social welfare in scenarios where content creators compete without the mediation of a central recommender, and focus on bounding the Price of Anarchy\footnote{The ratio between the optimal welfare to that at the decentralized equilibrium\citep{koutsoupias1999worst}}. \citet{immorlica2024clickbait} examine engagement games from a different perspective, exploring the trade-off between producing high-quality content and gaming the recommender system by creating low-quality, "clickbait" content. \citet{huttenlocher2024matching} model the problem as a two-sided marketplace with departing users and creators and show that maximizing total engagement in such a setting is NP-hard.

\paragraph{Mechanism design} 
Works on Exposure and Engagement games focus on characterizing producer competition and equilibrium under a fixed content-serving rule and producer reward. In contrast, \cite{yao2023rethinking, yao2024user} adopt a mechanism design perspective, designing rewards and serving rules incentivizing equilibria that have high social welfare. \cite{hu2023incentivizing} explore a similar setting but model the platform as a linear contextual bandit.

\paragraph{User dynamics} Our work studies adaptive producers who change their content vectors to maximize user engagement. The user preferences are static, consistent with the literature on Exposure and Engagement games. Another line of work by \citep{DeanPrefdyna,PrefAmplificationMeta} studies shifting user preferences based on the content recommended to them. \citet{usercreatordual} adopt this user preference shift model but also model evolving producers, providing conditions for user polarization, though their producer evolution is not game-theoretic.

A table providing a more detailed comparison of our paper to related work along various axes (such as the nature of producer reward, type of equilibrium and dynamics) is available in Appendix \ref{app:relatedwork-table}. 

\section{Our model}\label{sec:model}
We consider an \emph{engagement game} between $n$ producers on an online platform. The producers must decide what type of content to produce in a $d$-dimensional space to maximize the engagement from users. This space is generally not what one may think of as the original feature space, where features are defined, e.g., as different genres that a producer or user may care about. Rather, these features are embedded features that are the results of a matrix factorization algorithm\footnote{MF learns latent representations of users and movies which are then used to predict user preferences and ratings. Our results however do not depend on the specifics of the algorithm used to obtain this embedding.} whose goal is to learn representative ``directions'' of the recommendation problem, as in~\cite{hron2022modeling,meena-ss}. The online platform then uses their \rs to recommend content to users as a function of how well producer content matches user preferences. More formally:

\paragraph{Producer model} We have $n$ producers on the platform. Each producer $i$ \emph{chooses} a content vector $s_i$ from the set $\cS \defn \{s:~s \in \Rorth^d,~\Vert s \Vert_1 \leq 1\}$. We note that we focus on the $\ell_1$-norm in order to model the relative amount of weight that each producer has on each embedded feature. We let $\Delta\cS$ be the set of probability distributions over $\cS$. Letting $s_i(f)$ denote the $f$-th entry of content vector $s_i$ for feature $f \in [d]$, one can interpret $s_i(f)$ as the fraction of producer $i$'s content that targets embedded feature $f$. 

\paragraph{User model} We have $K$ users on the platform. Each user $k \in [K]$ is described by a preference vector in $\C \defn \{c:~c \in \Rorth^d,~\Vert c \Vert_1 \leq 1\}$. We note that $c_k$ describes user preferences in the form of the amount of weight they attribute to each embedded feature. The more weight on the feature, the more utility the user derives from seeing content that aligns with said feature. We assume that the utility of a user with preferences $c_k$ who faces content $s$ is given by $c_k^\top s$.

\paragraph{Recommender system's content-serving rule} The platform uses a recommender system (RS) to decide which producer's content to show to which user. Denote $\vec{s} = (s_1,\ldots,s_n) \in \cS^n$ the full profile of producers' production choices. The RS shows producer $i$'s content $s_i$ to a user with preferences $c_k$ with probability $p_i(c_k,\vec{s}) = p_i(c_k, s_i,\sminus)$, where $\sminus$ denotes the rest of the producers. We call this probability the \emph{content-serving rule}. In this paper, we consider two content-serving rules: 
\begin{enumerate}
        \item The \emph{softmax} content-serving rule is, as in ~\cite{hron2022modeling, chen2019top}:
        \begin{align}
        \label{eq: sigmoid_p}
        p_i(c,s_i,\sminus) &\triangleq \frac{\expnt{\frac{c^\top s_i}{\tau}}}{\sum_{j=1}^n \expnt{\frac{c^\top s_j}{\tau}}}
        \end{align}
        where $\tau$ denotes the softmax temperature. A low temperature corresponds to greedier serving (i.e., only the best fitting producer's content is shown to the user), whereas a high temperature corresponds to adding more randomness to the serving (``worse'' producers may still have their content shown to the user, albeit with lower probability). The limit $\tau \to 0$ corresponds to a hard maximum, i.e., the producer whose content is best aligned, namely producer $\argmax_{j \in [n]} c^\top s_j$, is shown to user $c$. 
    \item The \emph{top-$k$ softmax} content-serving rule first selects the top-$k$ producers with the highest alignment scores $c^\top s_j$, where $j \in [n]$. Among the selected $k$ producers, we then apply the softmax function with temperature $\tau$:
        \begin{align}
        \label{eq: topk_sigmoid_p}
        p_i(c,s_i,\sminus) &\triangleq 
        \begin{cases} 
        \frac{\expnt{\frac{c^\top s_i}{\tau}}}{\sum_{j \in  \mathcal{K}} \expnt{\frac{c^\top s_j}{\tau}}} & \text{if } i \in \mathcal{K}, \\
        0 & \text{otherwise},
        \end{cases}
        \end{align}
    where $\mathcal{K} = \left\{j \in [n] : c^\top s_j \text{ is among the top $k$ values of } \{c^\top s_1, \ldots, c^\top s_n \} \right\}$.
    When $k = n$, the top-$k$ softmax rule reduces to the regular softmax rule defined in \eqref{eq: sigmoid_p}, and when $k = 1$, it's defined as the \emph{greedy serving rule}.
    \item The \emph{linear-proportional} content-serving rule (\cite{Luce1977215} choice axiom), where each producer's content $s_i$ is shown to user $c_k$ with a probability directly proportional to $c_k^\top s_i$.
        \begin{align}\label{eq: linear_p}
        p_i(c,s_i,\sminus) \triangleq 
        \begin{cases}
        \frac{c^\top s_i}{\sum_{j=1}^n c^\top s_j} &\text{if}~~c^\top s_i > 0,\\
        0  &\text{if}~~c^\top s_i = 0.
        \end{cases}
        \end{align}
    Note that the linear serving rule is well-defined even when $c^\top s_j$ is zero for all producers $j$. We use this rule as a baseline to compare performance to the typical softmax-based rule, and as an alternative rule to derive theoretical insights (our experiments in Section \ref{sec:experiments} show that theoretical insights for the linear-content serving rule in fact extend to the softmax rule).
    \item The \emph{round-robin serving rule} serves producers in a cyclic order: In the first round, all users are shown producer $1$'s content, in the second round producer $2$ and so on. Formally, the serving probability for user $c$ in serving round $r$ is defined as:
    \begin{align}
    \label{eq: round_robin_p}
    p_i^r(c, s_i, \sminus) &\triangleq 
    \begin{cases} 
    1 & \text{if } i = (r-1 \mod n) + 1, \\
    0 & \text{otherwise}.
    \end{cases}
    \end{align}
\end{enumerate}

\paragraph{Producer utilities} One of our contributions is to characterize the utility of a producer using \emph{engagement}, rather than just \emph{exposure}. In exposure games, the utility of a producer is simply the probability that this producer's content is shown to a user. In contrast, with \emph{engagement}, the utilities incorporate an additional term that measures \emph{how much a user engages with the content once this content is shown to them}. Formally, we assume that a producer $i$ who successfully shows content $s_i$ to user $k$ with preferences $c_k$ derives utility $s_i^\top c_k$ from that user. This captures the fact that a user whose preferences are better aligned with the producer's content are more likely to spend more time engaging with that content. Formally, we define the \emph{expected engagement utility} for producer $i$ as 
\begin{align}
    u_i({s}) = u_i(s_i,\sminus) 
    \triangleq \sum_{k=1}^K p_i(c_k, \vec{s}) \cdot c_k^\top s_i.\label{eq:prod-util}
\end{align}
Note that this \emph{expected} utility is reweighted by the probability of producer $i$ showing $s_i$ to user $k$, as $i$ derives no utility from user $k$ if said user does not see his content in the first place. The total producer utility $U_p$ is then defined as
\begin{align*}
U_p \triangleq \sum_{i=1}^n \sum_{k=1}^K p_i(c_k, \vec{s}) \cdot c_k^\top s_i.
\end{align*}

\paragraph{User utilities} We similarly define the utility for a user with embedding $c_k$ as its engagement in expectation across all producers i.e., $\sum_{i = 1}^n p_i(c_k,\vec{s}) \cdot c_k^\top s_i$. The total user utility $U_u$ is defined as
\begin{align*}
U_u \triangleq \sum_{k=1}^K \sum_{i=1}^n  p_i(c_k, \vec{s}) \cdot c_k^\top s_i.
\end{align*}

\begin{remark} The value of the total producer utility $U_p$ and the total user utility $U_u$ are equal, and the average producer utility and the average user utility are equal up to a multiplicative factor.
\end{remark}

\begin{remark}
\label{rem:round-robin-conv}
For the round-robin serving rule \eqref{eq: round_robin_p}, it is easy to see that the utility for a producer $i$ is zero if it is not being served. When it is served, the utility is given by $\max_{s_i \in S} \sum_{k=1}^K c_k^\top s_i = \norm{\sum_{k=1}^K c_k}_\infty$,
which is achieved by setting $s_i$ to the basis vector corresponding to the largest weight.
\end{remark}

\section{Equilibrium Structure}\label{sec:eq_structure}
In this section, we derive our main structural result for equilibrium in engagement games: namely, we show that at equilibrium, \emph{each producer prefers targeting a single embedded feature at a time.} 

Our first main assumption is that for all users, their features are strictly positive: 
\begin{assumption}\label{as:positive_c}
For every user $k$ and feature $f$, we have $c_k(f) > 0$. 
\end{assumption}
This assumption holds in practice when user representations are obtained via \emph{Non-Negative Matrix Factorization} (NMF)~\citep{lee2000algorithms, luo2014efficient} as is observed in~\citet{meena-ss, hron2022modeling} and in our experiments in Section \ref{sec:experiments}. We additionally make an assumption on the data distribution, guaranteeing that user preferences $c_1, \ldots, c_k$ are non-trivial:

\begin{assumption}\label{as:nontrivial_c}
For all pairs of production strategies $s,~s' \in \cS$ such that $s \neq s'$, there must exist at least one user $k$ such that $c_k^\top s \neq c_k^\top s'$.
\end{assumption}



This is a mild assumption that states that there is enough diversity in user preferences. Equivalently, this assumption states that user preferences are non-trivial and diverse enough such that $span(c_1,\ldots,c_K) = \mathbb{R}^d$, i.e., the user preferences span \footnote{Indeed, there then exists a subset of size $d$ of $(c_1,\ldots,c_K)$ that forms a basis for $\mathbb{R}^d$. If $s \neq s'$, they must differ in at least one coordinate in this basis, so there must exist $c_k$ such that $c_k^\top s \neq c_k^\top s'$} all of $\mathbb{R}^d$. If the user preferences do not span all of $\mathbb{R}^d$, this means that some latent features are redundant. In practice, one may work with a
reduced embedding dimension and perform a new matrix factorization until Assumption~\ref{as:nontrivial_c} holds.

We are now interested in understanding properties of the Nash equilibria (NE) of our engagement-based content production game, defined as the game where each producer decides which content to produce to maximize their utility. 
Nash equilibria are a classical concept for solving games~\citep{nash1951}. We apply the standard definition to our formulation as follows.

\begin{definition}[Nash Equilibrium]
For any producer $i$, the strategy $s_i^*\in \cS$ that solves $u_i(s_i^*,\sminus) = \max_{s_i\in \cS}~u_i(s_i,\sminus)$ is called a \emph{best response} to $\sminus$. A strategy profile $\left(s_1^*,\ldots,s_n^*\right) \in \cS^n$ is a pure-strategy Nash equilibrium (pure NE) if and only if for every producer $i \in [n]$,
\[
u_i(s_i^*,\sminus^*) = \max_{s_i\in \cS}~u_i(s_i,\sminus^*).
\]
A strategy profile $\left(D_1^*,\ldots,D_n^*\right) \in \Delta\cS^n$, where $\Delta\cS^n$ denotes the probability simplex over $\cS^n$, is a mixed-strategy Nash equilibrium (mixed NE) if and only if for every producer $i \in [n]$,
\begin{align*}
\E_{s_i \sim D_i^*,s_{-i} \sim D_{-i}^*}\left[u_i(s_i,\sminus)\right]
 = \max_{D_i\in \Delta \cS}~\E_{s_i \sim D_i,s_{-i} \sim D_{-i}^*} \left[u_i(s_i,\sminus)\right].
\end{align*}
\end{definition}

Under either pure or mixed Nash equilibrium, all producers best respond to each other and do not want to change their strategy: i.e., each producer maximizes its utility and gets the best utility it can by playing the Nash, assuming all remaining producers also play the Nash.

We now characterize the equilibria of our game, under both content serving rules of Equation~\eqref{eq: sigmoid_p} and Equation~\eqref{eq: linear_p}, showing that producers prefer to focus on a single embedding dimension at a time:

\begin{theorem}\label{thm:pure_eq} 
Suppose Assumptions~\ref{as:positive_c} and~\ref{as:nontrivial_c} hold. Let $\mathcal{B} \defn (e_1,\ldots,e_d)$ be the standard basis for $\mathbb{R}^d$, where each $e_j$ is the unit vector with value $1$ in coordinate $j \in [d]$ and $0$ in all other coordinates. Under both types of content-serving rules, if there exists a NE, any pure strategy for producer $i$ must satisfy $s_i^*\in \mathcal{B}$, and any mixed strategy must be a distribution supported on $\mathcal{B}$ in this equilibrium.
\end{theorem}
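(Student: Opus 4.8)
The plan is to fix a producer $i$ together with an arbitrary competitor profile $\sminus$, characterize the best-response set $\argmax_{s_i \in \cS} u_i(s_i,\sminus)$, and show it lies in $\mathcal{B}$ whenever $\sminus$ is a profile that can arise at a NE. The pure-NE claim then follows because every producer simultaneously best-responds, and the mixed-NE claim follows from the standard fact that every pure strategy in the support of a mixed best response is itself a pure best response to $D_{-i}^*$.

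The first step is a common reduction for both rules. The objective depends on $s_i$ only through the alignments $x_k \defn c_k^\top s_i$ and decomposes as $u_i(s_i,\sminus) = \sum_{k=1}^K g_k(c_k^\top s_i)$, where for the linear rule $g_k(x) = x^2/(x+B_k)$ with $B_k \defn \sum_{j\neq i} c_k^\top s_j \ge 0$, and for the softmax rule $g_k(x) = x\,e^{x/\tau}/(e^{x/\tau}+Z_k)$ with $Z_k \defn \sum_{j\neq i} e^{c_k^\top s_j/\tau} > 0$. Each $g_k$ is strictly increasing for $x>0$; combined with Assumption~\ref{as:positive_c} (so $c_k(f)>0$, and adding mass to any coordinate strictly raises every $x_k$) this shows any best response must spend the full budget, $\|s_i\|_1 = 1$. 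Hence I may optimize over the simplex $\Delta \defn \{s \in \Rorth^d : \|s\|_1 = 1\}$, whose extreme points are exactly $\mathcal{B} = \{e_1,\dots,e_d\}$, and it remains to show the maximum over $\Delta$ is attained only at vertices.

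For the linear rule I would argue by strict convexity. A direct computation gives $g_k''(x) = 2B_k^2/(x+B_k)^3$, nonnegative and strictly positive exactly when $B_k>0$, so the Hessian is $\nabla^2 u_i = \sum_k g_k''(x_k)\, c_k c_k^\top$. I first show $B_k>0$ for every $k$ at any NE: playing $0$ yields utility $0$ while any $e_f$ yields $\sum_k g_k(c_k(f)) > 0$, so no producer plays $0$; hence every competitor of $i$ is nonzero, and since $c_k>0$ this forces $B_k = \sum_{j\ne i} c_k^\top s_j > 0$. Then all $g_k''>0$, and since by Assumption~\ref{as:nontrivial_c} the $c_1,\dots,c_K$ span $\R^d$, we get $v^\top \nabla^2 u_i\, v = \sum_k g_k''(x_k)(c_k^\top v)^2 > 0$ for all $v\neq 0$, i.e. $u_i$ is strictly convex; a strictly convex function attains its maximum over a compact convex set only at extreme points, so the best response lies in $\mathcal{B}$. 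For the mixed case the averaged objective $\sum_k \E_{\sminus\sim D_{-i}^*}[g_k(c_k^\top s_i)]$ is an expectation of strictly convex functions ($B_k>0$ almost surely, since no competitor places mass on $0$), hence strictly convex, so each pure strategy in the support of a mixed best response is a vertex and $\mathrm{supp}(D_i^*)\subseteq\mathcal{B}$.

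The main obstacle is the softmax rule, where this convexity argument breaks. Writing $p_k = g_k(x_k)/x_k \in (0,1)$ for the serving probability, one finds $\mathrm{sign}\,g_k''(x_k) = \mathrm{sign}\big(2 + \tfrac{x_k}{\tau}(1-2p_k)\big)$, which is negative precisely when producer $i$ strongly wins user $k$ (large $p_k$ and $x_k/\tau$); thus $u_i$ is genuinely non-convex at low temperature and can even develop interior maxima against very weak competitors. The resolution must use that such weak-competitor profiles cannot occur at a NE. My plan is to analyze each simplex edge $v = e_{f_1}-e_{f_2}$ separately: on the users that producer $i$ loses or contests the objective is convex (increasing returns to focusing), and I would show that at any admissible equilibrium profile the aggregate curvature from contested users — whose existence and diversity are guaranteed by Assumptions~\ref{as:positive_c} and~\ref{as:nontrivial_c} — dominates the concave contribution of strongly-won users, so that the restriction of $u_i$ to the edge has no interior global maximum and is maximized at an endpoint. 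Establishing this curvature bound uniformly over the competitor profiles that survive at a NE is the crux of the argument; once it is in hand, the pure and mixed conclusions follow exactly as in the linear case.
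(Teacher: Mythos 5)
Your reduction and your treatment of the linear-proportional rule are correct and essentially the same as the paper's argument: no producer plays $0$ at any pure or mixed equilibrium (zero utility versus strictly positive utility for $e_f$, using Assumption~\ref{as:positive_c}), hence $B_k=\sum_{j\neq i}c_k^\top s_j>0$ for every user $k$; each $g_k(x)=x^2/(x+B_k)$ is then strictly convex; and Assumption~\ref{as:nontrivial_c} upgrades this to strict convexity of $s_i\mapsto\sum_k g_k(c_k^\top s_i)$, so the maximum over $\cS$ is attained only at extreme points. The only cosmetic difference is that you get strictness from positive-definiteness of the Hessian $\sum_k g_k''(x_k)\,c_k c_k^\top$ via the spanning property, while the paper applies the definition of strict convexity to a pair $s_i\neq s_i'$ and a distinguishing user; these are interchangeable. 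Your full-budget reduction and the passage to mixed strategies by linearity of expectation also mirror the paper.

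The genuine gap is the softmax rule, and you have named it yourself: what you offer there is a plan, not a proof. Your sign formula $\mathrm{sign}\,g_k''=\mathrm{sign}\bigl(2+\tfrac{x}{\tau}(1-2p_k)\bigr)$ agrees exactly with the paper's second-derivative computation (the paper's sufficient condition for positivity is $S_{exp}>e^{x/\tau}$, i.e.\ $|x|\le\tau\log S_{exp}$), so the obstruction is real: at low temperature, a producer who strongly wins a user sits in a locally concave region and the vertex-maximality argument does not go through as stated. The curvature-domination step you propose --- that at any profile surviving at a NE the convex contribution of contested users outweighs the concave contribution of strongly-won users, edge by edge --- is exactly what you admit you cannot establish, and it is not obviously true without an extra hypothesis (many producers or moderate $\tau$). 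You should know that the paper does not close this gap either: its convexity claim for the softmax rule is stated only on the region $|x|\le\tau\log S_{exp}$, justified by the remark that this holds ``when the game is large enough'' (since $S_{exp}\ge n-1$ while $e^{x/\tau}\le e^{1/\tau}$ because $x=c^\top s\in[0,1]$), and the main proof then proceeds as if strict convexity held everywhere. So your write-up is the more candid one about the low-temperature, few-producer regime, but as a proof of Theorem~\ref{thm:pure_eq} for the softmax rule it is incomplete: either carry out the curvature bound, or state the explicit restriction (e.g.\ $\tau\log(n-1)>1$, or $S_{exp}>e^{x/\tau}$ at all NE-relevant profiles) under which the linear-rule argument applies verbatim --- which is, implicitly, what the paper does.
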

Informally, the theorem shows that at equilibrium, producer strategies are supported on the standard unit basis rather than on the entire simplex $\cS = \left\{s:~\Vert s \Vert_1 \leq 1,~s \in \mathbb{R}^d_{\geq 0}\right\}$. Each producer focuses on a single feature in the embedded space. Conceptually, the producers play a feature selection game where they trade-off i) choosing the ``best'' features (those that align best with users' preferences) to improve the reward when showing content to users with ii) choosing potentially sub-optimal features to avoid competition with other producers over top features and improve the chance of showing content to users.

The remainder of the paper studies how the choice of content-serving rule affects the trade-off between effects i) and ii). i) pushes for more homogeneous content production while ii) promotes specialization. Section~\ref{sec:simple_eq} provides some theoretical insights on this trade-off, and Section~\ref{sec:experiments} provides experimental results for both the linear-proportional and the softmax serving rule.

\subsection{Proof of Theorem~\ref{thm:pure_eq}}\label{app:proof_theorem}
\paragraph{Preliminary properties of producers' utilities} We start by noting the convexity properties of producers' utilities. Everywhere in this proof, we denote $S(c,\sminus) =  \sum_{j \neq i}^n c_k^\top s_j$.

\begin{claim}[Convexity for \emph{linear-proportional} serving rule]
The function
\begin{align*}
f_{k,\sminus}(x) = \frac{x^2}{x + \sum_{j \neq i}^n c_k^\top s_j}
\end{align*}
is convex in $x$ on domain the $\Rpos$. Further, suppose $\sum_{j \neq i}^n c_k^\top s_j > 0$. Then it is strictly convex in $x$ on the domain $\Rpos$.
\end{claim}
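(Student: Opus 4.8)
The plan is to reduce the convexity claim to a direct second-derivative computation, which becomes clean after a small algebraic simplification. Write $S \defn \sum_{j \neq i} c_k^\top s_j$ for the competitors' aggregate alignment with user $k$. The only structural input I need is that $S \geq 0$: since every $c_k$ and every $s_j$ lies in $\Rorth^d$, each inner product $c_k^\top s_j$ is non-negative, so their sum $S$ is non-negative. In particular, on the domain $\Rpos$ we have $x > 0$ and $S \geq 0$, so $x + S > 0$ and the denominator never vanishes.

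First I would rewrite the function via polynomial division, noting that $(x - S)(x + S) + S^2 = x^2$, so that for all $x > 0$,
\[
f_{k,\sminus}(x) = \frac{x^2}{x + S} = x - S + \frac{S^2}{x + S}.
\]
This isolates the only nonlinear piece, $\tfrac{S^2}{x+S}$, and makes differentiation immediate. Differentiating twice yields $f_{k,\sminus}'(x) = 1 - \tfrac{S^2}{(x+S)^2}$ and then $f_{k,\sminus}''(x) = \tfrac{2 S^2}{(x+S)^3}$.

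The conclusion follows by inspecting the sign of $f_{k,\sminus}''$. On $\Rpos$ we have $x + S > 0$, hence $(x+S)^3 > 0$, while $S^2 \geq 0$ always; therefore $f_{k,\sminus}''(x) \geq 0$ throughout $\Rpos$, which gives convexity. For the sharpened statement, when $S > 0$ we have $S^2 > 0$, so $f_{k,\sminus}''(x) > 0$ on all of $\Rpos$, yielding strict convexity.

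I do not anticipate a genuine obstacle here: the whole argument is an elementary computation once the decomposition is in hand. The only point demanding a moment of care is verifying the sign conditions — namely $S \geq 0$ from non-negativity of the embeddings, and $x + S > 0$ on the open domain $\Rpos$ — so that the denominator stays strictly positive and the sign of $f_{k,\sminus}''$ is fully controlled; working on the open domain $\Rpos$ conveniently removes any boundary subtlety. One could instead apply the quotient rule directly to $\tfrac{x^2}{x+S}$ without the division, but the rewriting keeps the bookkeeping shortest.
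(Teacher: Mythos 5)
Your proof is correct and takes essentially the same approach as the paper: both arguments compute the second derivative, arrive at the identical expression $f_{k,\sminus}''(x) = \frac{2S^2}{(x+S)^3}$, and conclude by inspecting its sign ($\geq 0$ always, $> 0$ when $S > 0$). Your polynomial-division rewrite $f_{k,\sminus}(x) = x - S + \frac{S^2}{x+S}$ is only a bookkeeping shortcut relative to the paper's direct quotient-rule computation, not a different argument.
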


\begin{proof}
The first-order derivative is given by $f_{k,\sminus}'(x) = \frac{x (2S(c,\sminus) + x)}{(S(c,\sminus)+x)^2} \geq 0$. The second order derivative of $f$ is given by $f_{k,\sminus}''(x) = \frac{2S(c,\sminus)^2}{(x + S(c,\sminus))^3}$. Note that on $x \in \Rpos$, $f_{k,\sminus}''(x) \geq 0$, with $f_{k,\sminus}''(x) > 0$ when $S(c,\sminus) > 0$. This concludes the proof. 
\end{proof}

\begin{claim}[Convexity for \emph{softmax} serving rule] The function
\[
f_{k,\sminus}(x) = \frac{x \expnt{x/\tau}}{\expnt{x/\tau} + \sum_{j \neq i}^n\expnt{c_k^\top s_j/\tau}}
\]
is strictly convex in $x$ for $|x| \leq \tau \log(S_{exp})$.
\end{claim}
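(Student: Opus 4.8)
The plan is to treat the competitors' aggregate as a constant and reduce the whole claim to a one-dimensional statement about a shifted logistic function. Fix $k$ and $\sminus$, and set $S_{\exp} := \sum_{j\neq i}\expnt{c_k^\top s_j/\tau} > 0$, which does not depend on $x$. Writing $\sigma(x) := \frac{\expnt{x/\tau}}{\expnt{x/\tau} + S_{\exp}}$, we have $\sigma(x)\in(0,1)$ and $f_{k,\sminus}(x) = x\,\sigma(x)$. The point of this reduction is that $\sigma$ is exactly a logistic function, so I can invoke the standard identities $\sigma'(x) = \tfrac1\tau\,\sigma(x)\bigl(1-\sigma(x)\bigr)$ and $\sigma''(x) = \tfrac{1}{\tau^2}\,\sigma(x)\bigl(1-\sigma(x)\bigr)\bigl(1-2\sigma(x)\bigr)$, which I would verify once by direct differentiation rather than expanding the quotient rule on $f$ itself.

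Next I would differentiate the product $f_{k,\sminus} = x\sigma$ twice, obtaining $f_{k,\sminus}'(x) = \sigma(x) + x\,\sigma'(x)$ and $f_{k,\sminus}''(x) = 2\sigma'(x) + x\,\sigma''(x)$. Substituting the logistic identities and pulling out the common factor yields
\[
f_{k,\sminus}''(x) = \frac1\tau\,\sigma(x)\bigl(1-\sigma(x)\bigr)\left[\,2 + \frac{x}{\tau}\bigl(1-2\sigma(x)\bigr)\right].
\]
Since $\sigma(x)\in(0,1)$ and $\tau>0$, the prefactor $\tfrac1\tau\sigma(1-\sigma)$ is strictly positive everywhere, so the entire question collapses to controlling the sign of the bracket $B(x) := 2 + \tfrac{x}{\tau}\bigl(1-2\sigma(x)\bigr)$.

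To finish, I would rewrite the logistic gap in closed form as $1-2\sigma(x) = \frac{S_{\exp} - \expnt{x/\tau}}{S_{\exp} + \expnt{x/\tau}}$, so the cross term $\tfrac{x}{\tau}\bigl(1-2\sigma(x)\bigr)$ is nonnegative exactly when $x$ and $S_{\exp} - \expnt{x/\tau}$ share a sign. In the regime relevant to the theorem — recall $x = c_k^\top s_i \ge 0$ under Assumption~\ref{as:positive_c} — this term is nonnegative as soon as $\expnt{x/\tau}\le S_{\exp}$, i.e.\ precisely when $x \le \tau\log S_{\exp}$. On that range $B(x)\ge 2 > 0$, and together with the strictly positive prefactor this gives $f_{k,\sminus}''(x) > 0$, establishing strict convexity.

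The main obstacle is exactly this cross term $\tfrac{x}{\tau}\bigl(1-2\sigma(x)\bigr)$: unlike the linear-proportional case, the softmax numerator is itself $x$-dependent, so $f''$ is not automatically nonnegative and can turn negative once $\sigma(x)$ passes $1/2$ and $x$ grows large enough to overpower the constant $2$. This is precisely why the threshold $x \le \tau\log S_{\exp}$ (the point where $\sigma$ crosses $1/2$) enters and cannot simply be removed. I would therefore be careful to state the region in the one-sided form $0 \le x \le \tau\log S_{\exp}$ actually used downstream, since $x = c_k^\top s_i$ is always nonnegative; a bound on the negative side would require a separate and genuinely more delicate argument, as $B(x)$ can become negative there when $S_{\exp}$ is large.
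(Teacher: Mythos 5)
Your proof is correct and, after the dust settles, produces exactly the same formula as the paper: expanding your prefactor-times-bracket form $\frac{1}{\tau}\sigma(1-\sigma)\bigl[2+\frac{x}{\tau}(1-2\sigma)\bigr]$ gives precisely the paper's expression $\frac{S_{exp}\expnt{x/\tau}}{\tau^2(S_{exp}+\expnt{x/\tau})^2}\bigl(2\tau + x\,\frac{S_{exp}-\expnt{x/\tau}}{S_{exp}+\expnt{x/\tau}}\bigr)$. The only methodological difference is how you get there: the paper differentiates the quotient directly, while you factor $f_{k,\sminus}(x)=x\,\sigma(x)$ for a shifted logistic $\sigma$ and reuse the identities $\sigma'=\frac{1}{\tau}\sigma(1-\sigma)$ and $\sigma''=\frac{1}{\tau^2}\sigma(1-\sigma)(1-2\sigma)$; this is cleaner and less error-prone, but it is the same second-derivative sign analysis in the end.

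The genuinely valuable part of your write-up is the final caveat, and you are right about it. The claim as stated (strict convexity for $|x|\le\tau\log(S_{exp})$) is not what the positivity argument delivers: for $x<0$ the cross term is negative, and, for instance, with $\tau=1$, $S_{exp}=e^{10}$, $x=-5$ one has $1-2\sigma(x)\approx 1$, so the bracket is $\approx 2-5<0$ and $f''_{k,\sminus}(x)<0$ at a point satisfying $|x|\le\tau\log S_{exp}$. The paper's own concluding sentence (``strictly positive so long as $S_{exp}>\expnt{x/\tau}$'') has the same defect: it is only valid jointly with $x\ge 0$, which the paper invokes implicitly by restricting attention to $x=c_k^\top s_i\in[0,1]$. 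Since that one-sided regime is all that the proof of Theorem~\ref{thm:pure_eq} needs, nothing breaks downstream, but your formulation $0\le x\le\tau\log S_{exp}$ is the statement that is actually true and is the correct way to phrase the claim.
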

In practice, we note that we expect this condition on $x$ to hold when the game is large enough. Indeed, it is equivalent to $S_{exp} > \expnt{x/\tau}$. When the number of producers grows, $S_{exp}$ also grows while $\expnt{x/\tau}$ remains bounded by $\expnt{1/\tau}$ (we restrict attention in this entire proof to $x$ representing an inner product of the form $c^\top s$, which we know is in $[0,1]$).

\begin{proof}
Let us overload the $S$ notation and write $S_{\exp} = \sum_{j \neq i} \expnt{c_k^\top s_j/\tau}$. The first-order derivative is given by 
\begin{align*}
f_{k,\sminus}'(x) = \frac{\left(1 + x/\tau\right) \expnt{x/\tau} S_{\exp}  + \expnt{2x/\tau}}{(\expnt{x/\tau} + S_{\exp})^2}
\end{align*}
The second order derivative is then given by
\begin{align*}
 f_{k,\sminus}''(x)  
 &= \frac{S_{exp} \expnt{x/\tau}}{\tau^2 (S_{exp} + \expnt{x/\tau})^3} \cdot  (2 S_{exp} \tau + S_{exp} x + 2 \tau \expnt{x/\tau} - x \expnt{x/\tau}))
 \\& = \frac{S_{exp} \expnt{x/\tau}}{\tau^2 (S_{exp} + \expnt{x/\tau})^2} \cdot \left(2 \tau + x \frac{S_{exp} - \expnt{x/\tau}}{S_{exp} + \expnt{x/\tau})} \right)
\end{align*}

Notice that this is strictly positive so long as $S_{exp} > \expnt{x/\tau}$, concluding the proof. 
\end{proof}
Now, we note that for every producer $j$, in each strategy $s_j$ supported in a mixed strategy profile at equilibrium, we must have $s_j > 0$. Indeed, if any producer sets $s_j = 0$ and produce no content, they get a utility of $0$; however, setting $s_j > 0$ leads to $s_j^\top c_k > 0$ for all users $k$, by Assumption~\ref{as:positive_c}, and yields strictly positive utility. For a mixed strategy profile to constitute an equilibrium, every action on the support of each player's strategy must have the same utility, and this utility then has to be strictly positive, so it must be that $s_j > 0$ on the entire mixed strategy's support. Then, $S(c,\sminus) > 0$ on any strategy profile on the support of a mixed Nash equilibrium. Therefore, we can restrict attention to $\mathcal{S}' = \mathcal{S}/\{0\}$. We know that $f_{k,\sminus}(x)$ is then strictly convex in $x$.
Now, let's examine the function $g_{k,\sminus}(s_i) = f_{k,\sminus}(c_k^\top s_i)$. Clearly, $g_{k,\sminus}$ is convex and for all $k$, 
\[
g_{k,\sminus}(\lambda s_i + (1-\lambda) s_i')  \leq \lambda g_{k,\sminus}(s_i) + (1-\lambda) g_{k,\sminus}(s_i').
\]
Further, pick any $\lambda \in (0,1)$ and $s_i \neq s_i'$. There must exist $k$, by Assumption~\ref{as:nontrivial_c}, such that $c_k^\top s_i \neq c_k^\top s_i'$. Then, we have that, for that $k$, 
\begin{align*}
g_{k,\sminus}(\lambda s_i + (1-\lambda) s_i') 
& = f_{k,\sminus}(\lambda c_k^\top s_i + (1-\lambda) c_k^\top s_i') 
\\& < \lambda f_{k,\sminus}(c_k^\top s_i) + (1-\lambda) f_{k,\sminus}(c_k^\top s_i') 
\\&= \lambda g_{k,\sminus}(s_i) + (1-\lambda) g_{k,\sminus}(s_i'),
\end{align*}
where the inequality follows by strict convexity of $f_{k,\sminus}(x)$. Summing over all $k$'s, we then get that
\[
\sum_{k=1}^K g_{k,\sminus}(\lambda s_i + (1-\lambda) s_i') < \lambda \sum_{k=1}^K g_{k,\sminus}(s_i) + (1-\lambda) \sum_{k=1}^K g_{k,\sminus}(s_i')
\]
as at least one of the inequality in the sum has to be strict. This shows that while each $g_{k,\sminus}$ is not necessarily strictly convex, $\sum_k g_{k,\sminus}$ is. Now, note that both in the linear case as per Equation~\ref{eq: linear_p} and in the softmax case as per Equation~\ref{eq: sigmoid_p}, we have that producer $i$'s utility for playing $s_i$, under a mixed strategy profile $\sminus \sim \cD$ for the remaining agents, is given by 
\[
u_i(s_i,\cD) 
= \E_{\sminus \sim  \cD} \left[\sum_{k=1}^K  g_{k,\sminus}(s_i)\right]
\]

Then, $u_i(s_i,\cD)$ is  also strictly convex in $s_i$ (this follows immediately by writing the definition of strict convexity and by linearity of the expectation). Finally, now, suppose by contradiction that producer $i$'s best response is not a unit vector $e_f$. Then there exists $f$ with $0 < s_i(f) < 1$. In this case, note that
\[
u_i(s_i,\cD) 
= u_i \left(\sum_f s_i(f) e_f, \cD \right) 
< \sum_f s_i(f) u_i(e_f,\cD),
\]
by strict convexity of $s_i \rightarrow u_i(s_i,\cD)$. This is only possible if there exists $f'$ such that $u_i(s_i,\cD)  < u(e_{f'},\cD)$.
This concludes the proof. 

Below, we provide our heuristic for computing Nash equilibria, under Algorithm~\ref{alg:bestrep_dynamics}. Our heuristic relies on the structural result of Theorem~\ref{thm:pure_eq}: if a pure equilibrium exists, then it must be supported on the standard basis $\mathcal{B}$, which simplifies the best-response computation.

\begin{algorithm}[!h]
\caption{Best Response Dynamics for Pure Equilibrium Computation~\label{alg:bestrep_dynamics}}
\textbf{Inputs}: User embeddings $(c_1,\ldots,c_K)$. Utility $u_i(s_i,\sminus)$ for producer $i$. Max iterations $N_{max}$.\\
\textbf{Output}: Pure Nash equilibrium of the engagement game.
\begin{algorithmic}
\STATE Initialize termination variable $fin = 0$ and iteration variable $iter = 0$. Initialize producer $i$'s strategy $s_i$ uniformly at random in $\mathcal{B} = (e_1,\ldots,e_d)$.
\WHILE {$fin = 0$ and $iter < N_{max}$} 
    \STATE Produce a random permutation vector of producers $1$ to $n$. 
    \FOR {each producer $i$ in the above permutation}
        \STATE Set $fin = 1$;
        \STATE Compute $s_i^* = \arg\max_{s_i \in \mathcal{B}} u_i(s_i,s_{-i})$;
        \IF {$u_i(s_i^*,s_{-i}) > u_i(s_i,s_{-i})$}
            \STATE $s_i = s_i^*$;
            \STATE Set $fin = 0$ and exit the for loop.
        \ENDIF
    \ENDFOR
\ENDWHILE
\STATE \textbf{return} $(s_1,\ldots,s_n)$ if $fin = 1$, and $\bot$ if $fin = 0$.
\end{algorithmic}
\end{algorithm}

\section{Equilibria under Simplified User Behavior}\label{sec:simple_eq}
To provide theoretical insights about engagement equilibria,
we first consider a simple, special case of our framework where each user is single-minded, and is only interested in a single type of content. We also focus on the linear-proportional content serving rule of Equation~\eqref{eq: linear_p} for tractability.

\begin{assumption}[Single-minded users]\label{as:single-minded}
For any user $k$, $c_k = e_f$ for some $f \in [d]$, where $(e_1,\ldots,e_d)$ is the standard basis.
\end{assumption}
 
Note that \emph{we only make this assumption in the current section}, and that this is an assumption on \emph{user and not producer behavior}. This simplified assumption and setting allow us to derive our first insights towards characterizing the equilibria of our recommender system, and how producers decide what type of content to produce as a function of the total user weight on each feature. Our experiments in Section~\ref{sec:experiments} show that the insights we derive in this simple single-minded user setting hold for general types of users under the right choice of content-serving rule.

Under this assumption, we provide a simplified characterization of the equilibria of our game. We consider equilibria supported on the standard basis $\mathcal{B}$, following the insights of Theorem~\ref{thm:pure_eq}, and let $(m_1,\ldots,m_d)$ be the number of users interested in content type $e_1,\ldots,e_d$, respectively. We assume $m_f > 0$ without loss of generality; otherwise, feature $f$ brings utility to no user nor producer, and can be removed. On the procedure side, we use the notation $(\numprod_1,\ldots,\numprod_d)$ to denote an aggregate strategy profile where for all $f \in [d]$, a number $\numprod_f$ of producers pick action $e_f$.

\begin{lemma}\label{lem:proportional}
Suppose Assumption~\ref{as:single-minded} holds. For all $f \in [d]$, let $m_f > 0$ be the number of users with $c = e_f$. Under the linear-proportional content serving rule of Equation~\eqref{eq: linear_p}, there exists a pure NE supported on $\mathcal{B}$ and given by $(\numprod_1,\ldots,\numprod_d)$ if and only if
\begin{align}\label{eq:deviation}
\frac{\numprod_f}{m_f} \leq \frac{\numprod_{f'} + 1}{m_{f'}}\qquad\text{for all } f,f' \in [d].
\end{align}
\end{lemma}

The full proof is given in Appendix~\ref{app:proof_simple}. 

\paragraph{Interpretation of the equilibrium conditions}
Let $\delta_f \triangleq \frac{m_f}{\sum_{f'=1}^d m_{f'}}$ be the fraction of users that are interested in feature $f$. Consider strategy profile $\numprod_f = \delta_f n$, and assume $n_f$ is integer\footnote{When $\delta_f n$ is not integral, we can instead round carefully so that $\numprod_f$'s still add to $n$, and the proposed strategy remains an approximate NE, in that the equilibrium conditions are satisfied up to a small additive slack factor}. We show that this pure strategy profile is an equilibrium of our engagement game, by verifying that Condition\eqref{eq:deviation} in Lemma~\ref{lem:proportional} holds for this construction. To see this, note that the equilibrium condition is equivalent to
$
\frac{\numprod_f}{\delta_f} \leq \frac{\numprod_{f'} + 1}{\delta_f'}.
$
On the left-hand side, we have $\frac{\numprod_f}{\delta_f} =  n$. On the right-hand side, we have $\frac{\numprod_{f'} + 1}{\delta_{f'}} = \frac{\delta_{f'} n + 1}{\delta_{f'}} \geq n$, and the inequality always holds.

In this equilibrium, the number of producers that pick feature $f$ is (ignoring rounding) proportional to the number of users interested in feature $f$. This aligns with intuition:
at equilibrium, the probability with which each producer is recommended to a user should be the same across all features $f$; otherwise, a producer that deviates to a feature $f'$ with higher probability will be shown more often and gain more utility.
We note that~\citet{hron2022modeling} made a related observation in a different setting\footnote{They use exposure instead of engagement, different assumptions on user preferences, and the softmax content-serving rule}: they note that producer content aligns with the average user weight on each embedded feature, with the average producer content being $\bar{c} = \frac{1}{K} \sum_{k \in K} c_k$ at an approximate equilibrium.
However, we note that the insight of~\citet{hron2022modeling} remains different, in that all producers produce the same homogeneous content aligned with $\bar{c}$; we show specialization and heterogeneous content production, which we believe are commonplace in practice.

\section{Experiments} \label{sec:experiments}
We now provide experiments that expand our understanding of producer equilibria and utilities at equilibrium for general user incentives.

\subsection{Experimental setup} 

\paragraph{Synthetic Data} We generate three types of user distributions for our synthetic data. In the \emph{uniform distribution}, we generate user embeddings $c$ uniformly at random over the probability simplex. Hence, each feature is equally represented in the data in expectation. In the \emph{skewed distribution}, we first randomly sample positive weights $w_1 \leq \ldots \leq w_d$ (sampled from the probability simplex and then sorted). We then generate a uniformly distributed user embedding, but re-weight each feature $f$ by $w_f$ to obtain $c$. This creates a non-symmetric, skewed distribution where the total user weight for each feature is proportional to $w_f$, leading to differences across features. We also generate a \emph{sparse distribution} for which we generate user embeddings $c \in \mathbb{R}^d$ uniformly at random over the probability simplex and then apply an element-wise masking operation. This operation uses random boolean vectors $\in \{0,1\}^d$, with $90\%$ of its values being zero.
As per our modeling assumptions, we normalize user features $c$ to have $\ell_1$-norm $1$. All synthetic experiments use $K = 10,000$ users. We vary the dimension $d$ and the number of producers $n$. 

\paragraph{Real Data} Following prior work on producer-side competition, we use the NMF implementation in the \texttt{scikit-surprise} package~\citep{scikit-surprise} to obtain user embeddings for the MovieLens-100k dataset~\citep{movielens}; this dataset contains $100k$ ratings, $943$ users and $1682$ movies. We also run experiments with two other, larger-scale ratings datasets: AmazonMusic rating~\citep{amazon-ratings} and RentTheRunway clothing rating~\citep{rentrunway}, These datasets have around $840k$ and $100k$ unique users respectively. The insights from our experiments on these datasets turn out to be similar to the Movielens-100k dataset and are deferred to Appendix \ref{app:experiments}.

\paragraph{Equilibrium computation: a best-response based heuristic} We provide a simple heuristic based on best-response dynamics to compute pure-strategy Nash equilibria in Algorithm \ref{alg:bestrep_dynamics}. In each iteration, it goes through the list of producers in randomly sorted order. For each producer, it computes the best basis vector response of the producer. If the producer is already playing a best response, the algorithm goes to the next producer; otherwise, the algorithm notes the producer is not currently best responding, updates the producer's strategy to his best response, and starts the next iteration. If at the end of the for loop, we realize that all producers are playing a best response, we have found a NE, hence we stop and output the current strategy profile. If after $N_{max}$ iterations, we still have not found a NE, we output $\bot$ to signal that our dynamics did not converge. 

We will see that our heuristic terminates the large majority of the time
in our experiments; in that case, it must output a pure Nash Equilibrium, as the algorithm can only terminate when all producers best respond to each other.\footnote{We do not provide theoretical guarantees on the existence of pure NE or convergence of dynamics: in fact, Algorithm \ref{alg:bestrep_dynamics} does not converge in a few instances (Table \ref{table:expconv_table}), confirming that our game is not what is called a ``potential'' game. When a game is not potential, existence of pure NE and convergence of dynamics are not guaranteed, and equilibrium existence certification and computation is generally a computationally hard problem. Heuristics are often the best that we can hope for.}

\subsection{Experimental results}
\label{sec:conv_rate}

\paragraph{Convergence of Algorithm~\ref{alg:bestrep_dynamics} and rate of convergence}
For our engagement game, we consider $12$ different numbers of producers $(2, 5, 10, 20, \ldots, 100)$ and $6$ different embedding dimensions $(5, 10, 15, 20, 50, 100)$. Each of these $12 \times 6$ configurations is instantiated with $5$ random seeds to account for the randomness of the NMF algorithm.
Across all $360$ instances of the (producers, dimension, seed) configuration, the softmax content-serving rule with temperatures $\tau \in \{100, 10, 1, 0.1\}$ always converges to a unique Nash Equilibrium (NE) on the Movielens-100k dataset. Even with a low softmax temperature of $\tau=0.01$, Algorithm \ref{alg:bestrep_dynamics} still converges to a unique NE in a large number of instances. Additionally, for the top-$k$ softmax serving rule reducing the top-$k$ value—making the serving rule "greedier"—results in fewer converged instances. The linear-proportional serving rule converges in many instances, and the round-robin serving rule as highlighted in Remark~\ref{rem:round-robin-conv} always converges to the dimension with maximum weight.

\begin{table}[!h]
\centering
\begin{subtable}[t]{0.65\textwidth}
    \centering
    \begin{tabular}{lcccccc}
    \hline
    \textbf{Serving Rule} & $\tau = 100$ & $\tau = 10$ & $\tau = 1$ & $\tau = 0.1$ & $\tau = 0.01$ \\\hline
    (Full) Softmax        & 360          & 360         & 360        & 360         & 342          \\
    (Top-20) Softmax      & 253          & 247         & 232        & 182         & 326          \\
    (Top-10) Softmax      & 170          & 171         & 165        & 147         & 299          \\\hline
    \end{tabular}
    \caption{Converged instances for Softmax-based serving rules \label{tab:ne_convergence_temp}}
\end{subtable}%
\hfill
\begin{subtable}[t]{0.3\textwidth}
    \centering
    \begin{tabular}{lc}
    \hline
    \textbf{Serving Rule} & \textbf{Converged} \\\hline
    Greedy                & 269                \\
    Linear                & 357                \\
    Round-Robin           & 360                \\\hline
    \end{tabular}
    \caption{Converged instances for temperature-independent rules.\label{tab:ne_convergence_independent}}
\end{subtable}
\caption{Convergence to Nash Equilibrium (NE) across serving rules on the Movielens-100k dataset. (a) Softmax serving rules with varying temperature $\tau$ and top $k$ values, (b) Temperature-independent rules.\label{tab:nash-convergence-ml100k}}
\end{table}

Further, we examine the rate of convergence of the best response dynamics in Algorithm~\ref{alg:bestrep_dynamics} to a Nash Equilibrium. In Figure \ref{fig:num_iters_convergence} we plot the number of iterations  (averaged over $40$ runs) of Algorithm~\ref{alg:bestrep_dynamics}  with increasing number of producers and across varying embedding dimensions on the Movielens-100k dataset. Figures \ref{fig:convlinear-movielens100k} and \ref{fig:convsoftmax-movielens100k} plot the number of iterations to convergence with the linear and the softmax content-serving rule.  In Appendix \ref{app:numiters-mbody}, we provide figures for the synthetic datasets and observe insights similar to Figure \ref{fig:num_iters_convergence}.
\begin{figure}[!h]
\centering
\captionsetup[subfigure]{justification=centering}
\subfloat[the linear-proportional serving rule Movielens-100k]{
    \includegraphics[width=0.23\textwidth]{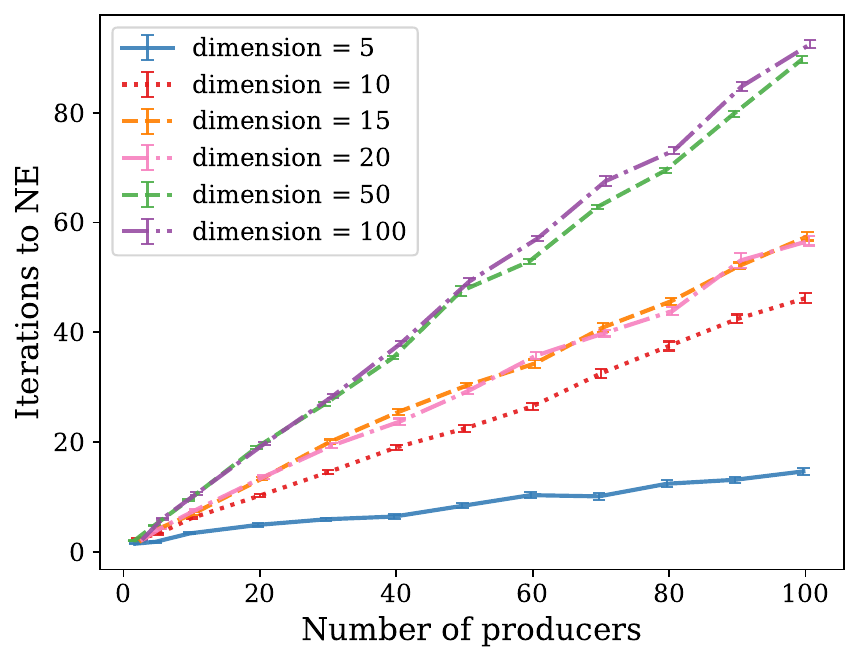}
    \label{fig:convlinear-movielens100k}
}
\hspace{0.05\textwidth}
\subfloat[the softmax serving rule Movielens-100k]{
    \includegraphics[width=0.23\textwidth]{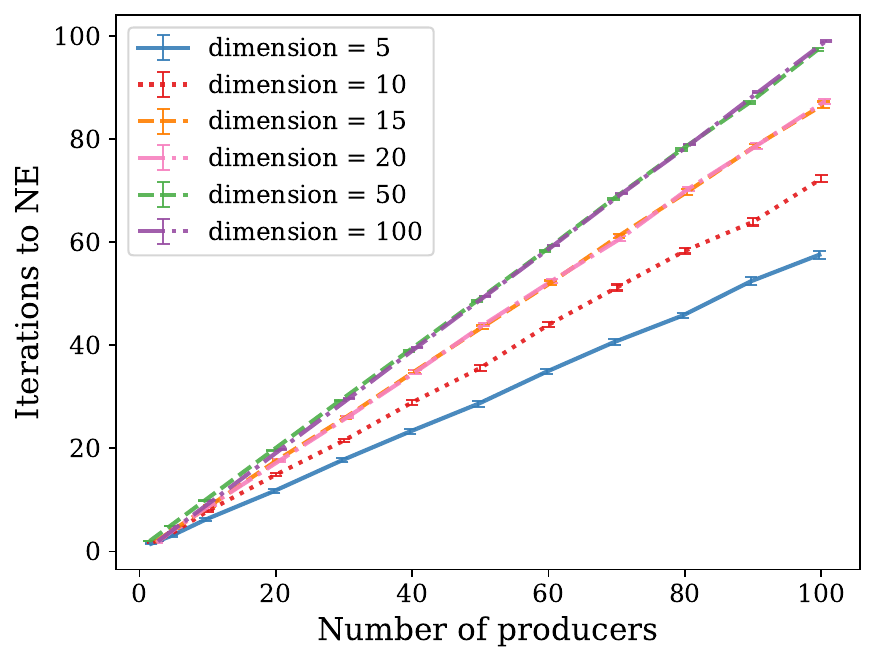}
    \label{fig:convsoftmax-movielens100k}
} 
\caption{Number of iterations of Algorithm~\ref{alg:bestrep_dynamics} until convergence to a Nash Equilibrium on the Movielens-100k dataset. The different curves represent different embedding dimensions in the game $d \in \{5,10,15,20,  {50, 100}\}$; the error bars represent standard error over $40$ runs.\label{fig:num_iters_convergence}}
\end{figure}

Based on the observations above, we conclude that Algorithm~\ref{alg:bestrep_dynamics} empirically seems to be a reliable and computationally efficient heuristic to find pure Nash equilibria for our engagement game, with performance scaling well with the number of producers. In contrast, a naive brute-force approach enumerating all best responses  takes exponential time over the number of producers.

In the following, we study how changing the softmax temperature affects producer specialization and the producer utility. We consider $5$ different values for the softmax temperature $\tau \in \{ 0.01, 0.1, 1, 10, 100\}$ and use the linear-proportional serving rule as a benchmark.

\paragraph{Equilibrium results} 
In Figures~\ref{fig:movielens-100k-udpd} and \ref{fig:skewed-udpd}, we 
highlight the impact of the content-serving rule and softmax temperature on the degree of specialization at the instance-level. We show a single (but representative) instance of the problem in each figure to provide a visual representation of producer specialization at equilibrium; our insights are consistent across our generated instances. 

\begin{figure}[!h]
\centering
\captionsetup[subfigure]{justification=centering}
\subfloat[Softmax $\tau = 100$]{
    \includegraphics[width=0.22\textwidth]{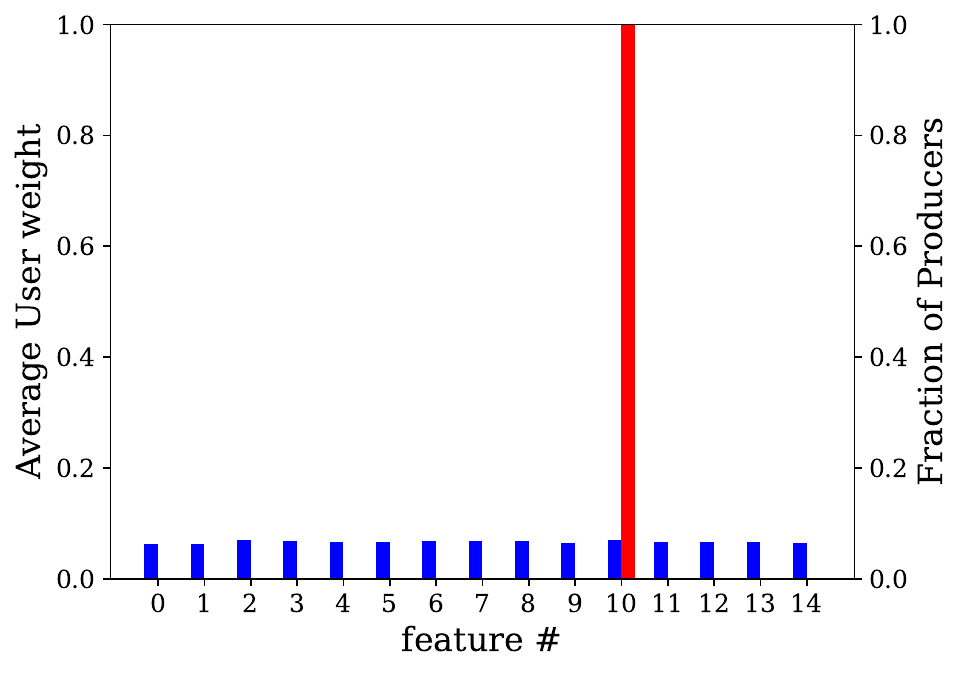}
\label{fig:ml100k-udpd-temp100}
}
\hfill
\subfloat[Softmax $\tau = 10$]{
    \includegraphics[width=0.22\textwidth]{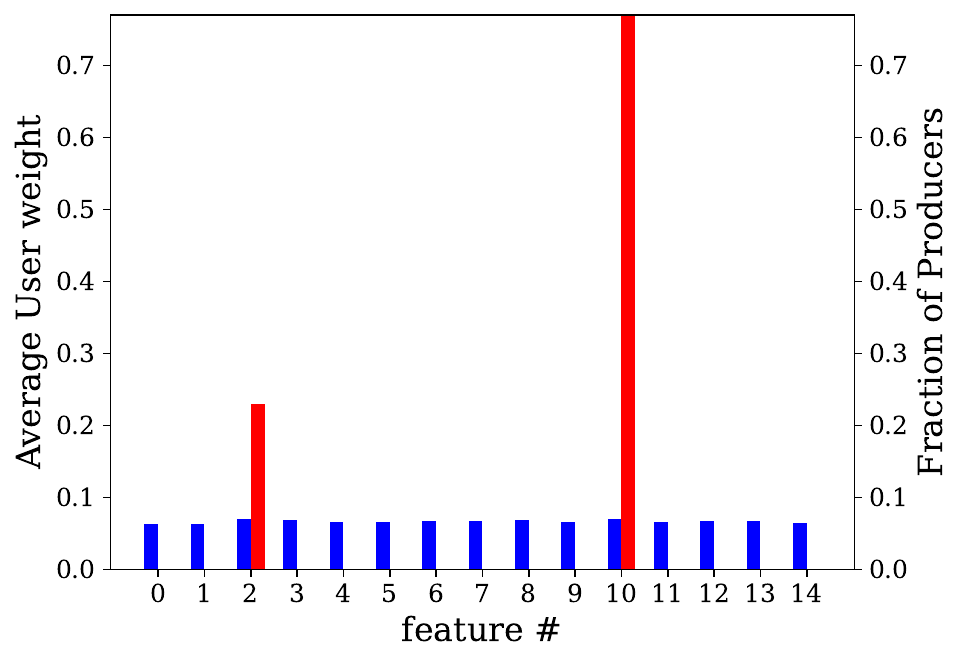}
\label{fig:ml100k-udpd-temp10}
}
\hfill
\subfloat[Softmax $\tau = 0.01$]{
    \includegraphics[width=0.22\textwidth]{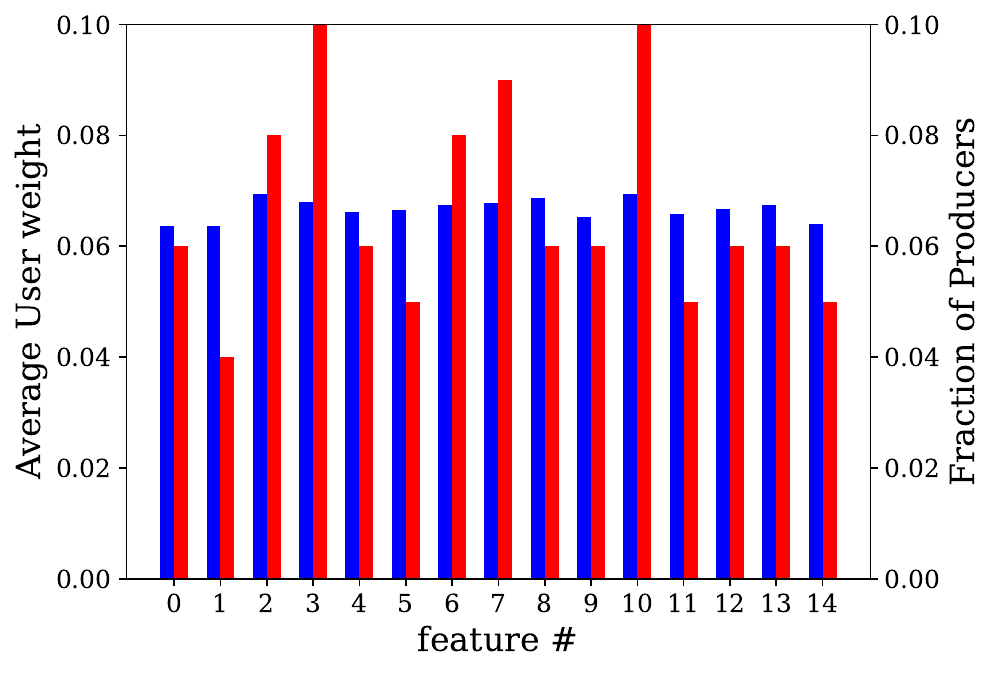}
\label{fig:ml100k-udpd-temp001}
}
\hfill
\subfloat[the linear-proportional serving rule]{
    \includegraphics[width=0.22\textwidth]{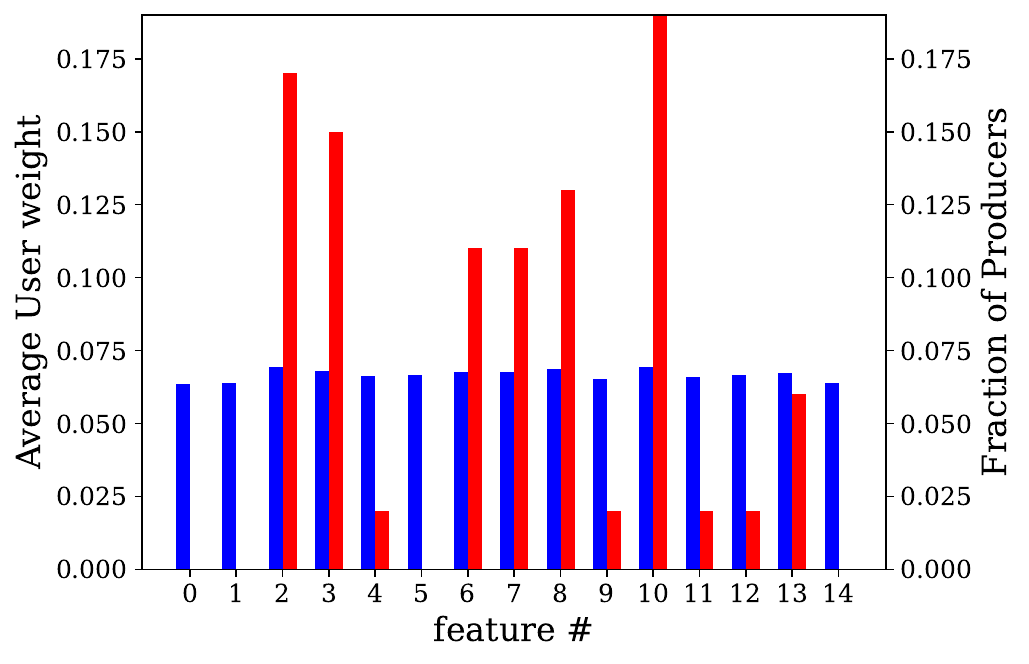}
\label{fig:ml100k-udpd-linear}
}
\caption{Average user weight on each feature (blue, left bar) and fraction of producers going for each feature (red, right bar) $n = 100$ producers, embedding dimension $d = 15$. Lower softmax temperature leads to more producer specialization. User embeddings obtained from NMF on MovieLens-100k.\label{fig:movielens-100k-udpd}}
\end{figure}

\begin{figure}[!h]
\centering
\captionsetup[subfigure]{justification=centering} 
\subfloat[Softmax $\tau = 1$]{
    \includegraphics[width=0.22\textwidth]{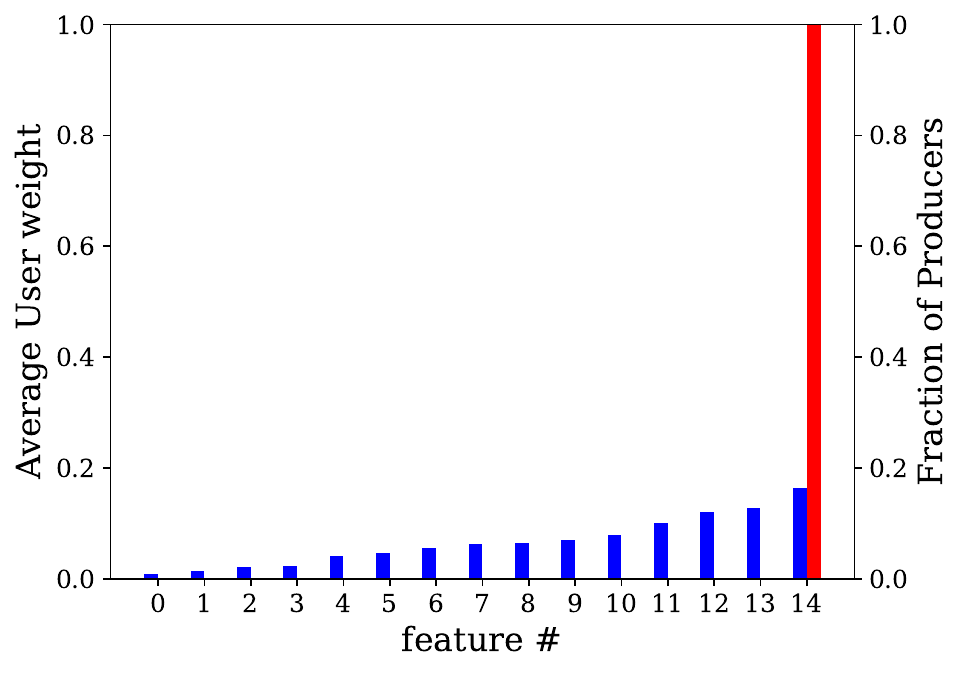}}
\hfill
\subfloat[Softmax $\tau = 0.1$]{
    \includegraphics[width=0.22\textwidth]{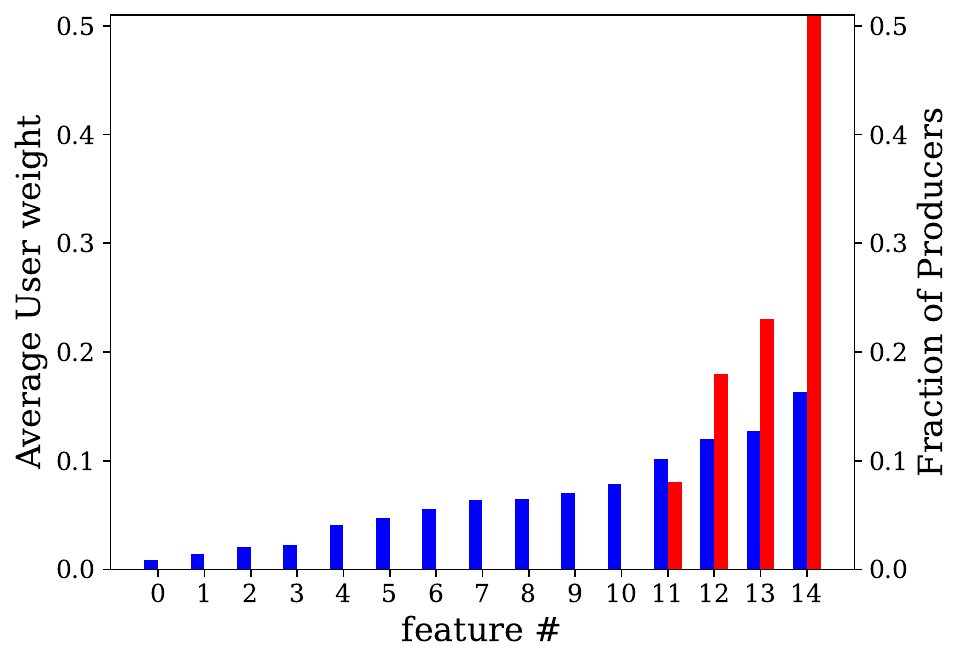}
    \label{fig:skewed-udpd-tau01}
}
\hfill
\subfloat[Softmax $\tau = 0.01$]{
    \includegraphics[width=0.22\textwidth]{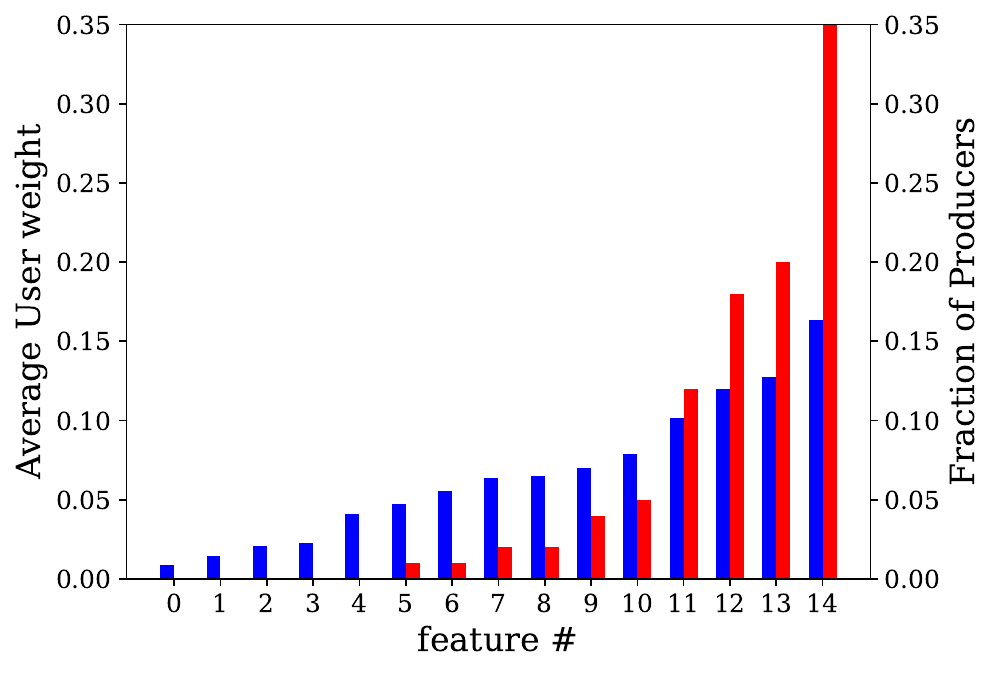}
    \label{fig:skewed-udpd-tau001}
}
\hfill
\subfloat[the linear-proportional serving rule]{
    \includegraphics[width=0.22\textwidth]{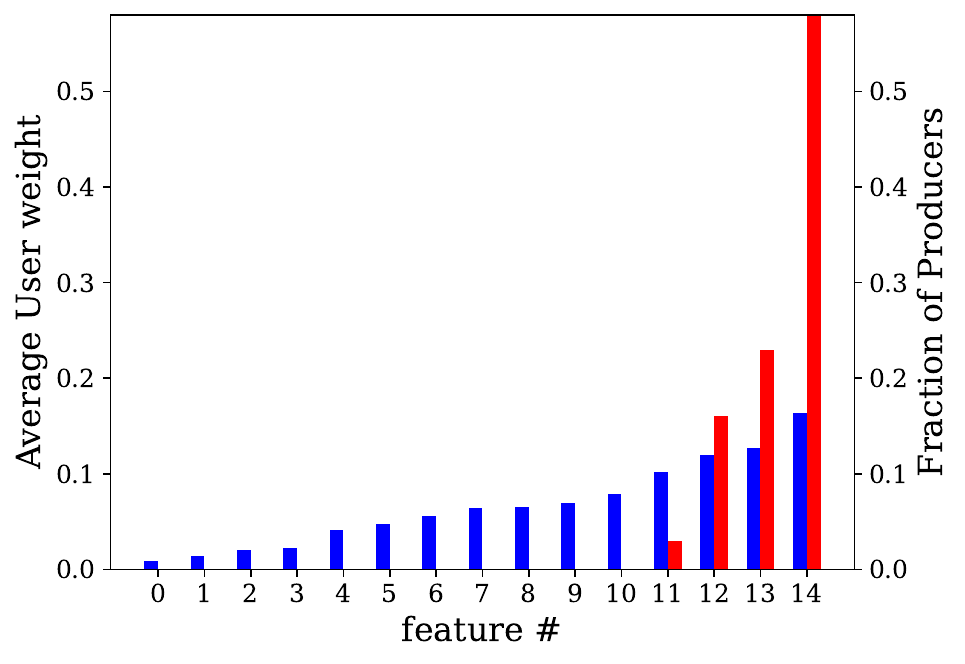}
    \label{fig:skewed-udpd-linear}
}
\caption{Average user weight on each feature (blue, left bar) and fraction of producers going for each feature (red, right bar) $n = 100$ producers, embedding dimension $d = 15$. Lower softmax temperature leads to more producer specialization. Skewed-uniform distribution of users.\label{fig:skewed-udpd}}
\end{figure}

In Figure~\ref{fig:movielens-100k-udpd}, we show the Nash equilibria for the linear-proportional serving rule and the softmax serving rule with varying temperatures on the MovieLens-100k dataset, with user embeddings obtained via NMF. For the softmax serving rule, we observe that the degree of specialization is decreasing in the temperature $\tau$. We see specialization occurring as the temperature drops from $100$ to $0.01$. A high temperature ($\tau = 100$) incentivizes homogeneous content production: this is expected, as the content-serving rule becomes largely independent of producers' decisions (content is shown with probability converging to $1/n$), and producers maximize their utility by homogeneously targeting the highest-utility content ($\argmax_{f \in d} \sum_{k \in K} c_{k}(f)$). When $\tau$ is of the order of $10$ is when we first start seeing specialization; and the lowest temperature we experiment with ($\tau=0.01$) leads to the most specialization.

In Figure~\ref{fig:skewed-udpd}, we replicate our experiment on the skewed-uniform dataset. Here too, the degree of producer specialization is decreasing in the softmax temperature. However, the levels of specialization seem to decrease for the skewed-uniform dataset and specialization seems to first occur at a lower temperature. This can be explained by the fact that low-weight features may not be worth targeting and ignored by the producers. 

Further, we note that with the linear-proportional serving rule we observe a high level of specialization on the Movielens-100k and skewed-uniform datasets as seen in Figures~\ref{fig:ml100k-udpd-linear} and~\ref{fig:skewed-udpd-linear} respectively.

In Appendix~\ref{app:full-plots-synthuniskew}, we present producer distribution figures for the uniform dataset, which exhibit insights similar to those in Figure~\ref{fig:movielens-100k-udpd}. Additionally, we provide extended figures with a few additional softmax temperatures for the Movielens-100k and Skewed datasets (see Figures~\ref{fig:fullsm-movielens-100k-udpd} and~\ref{fig:full-skewed-udpd}). Appendix~\ref{app:sparse-dataset} includes producer distribution figures for the sparse synthetic dataset.

\paragraph{Producer utility at equilibrium} In Figure \ref{fig:utility_producers-ml100k}, we plot the average producer utility with increasing softmax-serving temperature and with varying numbers of producers at a Nash Equilibrium.
We observe that the producer utility is decreasing with temperature, and temperature $\tau = 0.01$ (near-hardmax) has the highest utility. We believe this provides an argument in favour of using low temperatures in the softmax content-serving rule.  Recall that since the average user utility is identical to the average producer utility (up to a multiplicative factor), the benefit of a low temperature in the softmax-serving also extends to user utilities.

In Figure~\ref{fig:utility_serving_rules-ml100k}, we compare the average producer utility across different serving rules while varying the softmax temperature. For softmax serving, a lower temperature leads to higher producer utility, similarly, in the top-$k$ softmax rules, reducing $k$ results in higher utility at the same temperature. This indicates that ``greedier'' serving strategies improve producer utility. However, there is a trade-off between increasing producer utility by making the serving rule greedier and the potential for decreased convergence to a Nash Equilibrium, as observed in the Nash convergence table \ref{tab:nash-convergence-ml100k}. Note that the linear and round-robin serving rules are independent of temperature; we plot the same mean and standard error across all temperature values. Among these, round-robin serving yields the lowest producer utility, while linear serving remains competitive.

\begin{figure}[!h]
\centering
\captionsetup[subfigure]{justification=centering}
\subfloat[Producer utility with varying producers, softmax serving]{
    \includegraphics[width=0.4\textwidth]{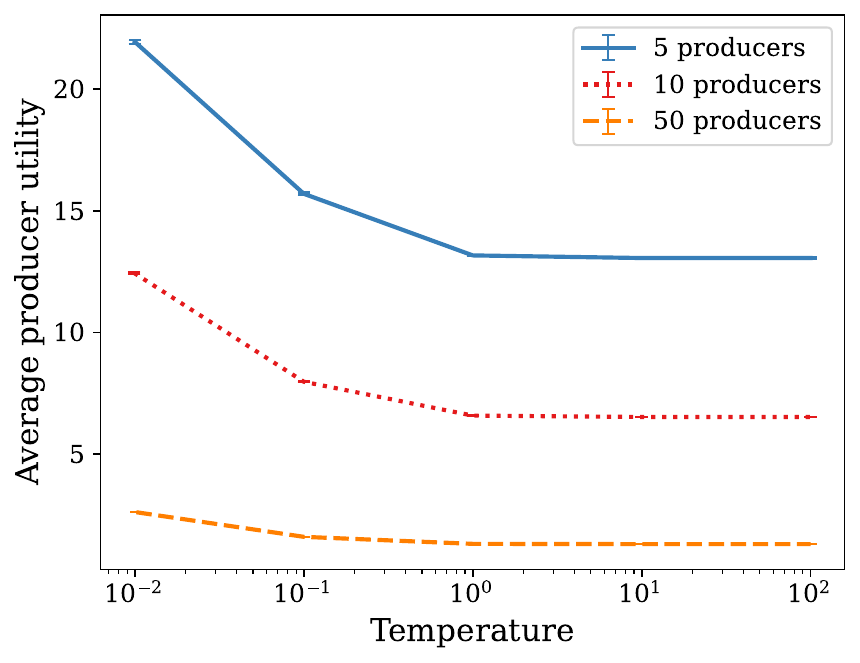}
    \label{fig:utility_producers-ml100k}
}
\hspace{0.05\textwidth}   
\subfloat[Producer utility across serving rules, $n = 50$ producers]{
    \includegraphics[width=0.4\textwidth]{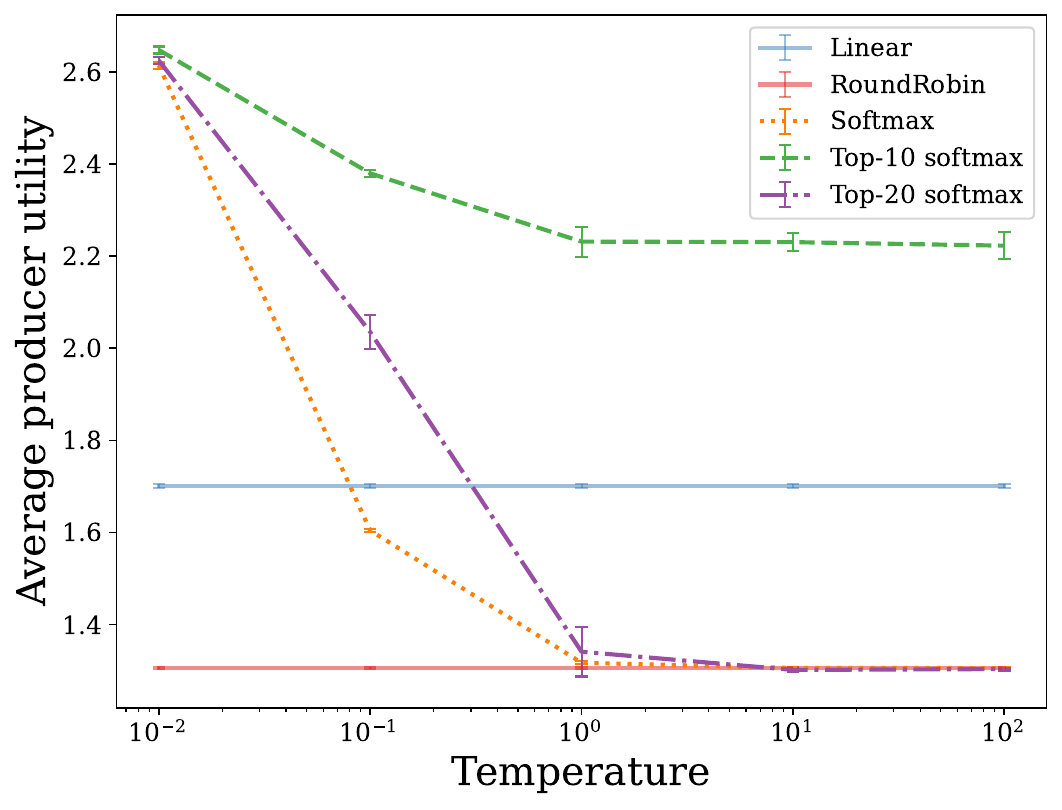}
    \label{fig:utility_serving_rules-ml100k}
}
\caption{Average producer utility on the Movielens-100k dataset. (a) Varying the number of producers $n \in \{5,10,50\}$ with embedding dimension $d=15$. (b) Comparing producer utility across serving rules: Linear (blue), RoundRobin (red), Softmax (orange), Top-10/20 Softmax (green/purple) with $n=50$ producers and $d=15$. Error bars represent standard error over 5 seeds.}
\label{fig:combined_utility_plot-ml100k}
\end{figure}

\section{Conclusion}
In this paper, we studied engagement games, a game-theoretic model of producers competing for user engagement in a recommender system. Our main structural result showed that each producer targets a single feature in embedded space at equilibrium.
We then leveraged this structural result to study content-specialization by producers and showed both theoretically and via extensive experiments that  specialization arises at a pure Nash Equilibrium of our game, and as a result of natural game dynamics. We also observe that lower temperatures in the softmax content-serving rule incentivize specialization and improve producer utility. Our linear-proportional serving rule serves as competitive benchmark still demonstrating high levels of specialization and producer utility.

\paragraph{Limitations}
As in previous work on producer competition in recommender systems, we assume that each producer is rational, and fully controls placement of their strategy $s_i$. Rationality is a common assumption in systems with strong profit motives for the producers. Full control may be less realistic, as producers can modify content features, but they do not know exactly how these changes affect the content embedding. However, this model provides a first-order approximation to real-life behavior, and the same assumptions are typical in related work \citep{hron2022modeling, meena-ss, topk-recsys, hu2023incentivizing}.

Our main structural result holds only for \emph{non-negative embeddings}.
While we make no assumption on how these non-negative embeddings are obtained, an interesting direction for future work is to study engagement games with potentially negative embeddings. There, it is not clear whether our best-response dynamics still converge, and propose to study no-regret dynamics as a direction for equilibrium computation.

Finally, we caution treating the equilibria of our engagement games as definitive; we rather present them as insights to competition in recommender systems, given the significant complexities of real-world recommender systems and environments in which they operate.

\bibliography{ref}

\begin{thebibliography}{42}
\providecommand{\natexlab}[1]{#1}
\providecommand{\url}[1]{\texttt{#1}}
\expandafter\ifx\csname urlstyle\endcsname\relax
  \providecommand{\doi}[1]{doi: #1}\else
  \providecommand{\doi}{doi: \begingroup \urlstyle{rm}\Url}\fi

\bibitem[Basat et~al.(2017)Basat, Tennenholtz, and Kurland]{basat2017game}
Ran~Ben Basat, Moshe Tennenholtz, and Oren Kurland.
\newblock A game theoretic analysis of the adversarial retrieval setting.
\newblock \emph{Journal of Artificial Intelligence Research}, 60:\penalty0 1127--1164, 2017.

\bibitem[Ben-Porat and Tennenholtz(2018)]{ben2018game}
Omer Ben-Porat and Moshe Tennenholtz.
\newblock A game-theoretic approach to recommendation systems with strategic content providers.
\newblock \emph{Advances in Neural Information Processing Systems}, 31, 2018.

\bibitem[Ben-Porat et~al.(2019{\natexlab{a}})Ben-Porat, Goren, Rosenberg, and Tennenholtz]{ben2019recommendation}
Omer Ben-Porat, Gregory Goren, Itay Rosenberg, and Moshe Tennenholtz.
\newblock From recommendation systems to facility location games.
\newblock In \emph{Proceedings of the AAAI Conference on Artificial Intelligence}, volume~33, pages 1772--1779, 2019{\natexlab{a}}.

\bibitem[Ben-Porat et~al.(2019{\natexlab{b}})Ben-Porat, Rosenberg, and Tennenholtz]{ben2019convergence}
Omer Ben-Porat, Itay Rosenberg, and Moshe Tennenholtz.
\newblock Convergence of learning dynamics in information retrieval games.
\newblock In \emph{Proceedings of the AAAI Conference on Artificial Intelligence}, volume~33, pages 1780--1787, 2019{\natexlab{b}}.

\bibitem[Ben-Porat et~al.(2020)Ben-Porat, Rosenberg, and Tennenholtz]{ben2020content}
Omer Ben-Porat, Itay Rosenberg, and Moshe Tennenholtz.
\newblock Content provider dynamics and coordination in recommendation ecosystems.
\newblock \emph{Advances in Neural Information Processing Systems}, 33:\penalty0 18931--18941, 2020.

\bibitem[Chen et~al.(2019)Chen, Beutel, Covington, Jain, Belletti, and Chi]{chen2019top}
Minmin Chen, Alex Beutel, Paul Covington, Sagar Jain, Francois Belletti, and Ed~H Chi.
\newblock Top-k off-policy correction for a reinforce recommender system.
\newblock In \emph{Proceedings of the Twelfth ACM International Conference on Web Search and Data Mining}, pages 456--464, 2019.

\bibitem[Covington et~al.(2016)Covington, Adams, and Sargin]{covington2016deep}
Paul Covington, Jay Adams, and Emre Sargin.
\newblock Deep neural networks for youtube recommendations.
\newblock In \emph{Proceedings of the 10th ACM conference on recommender systems}, pages 191--198, 2016.

\bibitem[Dean and Morgenstern(2022)]{DeanPrefdyna}
Sarah Dean and Jamie Morgenstern.
\newblock Preference dynamics under personalized recommendations.
\newblock In \emph{Proceedings of the 23rd ACM Conference on Economics and Computation}, EC '22, page 795–816, New York, NY, USA, 2022. Association for Computing Machinery.
\newblock ISBN 9781450391504.
\newblock \doi{10.1145/3490486.3538346}.
\newblock URL \url{https://doi.org/10.1145/3490486.3538346}.

\bibitem[Goodrow(2021)]{Goodrow_2021}
Cristos Goodrow.
\newblock On youtube’s recommendation system, Sep 2021.
\newblock URL \url{https://blog.youtube/inside-youtube/on-youtubes-recommendation-system/}.

\bibitem[Harper and Konstan(2015)]{movielens}
F.~Maxwell Harper and Joseph~A. Konstan.
\newblock The movielens datasets: History and context.
\newblock 5\penalty0 (4), dec 2015.
\newblock ISSN 2160-6455.
\newblock \doi{10.1145/2827872}.
\newblock URL \url{https://doi.org/10.1145/2827872}.

\bibitem[He et~al.(2017)He, Liao, Zhang, Nie, Hu, and Chua]{he2017neural}
Xiangnan He, Lizi Liao, Hanwang Zhang, Liqiang Nie, Xia Hu, and Tat-Seng Chua.
\newblock Neural collaborative filtering.
\newblock In \emph{Proceedings of the 26th international conference on world wide web}, pages 173--182, 2017.

\bibitem[Hron et~al.(2022)Hron, Krauth, Jordan, Kilbertus, and Dean]{hron2022modeling}
Jiri Hron, Karl Krauth, Michael Jordan, Niki Kilbertus, and Sarah Dean.
\newblock Modeling content creator incentives on algorithm-curated platforms.
\newblock In \emph{The Eleventh International Conference on Learning Representations}, 2022.

\bibitem[Hu et~al.(2023)Hu, Jagadeesan, Jordan, and Steinhard]{hu2023incentivizing}
Xinyan Hu, Meena Jagadeesan, Michael~I Jordan, and Jacob Steinhard.
\newblock Incentivizing high-quality content in online recommender systems.
\newblock \emph{arXiv preprint arXiv:2306.07479}, 2023.

\bibitem[Hug(2020)]{scikit-surprise}
Nicolas Hug.
\newblock Surprise: A python library for recommender systems.
\newblock \emph{Journal of Open Source Software}, 5\penalty0 (52):\penalty0 2174, 2020.
\newblock \doi{10.21105/joss.02174}.
\newblock URL \url{https://doi.org/10.21105/joss.02174}.

\bibitem[Huttenlocher et~al.(2024)Huttenlocher, Li, Lyu, Ozdaglar, and Siderius]{huttenlocher2024matching}
Daniel Huttenlocher, Hannah Li, Liang Lyu, Asuman Ozdaglar, and James Siderius.
\newblock Matching of users and creators in two-sided markets with departures, 2024.
\newblock URL \url{https://arxiv.org/abs/2401.00313}.

\bibitem[Immorlica et~al.(2024)Immorlica, Jagadeesan, and Lucier]{immorlica2024clickbait}
Nicole Immorlica, Meena Jagadeesan, and Brendan Lucier.
\newblock Clickbait vs. quality: How engagement-based optimization shapes the content landscape in online platforms.
\newblock \emph{arXiv preprint arXiv:2401.09804}, 2024.

\bibitem[Instagram(2023)]{Instagram_2023}
Instagram.
\newblock Instagram ranking explained, May 2023.
\newblock URL \url{https://about.instagram.com/blog/announcements/instagram-ranking-explained/}.

\bibitem[Jagadeesan et~al.(2022)Jagadeesan, Garg, and Steinhardt]{meena-ss}
Meena Jagadeesan, Nikhil Garg, and Jacob Steinhardt.
\newblock Supply-side equilibria in recommender systems.
\newblock \emph{arXiv preprint arXiv:2206.13489}, 2022.

\bibitem[Kalimeris et~al.(2021)Kalimeris, Bhagat, Kalyanaraman, and Weinsberg]{PrefAmplificationMeta}
Dimitris Kalimeris, Smriti Bhagat, Shankar Kalyanaraman, and Udi Weinsberg.
\newblock Preference amplification in recommender systems.
\newblock In \emph{Proceedings of the 27th ACM SIGKDD Conference on Knowledge Discovery \& Data Mining}, KDD '21, page 805–815, New York, NY, USA, 2021. Association for Computing Machinery.
\newblock ISBN 9781450383325.
\newblock \doi{10.1145/3447548.3467298}.
\newblock URL \url{https://doi.org/10.1145/3447548.3467298}.

\bibitem[Koren et~al.(2009)Koren, Bell, and Volinsky]{koren2009matrix}
Yehuda Koren, Robert Bell, and Chris Volinsky.
\newblock Matrix factorization techniques for recommender systems.
\newblock \emph{Computer}, 42\penalty0 (8):\penalty0 30--37, 2009.

\bibitem[Koutsoupias and Papadimitriou(1999)]{koutsoupias1999worst}
Elias Koutsoupias and Christos Papadimitriou.
\newblock Worst-case equilibria.
\newblock In \emph{Annual symposium on theoretical aspects of computer science}, pages 404--413. Springer, 1999.

\bibitem[Lee and Seung(2000)]{lee2000algorithms}
Daniel Lee and H~Sebastian Seung.
\newblock Algorithms for non-negative matrix factorization.
\newblock \emph{Advances in neural information processing systems}, 13, 2000.

\bibitem[Li et~al.(2010)Li, Chu, Langford, and Schapire]{li2010contextual}
Lihong Li, Wei Chu, John Langford, and Robert~E Schapire.
\newblock A contextual-bandit approach to personalized news article recommendation.
\newblock In \emph{Proceedings of the 19th international conference on World wide web}, pages 661--670, 2010.

\bibitem[Lin et~al.(2024)Lin, Jin, Estornell, Zhang, Chen, and Liu]{usercreatordual}
Tao Lin, Kun Jin, Andrew Estornell, Xiaoying Zhang, Yiling Chen, and Yang Liu.
\newblock User-creator feature polarization in recommender systems with dual influence.
\newblock In \emph{The Thirty-eighth Annual Conference on Neural Information Processing Systems}, 2024.
\newblock URL \url{https://openreview.net/forum?id=yWq89o19wf}.

\bibitem[L{\"u} et~al.(2012)L{\"u}, Medo, Yeung, Zhang, Zhang, and Zhou]{lu2012recommender}
Linyuan L{\"u}, Mat{\'u}{\v{s}} Medo, Chi~Ho Yeung, Yi-Cheng Zhang, Zi-Ke Zhang, and Tao Zhou.
\newblock Recommender systems.
\newblock \emph{Physics reports}, 519\penalty0 (1):\penalty0 1--49, 2012.

\bibitem[Luce(1977)]{Luce1977215}
R.Duncan Luce.
\newblock The choice axiom after twenty years.
\newblock \emph{Journal of Mathematical Psychology}, 15\penalty0 (3):\penalty0 215--233, 1977.
\newblock ISSN 0022-2496.
\newblock \doi{https://doi.org/10.1016/0022-2496(77)90032-3}.
\newblock URL \url{https://www.sciencedirect.com/science/article/pii/0022249677900323}.

\bibitem[Luo et~al.(2014)Luo, Zhou, Xia, and Zhu]{luo2014efficient}
Xin Luo, Mengchu Zhou, Yunni Xia, and Qingsheng Zhu.
\newblock An efficient non-negative matrix-factorization-based approach to collaborative filtering for recommender systems.
\newblock \emph{IEEE Transactions on Industrial Informatics}, 10\penalty0 (2):\penalty0 1273--1284, 2014.

\bibitem[Mack(2019)]{Mack_2019}
Zachary Mack.
\newblock How streaming affects the lengths of songs, May 2019.
\newblock URL \url{https://www.theverge.com/2019/5/28/18642978/music-streaming-spotify-song-length-distribution-production-switched-on-pop-vergecast-interview}.

\bibitem[Merrill and Oremus(2021)]{merrill2021five}
Jeremy~B Merrill and Will Oremus.
\newblock Five points for anger, one for a `like': How facebook's formula fostered rage and misinformation.
\newblock \emph{The Washington Post}, 26, 2021.

\bibitem[Milli et~al.(2023)Milli, Pierson, and Garg]{milli2023choosing}
Smitha Milli, Emma Pierson, and Nikhil Garg.
\newblock Choosing the right weights: Balancing value, strategy, and noise in recommender systems, 2023.

\bibitem[Misra et~al.(2018)Misra, Wan, and McAuley]{rentrunway}
Rishabh Misra, Mengting Wan, and Julian McAuley.
\newblock Decomposing fit semantics for product size recommendation in metric spaces.
\newblock In \emph{Proceedings of the 12th ACM Conference on Recommender Systems}, RecSys '18, page 422–426, New York, NY, USA, 2018. Association for Computing Machinery.
\newblock ISBN 9781450359016.
\newblock \doi{10.1145/3240323.3240398}.
\newblock URL \url{https://doi.org/10.1145/3240323.3240398}.

\bibitem[Nash(1951)]{nash1951}
John Nash.
\newblock Non-cooperative games.
\newblock \emph{Annals of mathematics}, pages 286--295, 1951.

\bibitem[Ni et~al.(2019)Ni, Li, and McAuley]{amazon-ratings}
Jianmo Ni, Jiacheng Li, and Julian McAuley.
\newblock Justifying recommendations using distantly-labeled reviews and fine-grained aspects.
\newblock In Kentaro Inui, Jing Jiang, Vincent Ng, and Xiaojun Wan, editors, \emph{Proceedings of the 2019 Conference on Empirical Methods in Natural Language Processing and the 9th International Joint Conference on Natural Language Processing (EMNLP-IJCNLP)}, pages 188--197, Hong Kong, China, November 2019. Association for Computational Linguistics.
\newblock \doi{10.18653/v1/D19-1018}.
\newblock URL \url{https://aclanthology.org/D19-1018}.

\bibitem[Perelli(2023)]{businessinsider}
Amanda Perelli.
\newblock {T}he creator economy is a \$250 billion industry and it's here to stay --- businessinsider.com.
\newblock \url{https://www.businessinsider.com/creator-economy-250-billion-market-and-here-to-stay-2023-11?utm_source=copy-link&utm_medium=referral&utm_content=topbar}, 2023.

\bibitem[Qian and Jain(2022)]{qian2022digital}
Kun Qian and Sanjay Jain.
\newblock Digital content creation: An analysis of the impact of recommendation systems.
\newblock \emph{Available at SSRN 4311562}, 2022.

\bibitem[Raifer et~al.(2017)Raifer, Raiber, Tennenholtz, and Kurland]{raifer2017information}
Nimrod Raifer, Fiana Raiber, Moshe Tennenholtz, and Oren Kurland.
\newblock Information retrieval meets game theory: The ranking competition between documents' authors.
\newblock In \emph{Proceedings of the 40th International ACM SIGIR Conference on Research and Development in Information Retrieval}, pages 465--474, 2017.

\bibitem[{Stigler Committee}(2019)]{stiger19}
{Stigler Committee}.
\newblock Final report: Stigler committee on digital platforms.
\newblock available at \url{ https://www.chicagobooth.edu/-/media/research/stigler/pdfs/digital-platforms---committee-report---stigler-center.pdf,}, September 2019.

\bibitem[Wang and Zhang(2012)]{wang2012nonnegative}
Yu-Xiong Wang and Yu-Jin Zhang.
\newblock Nonnegative matrix factorization: A comprehensive review.
\newblock \emph{IEEE Transactions on knowledge and data engineering}, 25\penalty0 (6):\penalty0 1336--1353, 2012.

\bibitem[Yao et~al.(2023{\natexlab{a}})Yao, Li, Nekipelov, Wang, and Xu]{topk-recsys}
Fan Yao, Chuanhao Li, Denis Nekipelov, Hongning Wang, and Haifeng Xu.
\newblock How bad is top-$ k $ recommendation under competing content creators?
\newblock \emph{arXiv preprint arXiv:2302.01971}, 2023{\natexlab{a}}.

\bibitem[Yao et~al.(2023{\natexlab{b}})Yao, Li, Sankararaman, Liao, Zhu, Wang, Wang, and Xu]{yao2023rethinking}
Fan Yao, Chuanhao Li, Karthik~Abinav Sankararaman, Yiming Liao, Yan Zhu, Qifan Wang, Hongning Wang, and Haifeng Xu.
\newblock Rethinking incentives in recommender systems: Are monotone rewards always beneficial?
\newblock \emph{arXiv preprint arXiv:2306.07893}, 2023{\natexlab{b}}.

\bibitem[Yao et~al.(2024)Yao, Liao, Wu, Li, Zhu, Yang, Liu, Wang, Xu, and Wang]{yao2024user}
Fan Yao, Yiming Liao, Mingzhe Wu, Chuanhao Li, Yan Zhu, James Yang, Jingzhou Liu, Qifan Wang, Haifeng Xu, and Hongning Wang.
\newblock User welfare optimization in recommender systems with competing content creators.
\newblock In \emph{Proceedings of the 30th ACM SIGKDD Conference on Knowledge Discovery and Data Mining}, pages 3874--3885, 2024.

\bibitem[Yi et~al.(2019)Yi, Yang, Hong, Cheng, Heldt, Kumthekar, Zhao, Wei, and Chi]{two-tower2}
Xinyang Yi, Ji~Yang, Lichan Hong, Derek~Zhiyuan Cheng, Lukasz Heldt, Aditee~Ajit Kumthekar, Zhe Zhao, Li~Wei, and Ed~Chi, editors.
\newblock \emph{Sampling-Bias-Corrected Neural Modeling for Large Corpus Item Recommendations}, 2019.

\end{thebibliography}
\bibliographystyle{plainnat}

\newpage
\appendix

\section{Extended comparison with related work}
\label{app:relatedwork-table}

\begin{table}[h]
\centering
\begin{tabular}{l|l|l|l|l}
\hline
\textbf{Paper} & \textbf{\begin{tabular}[c]{@{}l@{}}Producer\\ Reward\end{tabular}} & \textbf{\begin{tabular}[c]{@{}l@{}}Producer\\ Dynamics?\end{tabular}} & \textbf{\begin{tabular}[c]{@{}l@{}}User\\ Dynamics?\end{tabular}} & \textbf{\begin{tabular}[c]{@{}l@{}}Producer \\ Equilibrium\end{tabular}} \\ \hline
Ours                          & Engagement & Yes  & No    & Pure NE       \\ \hline
\cite{ben2019convergence,ben2020content}            & Exposure         & Yes  & No        & Pure NE   \\ \hline
\cite{hron2022modeling}       & Exposure   & Yes  & No    & Local NE       \\ \hline
\cite{meena-ss}               & Exposure   & Yes  & No    & Mixed NE              \\ \hline
\cite{hu2023incentivizing}  & \begin{tabular}[c]{@{}l@{}}Mechanism design \end{tabular}         & Yes  & No            & Mixed NE      \\ \hline
\cite{topk-recsys}            & Engagement & Yes  & No    & CCE      \\ \hline
\cite{yao2023rethinking,yao2024user}  & \begin{tabular}[c]{@{}l@{}}Mechanism design \end{tabular}         & Yes  & No            & Local NE      \\ \hline
\cite{DeanPrefdyna}           & N/A        & N/A  & Yes             & N/A           \\ \hline
\cite{usercreatordual}        & Engagement & Yes  & Yes             & N/A           \\ \hline
\end{tabular}
\caption{Comparing our paper to related work}
\label{table:related_work_comparison}
\end{table}

\cite{ben2019recommendation,ben2020content} study exposure games with finite strategy spaces (producers selecting content from a finite catalog) and search for Pure NE using graph algorithms.
\cite{hron2022modeling} focus on the concept of $\varepsilon$-local Nash Equilibria: i.e., given a joint strategy profile $(s_i, s_{-i})$, each producer's $i$ strategy $s_i$ is optimal in the open ball at $(s_i, s_{-i})$ with radius $\varepsilon$---this is a weaker notion of equilibrium than pure NE, which holds globally. The work of \cite{meena-ss} characterizes the support of the mixed Nash Equilibrium of their game, but do not provide a closed-form expression for said mixed NE---we instead provide a closed-form characterization in the restricted setting of Section~\ref{sec:simple_eq}. \cite{hu2023incentivizing} builds on the exposure model in \cite{meena-ss} but studies the mechanism design problem with the platform modeled by a linear contextual bandit. \cite{topk-recsys} focus on engagement games with the top-k serving rule, studying no-regret dynamics and the Coarse Correlated Equilibria (CCE) to which these dynamics converge. \cite{yao2023rethinking, yao2024user} address the mechanism design problem, aiming to incentivize welfare-maximizing local Nash equilibria.
The works of \cite{DeanPrefdyna,usercreatordual} model user dynamics, studying user polarization and do not model producer competition and equilibrium.



\section{Omitted Proofs} \label{app:proofs}

\subsection{Proof of Lemma~\ref{lem:proportional}}\label{app:proof_simple}
First, we note that producer $i$'s utility is convex in $s_i$ (following the same proof as in Appendix~\ref{app:proof_theorem}), however is generally not \emph{strictly} convex. This implies (as for the proof in Appendix~\ref{app:proof_theorem}) that each producer $i$ has a best response in the standard basis $\mathcal{B}$. There may however be best responses supported outside of $\mathcal{B}$, and not exist an equilibrium supported on $\mathcal{B}$. Yet, note that if there exists a strategy profile in which each producer's strategy is supported on $\mathcal{B}$ and all producers best respond to each other, it is in fact an equilibrium supported on $\mathcal{B}$: no producer can improve their utility by deviating within $\mathcal{B}$, and no producer who cannot improve by deviating on $\mathcal{B}$ can improve by deviating outside of $\mathcal{B}$ since there is a best response on $\mathcal{B}$. In turn, there exists an equilibrium supported on $\mathcal{B}$ \emph{if and only if} there exists an equilibrium when producers' strategies are restricted to $\mathcal{B}$.

Now, let us restrict ourselves to producers playing on $\mathcal{B}$. Pick any $f$ such that $\numprod_f > 0$. We start by computing the utility of producer $i$ who plays $e_f$:
\begin{align*}
u_i(e_f, s_{-i})&=\sum_{k=1}^K p_i(c_k, e_f, s_{-i}) \cdot c_k^\top e_f 
= \sum_{k=1}^K \frac{\mathbbm{1}[c_k = e_f]}{n_f}
\end{align*}
where the second equality follows from the fact that $p_i(e_f, e_f, s_{-i}) \cdot e_f^\top e_f = \frac{1}{n_f}$ and $p_i(e_{f'}, e_f, s_{-i}) = 0$ for $f' \neq f$.
Therefore, the engagement utility from user $c_k$ is $0$ if $c_k \neq e_f$, and $1/\numprod_f$ if $c_k = e_f$. Since there is a number $m_f$ of users with $c = e_f$, it follows that the total engagement utility for producer $i$, if it picks $s_i = e_f$, is given by $m_f/\numprod_f$.

Now, let us study the deviations for any given producer $i$ with $s_i = e_f$. Suppose the producer deviates to $e_{f'}$ where $f' \neq f$. Then, the new number of producers picking $f'$ is $\numprod_{f'}+1$, and producer $i$ now obtains utility $m_{f'}/(\numprod_{f'}+1)$. Therefore, producer $i$ does not deviate if and only if for all $f'$, we have 
\[
\frac{m_f}{\numprod_f} \geq \frac{m_{f'}}{\numprod_{f'} + 1}.
\]
Then $(\numprod_1,\ldots,\numprod_d)$ is an equilibrium iff no producer wants to deviate, i.e iff for all $f,f'$ such that $\numprod_f > 0$,
\[
\frac{\numprod_f}{m_f} \leq \frac{\numprod_{f'}}{m_{f'}} + \frac{1}{m_{f'}}.
\]
If $\numprod_f = 0$, there is no deviation for any producer that picks $e_f$ as such producers do not exist, and the inequality trivially holds. This concludes the proof.

\section{Supplementary results on the Movielens-100k dataset}

\subsection{Producer distribution}  
\label{app:full-plots-ml100k}
Figure \ref{fig:fullsm-movielens-100k-udpd} shows the producer distribution for the Movielens-100k dataset with softmax serving, illustrating increased specialization as the temperature decreases. Similarly, Figure \ref{fig:top20-movielens-100k-udpd} shows similar insights for top-20 softmax serving, however, Unlike full softmax, top-20 serving shows specialization even at high temperatures ($\tau = 100$), as it retains the top 20 producers and serves those producers almost randomly (high $\tau$). Figure \ref{fig:ml100k-nontemp-udpd} presents producer distributions for the three non-temperature-dependent rules: greedy, linear, and round-robin. Greedy and linear serving result in producer specialization, while round-robin, as highlighted in remark \ref{rem:round-robin-conv}), leads all producers to target the highest-weight feature.

\begin{figure}[!h]
\centering
\captionsetup[subfigure]{justification=centering}
\subfloat[Softmax $\tau = 100$]{
    \includegraphics[width=0.25\textwidth]{NewFigures/udpd/movielens/movielens100k_temp100_dim_15.pdf}
\label{fig:movielens-100k-udpd-tau100}
}
\hspace{0.05\textwidth}   
\subfloat[Softmax $\tau = 10$]{
    \includegraphics[width=0.25\textwidth]{NewFigures/udpd/movielens/movielens100k_temp10_dim_15.pdf}}
\hspace{0.05\textwidth}
\subfloat[Softmax $\tau = 1$]{
    \includegraphics[width=0.25\textwidth]{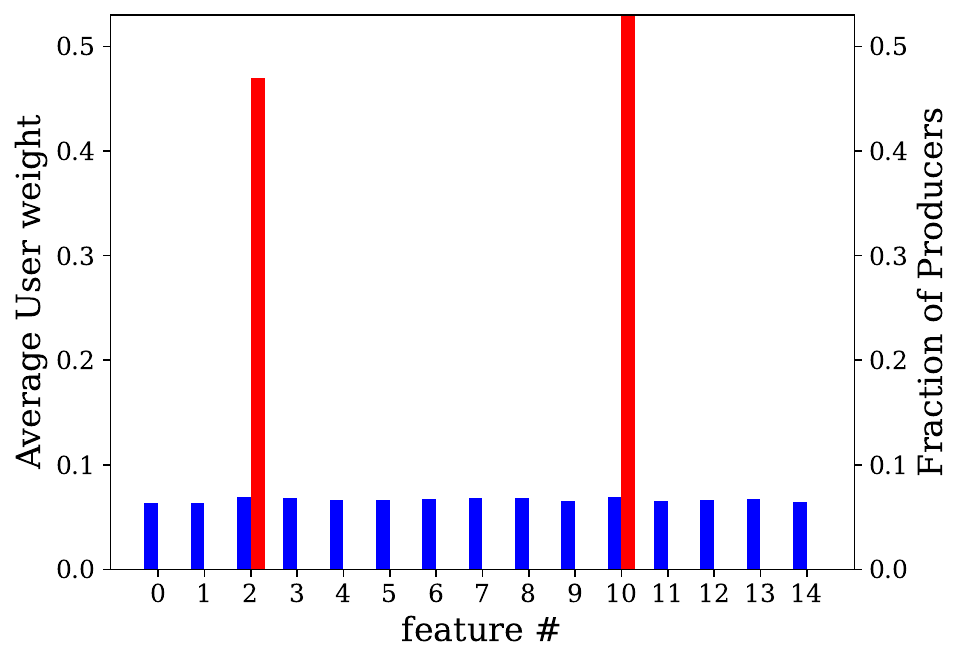}
}
\hspace{0.05\textwidth}
\\
\subfloat[Softmax $\tau = 0.1$]{
    \includegraphics[width=0.25\textwidth]{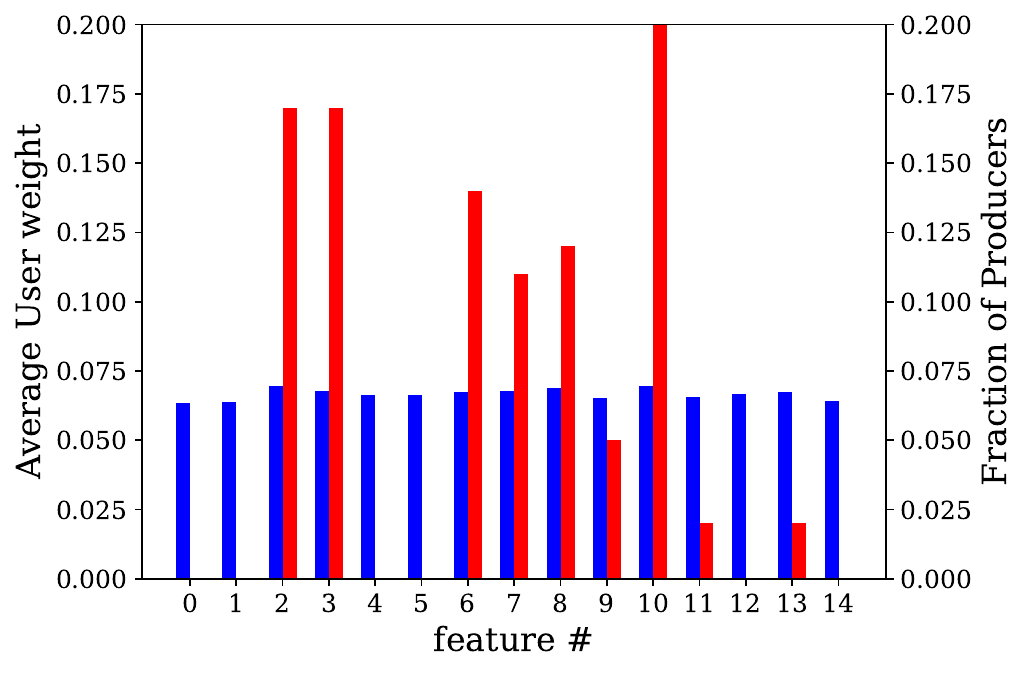}
}
\hspace{0.05\textwidth}
\subfloat[Softmax $\tau = 0.01$]{
    \includegraphics[width=0.25\textwidth]{NewFigures/udpd/movielens/movielens100k_temp001_dim_15.pdf}
\label{fig:ml100k-udpd-temp001-full}
}
\caption{(Full) Softmax serving: Average user weight on each feature (blue, left bar) and fraction of producers going for each feature (red, right bar) $n = 100$ producers, embedding dimension $d = 15$. User embeddings obtained from NMF on MovieLens-100k.}
\label{fig:fullsm-movielens-100k-udpd}
\end{figure}

\begin{figure}[!h]
\centering
\captionsetup[subfigure]{justification=centering}
\subfloat[Softmax $\tau = 100$]{
    \includegraphics[width=0.25\textwidth]{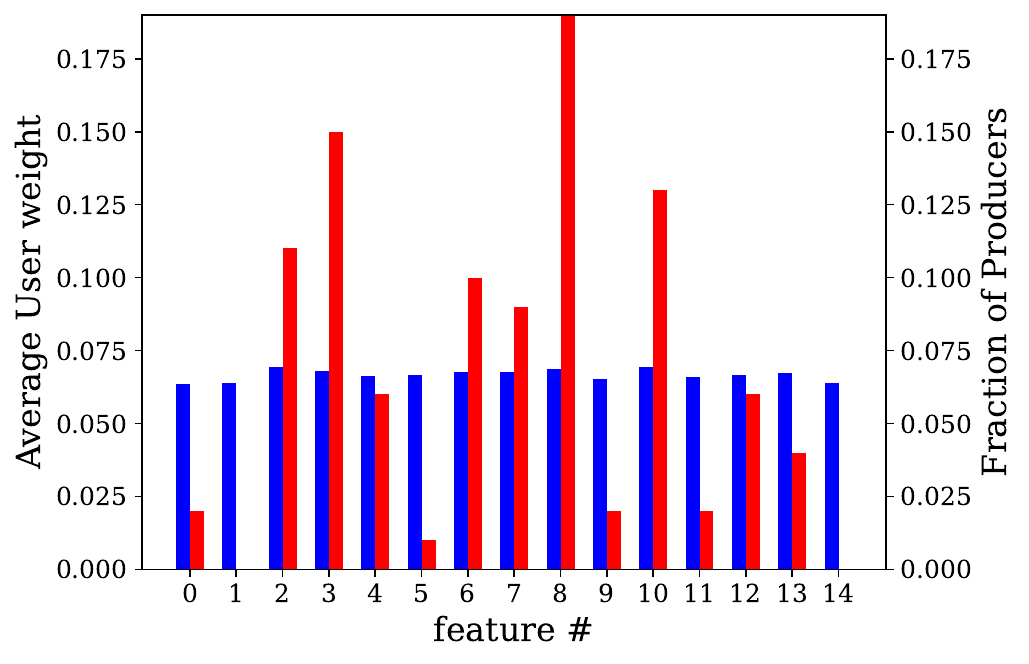}
}
\hspace{0.05\textwidth}   
\subfloat[Softmax $\tau = 10$]{
    \includegraphics[width=0.25\textwidth]{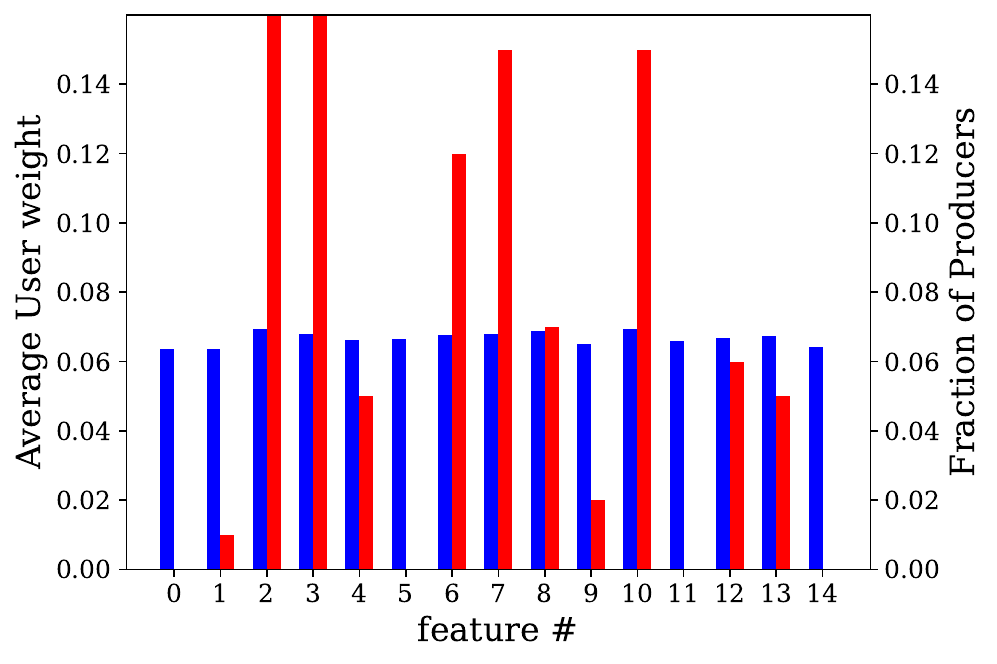}}
\hspace{0.05\textwidth}
\subfloat[Softmax $\tau = 1$]{
    \includegraphics[width=0.25\textwidth]{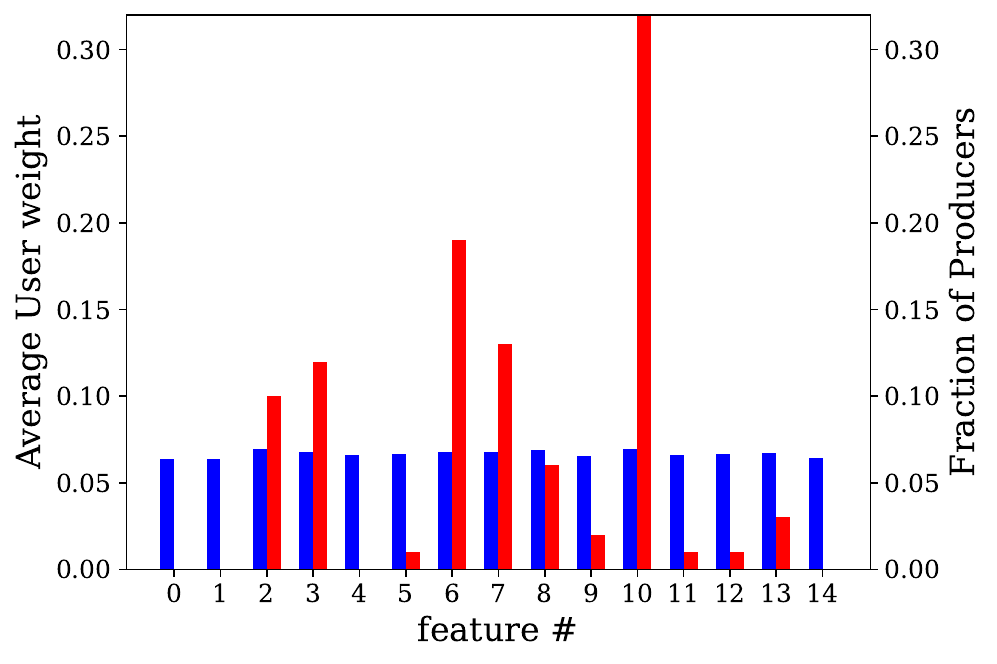}
}
\hspace{0.05\textwidth}
\\
\subfloat[Softmax $\tau = 0.1$]{
    \includegraphics[width=0.25\textwidth]{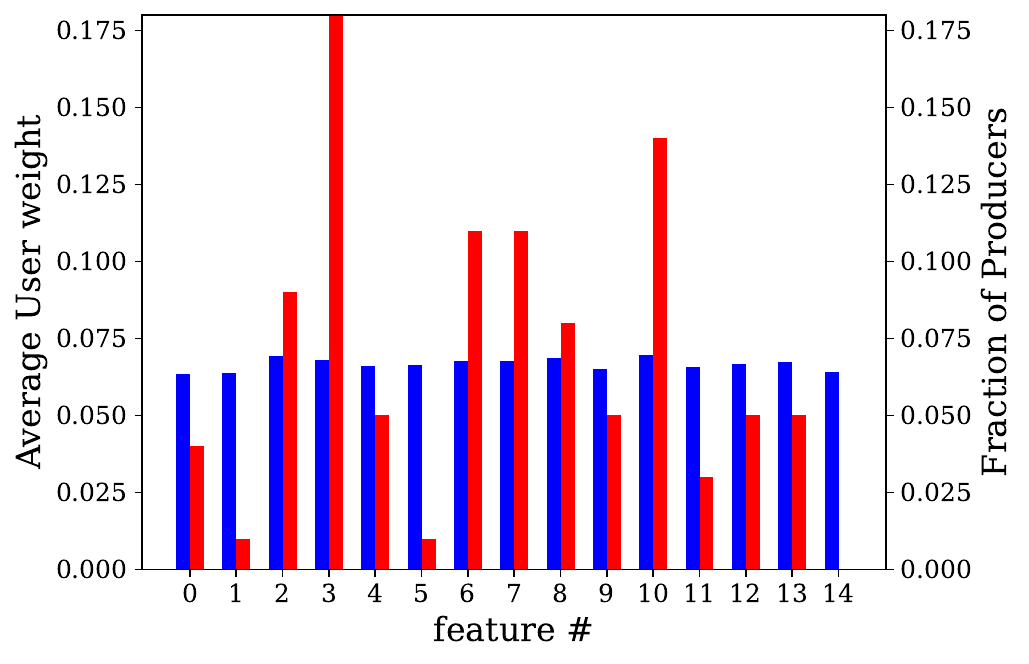}
}
\hspace{0.05\textwidth}
\subfloat[Softmax $\tau = 0.01$]{
    \includegraphics[width=0.25\textwidth]{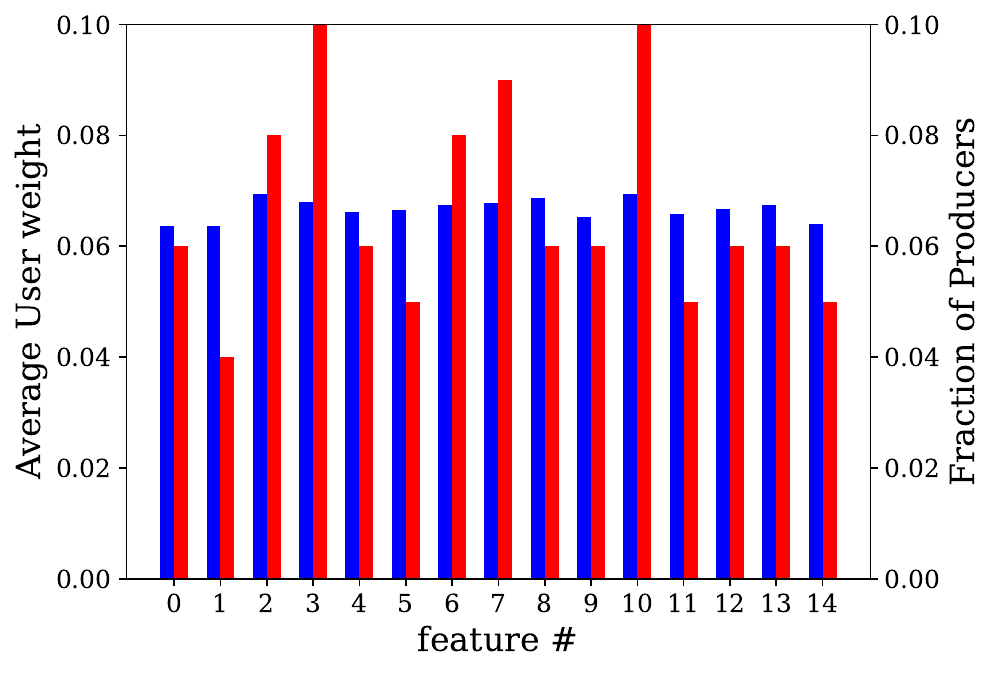}
}
\caption{(Top-20) Softmax serving: Average user weight on each feature (blue, left bar) and fraction of producers going for each feature (red, right bar) $n = 100$ producers, embedding dimension $d = 15$. User embeddings obtained from NMF on MovieLens-100k, Note: unlike full softmax, top-20 serving shows specialization even at a high temperature.\label{fig:top20-movielens-100k-udpd}}
\end{figure}

\begin{figure}[!h]
\centering
\captionsetup[subfigure]{justification=centering}
\subfloat[Greedy]{
    \includegraphics[width=0.25\textwidth]{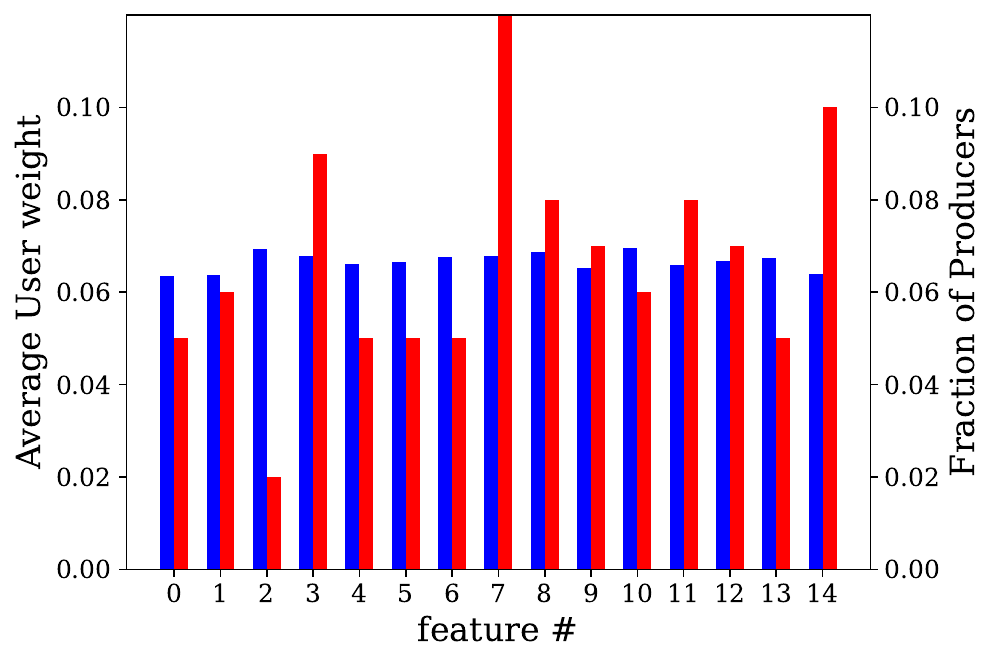}
}
\hspace{0.05\textwidth}   
\subfloat[Linear]{
    \includegraphics[width=0.25\textwidth]{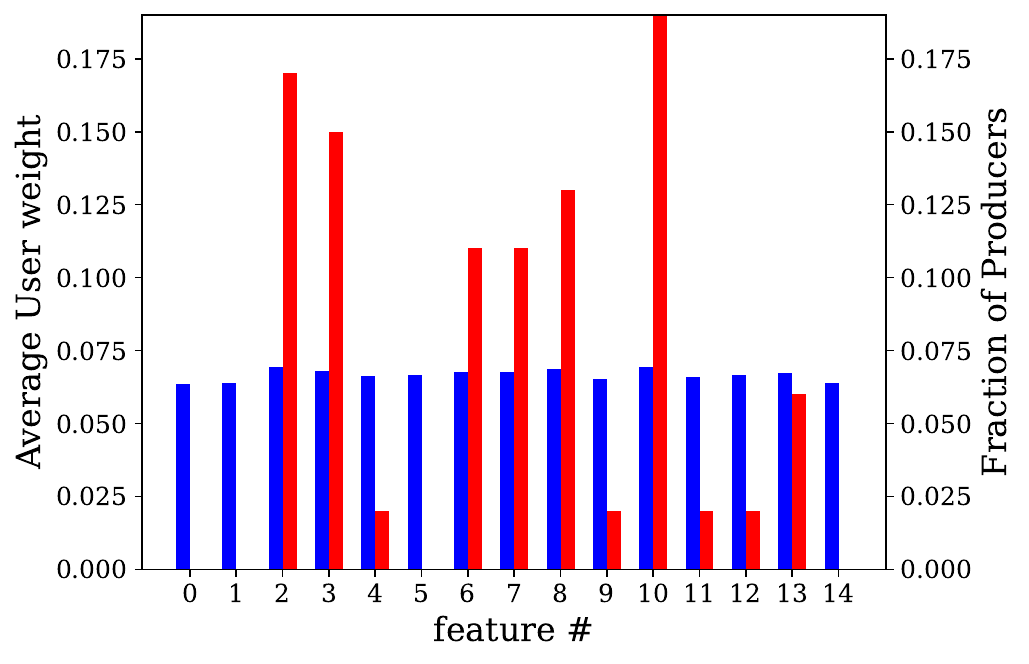}}
\hspace{0.05\textwidth}
\subfloat[Round robin]{
    \includegraphics[width=0.25\textwidth]{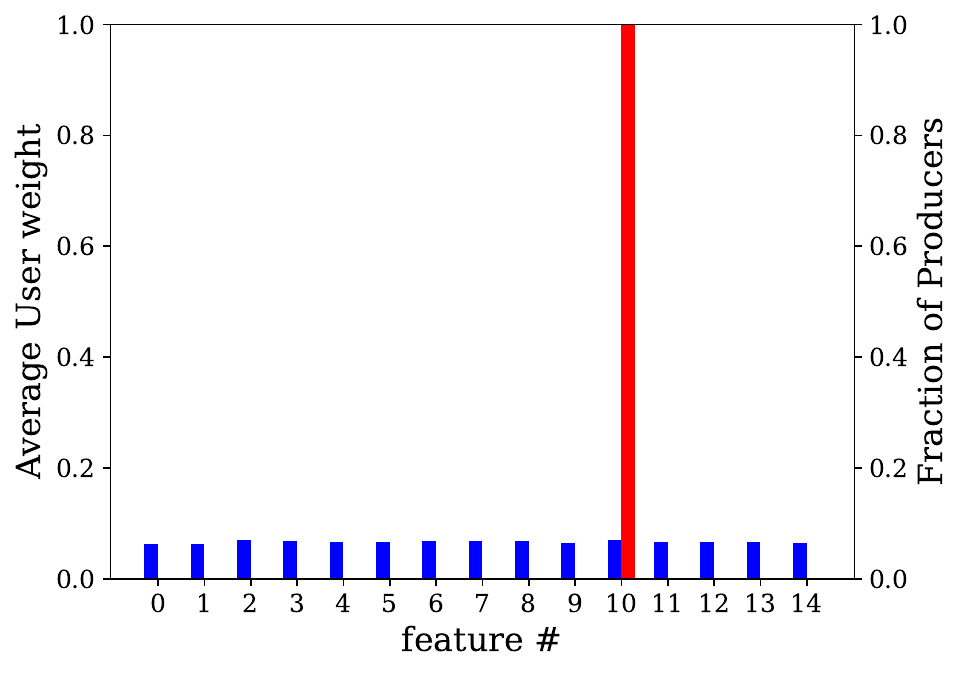}
}
\caption{Temperature independent serving rules: Greedy, Linear and Round-robin serving, Average user weight on each feature (blue, left bar) and fraction of producers going for each feature (red, right bar) $n = 100$ producers, embedding dimension $d = 15$.User embeddings obtained from NMF on MovieLens-100k. \label{fig:ml100k-nontemp-udpd}}
\end{figure}

\newpage

\section{Supplementary results on the uniform and skewed synthetic datasets}

\subsection{Number of iterations until convergence} \label{app:numiters-mbody}
In Figure \ref{fig:num_iters_convergence-synthetic}, we plot the number of iterations  (averaged over $40$ runs) of Algorithm~\ref{alg:bestrep_dynamics}  with increasing number of producers and across varying embedding dimensions. We do so on the Uniform and Skewed-uniform datasets. With both the linear-proportional and softmax content serving rules, we observe that Algorithm~\ref{alg:bestrep_dynamics} seems to scale linearly in the number of producers, further highlighting the computational efficiency of our heuristic.

\begin{figure}[!h]
\centering
\captionsetup[subfigure]{justification=centering}
\subfloat[the linear-proportional serving rule uniform dataset]{
    \includegraphics[width=0.23\textwidth]{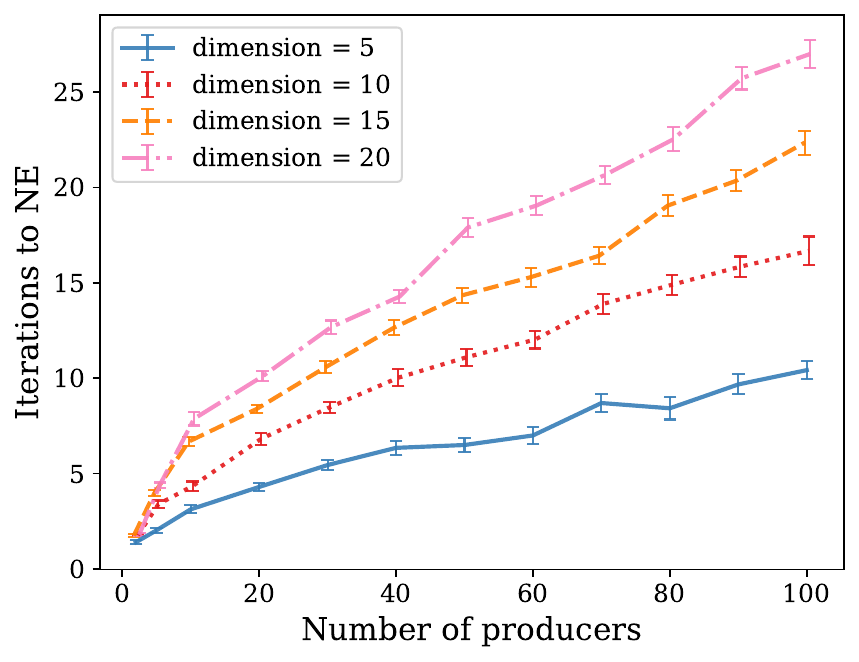}
    \label{fig:convlinear-uniform}
}
\subfloat[the linear-proportional serving rule skewed-uniform dataset]{
    \includegraphics[width=0.23\textwidth]{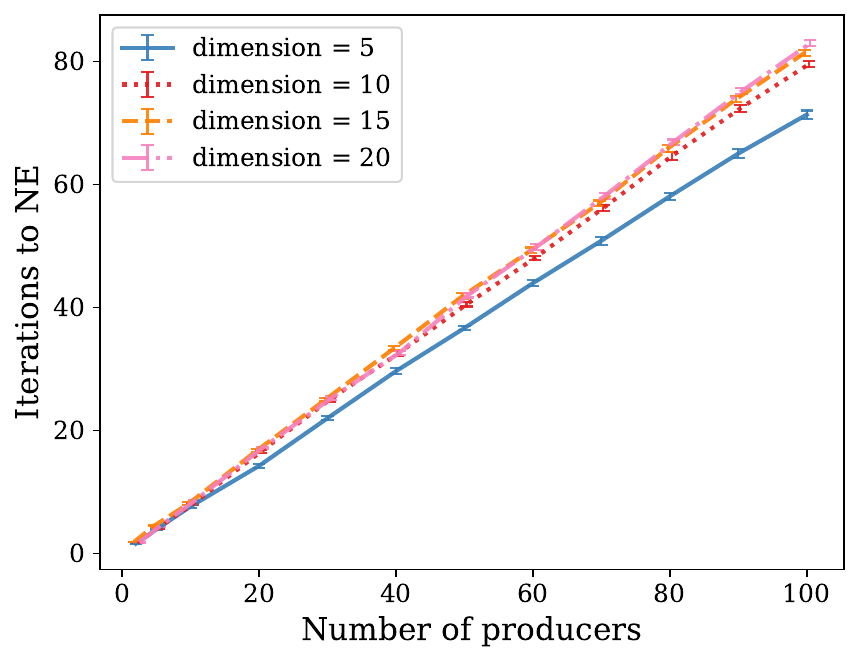}
    \label{fig:convlinear-skewed}
}
\subfloat[the softmax serving rule uniform dataset]{
    \includegraphics[width=0.23\textwidth]{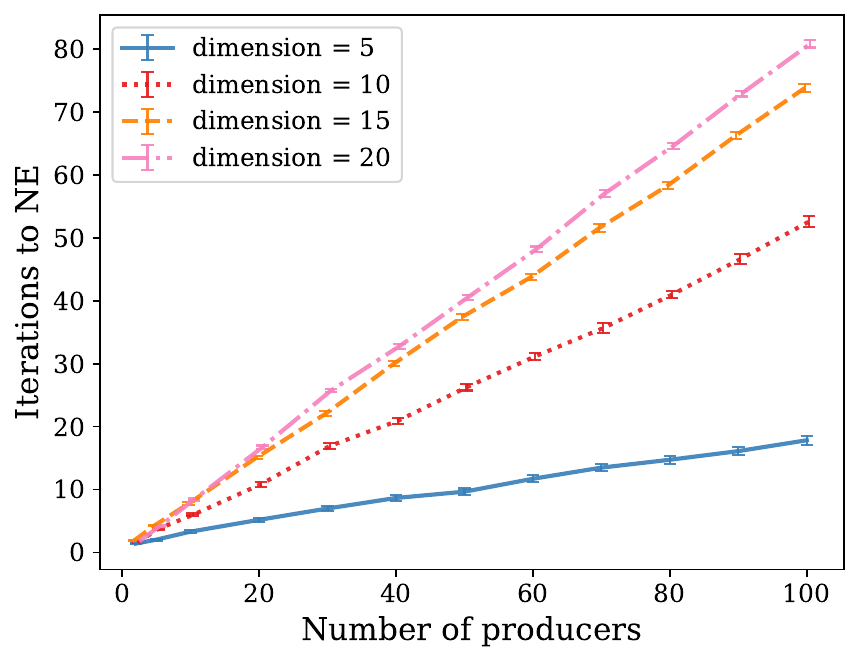}\label{fig:convsoftmax-uniform}
}
\subfloat[the softmax serving rule skewed-uniform dataset]{
    \includegraphics[width=0.23\textwidth]{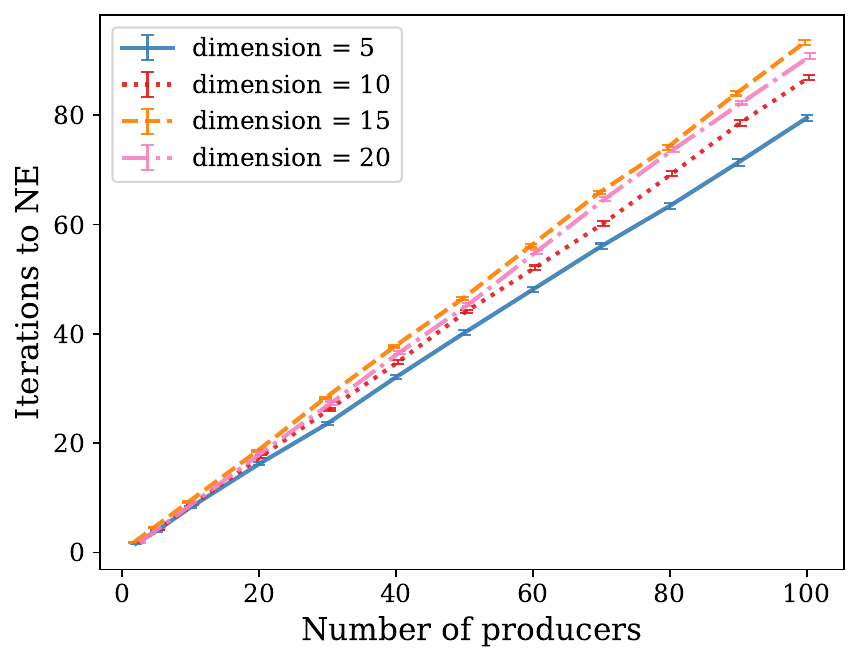}
    \label{fig:convsoftmax-skewed}
} 
\caption{Number of iterations of Algorithm~\ref{alg:bestrep_dynamics} until convergence to a Nash Equilibrium on the uniform, skewed-uniform datasets. The different curves represent different embedding dimensions in the game $d \in \{5,10,15,20\}$; the error bars represent standard error over $40$ runs.}\label{fig:num_iters_convergence-synthetic}
\end{figure}

\subsection{Producer distribution}  
\label{app:full-plots-synthuniskew}
Figure~\ref{fig:full-uniform_udpd} and Figures \ref{fig:full-skewed-udpd} provide producer distribution plots for softmax serving on the synthetic uniform and skewed datasets. These plots further highlight how the degree of specialization increases as the temperature decreases, over a few more softmax temperatures ($\tau = 10$ and $\tau = 0.1$). 

\begin{figure}[!h]
\centering
\captionsetup[subfigure]{justification=centering}
\subfloat[Softmax $\tau = 100$]{
    \includegraphics[width=0.23\textwidth]{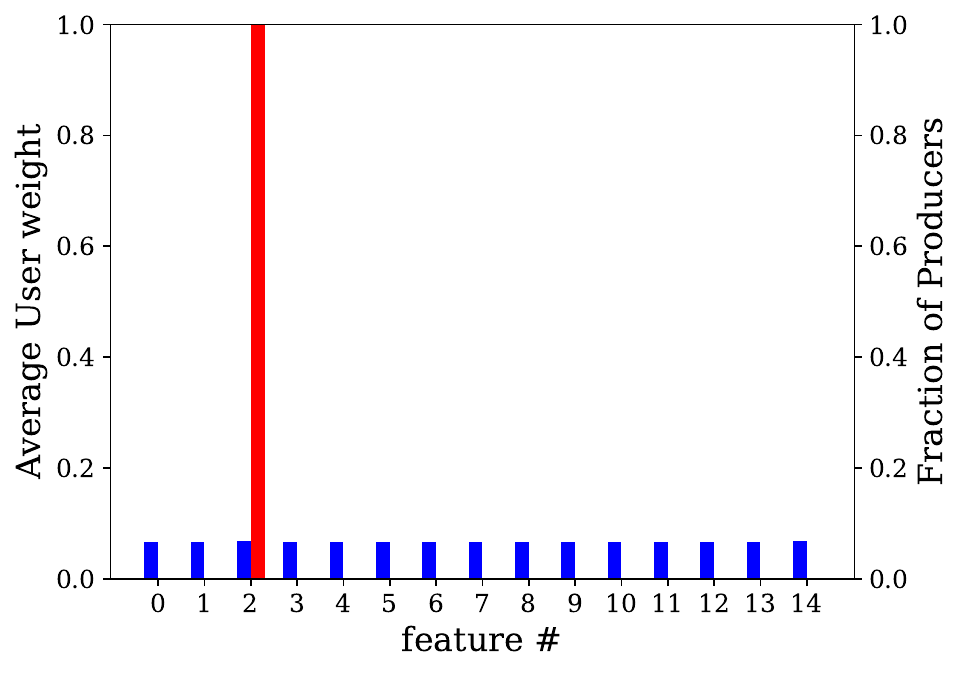}
    \label{fig:uniform-udpd-temp100-full}
}
\hspace{0.05\textwidth}
\subfloat[Softmax $\tau = 10$]{
    \includegraphics[width=0.23\textwidth]{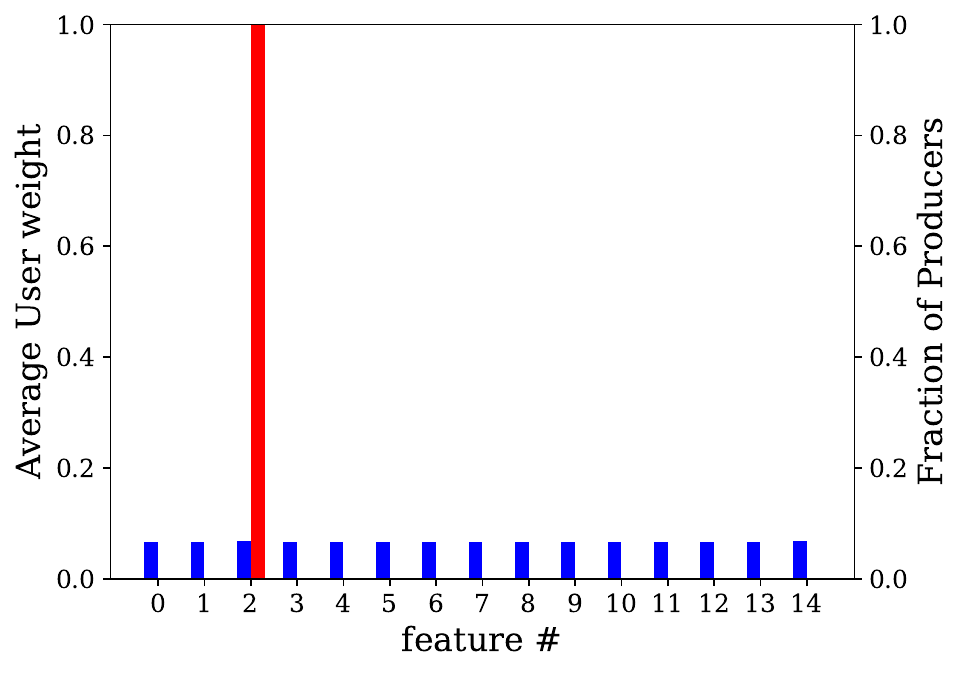}
    \label{fig:uniform-udpd-temp10-full}
}
\hspace{0.05\textwidth}
\subfloat[Softmax $\tau = 1$]{
    \includegraphics[width=0.23\textwidth]{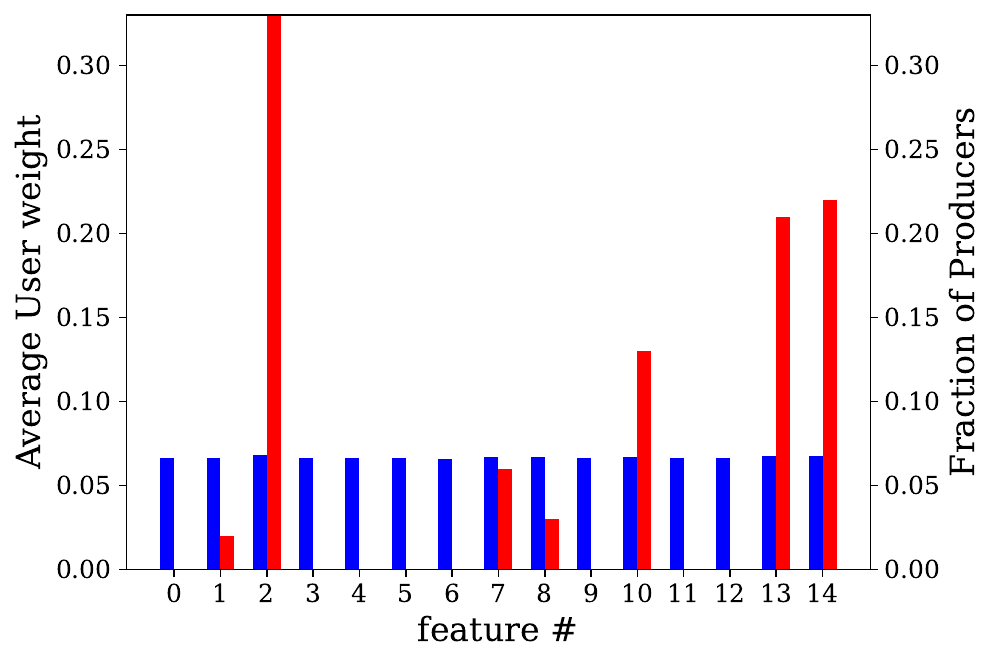}
    \label{fig:uniform-udpd-temp1-full}
}
\\
\subfloat[Softmax $\tau = 0.1$]{
    \includegraphics[width=0.23\textwidth]{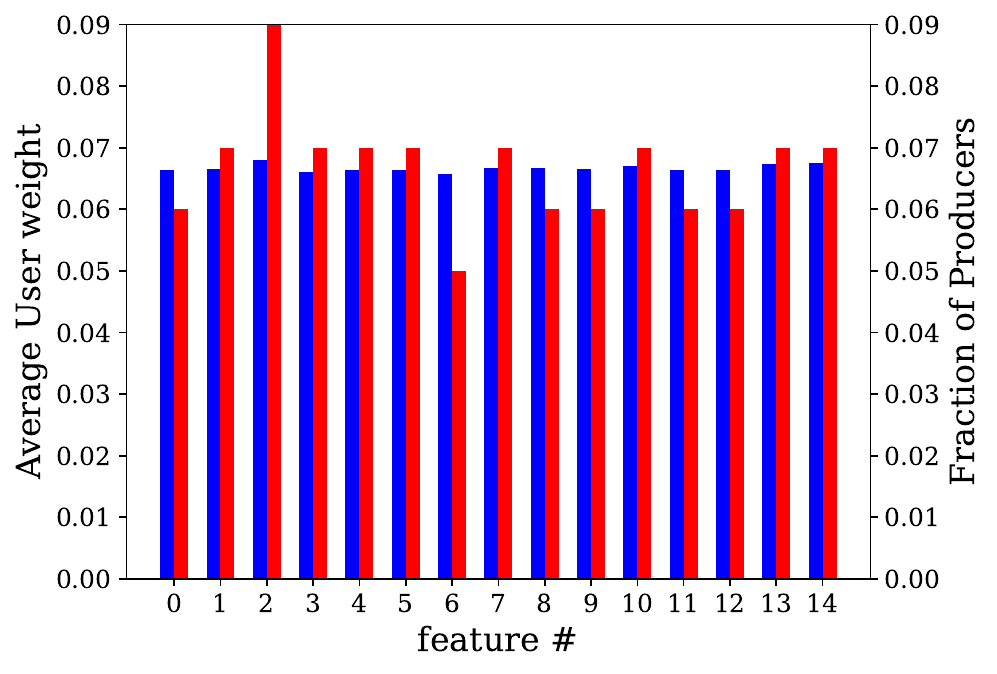}
    \label{fig:uniform-udpd-temp01-full}
}
\hspace{0.05\textwidth}
\subfloat[Softmax $\tau = 0.01$]{
    \includegraphics[width=0.23\textwidth]{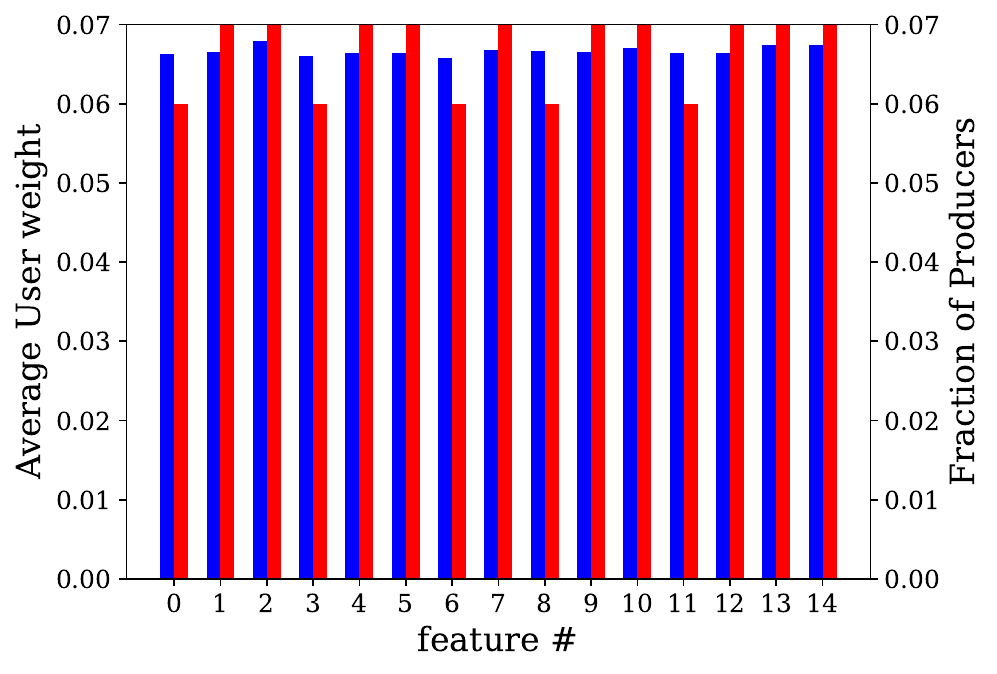}
    \label{fig:uniform-udpd-temp001-full}
}
\hspace{0.05\textwidth}
\subfloat[Linear-proportional serving rule]{
    \includegraphics[width=0.23\textwidth]{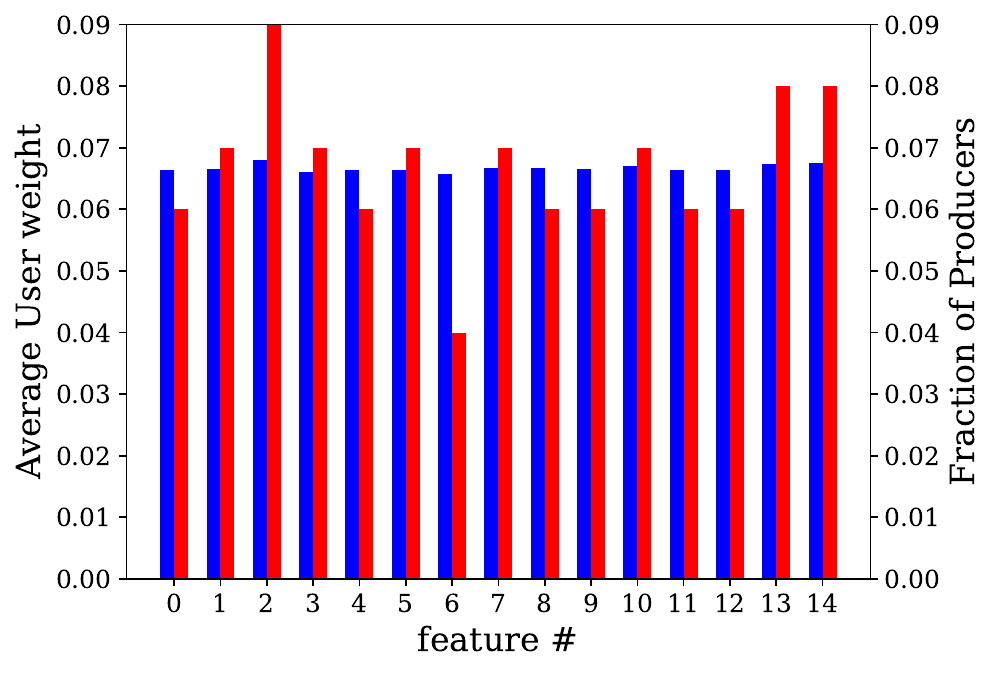}
        \label{fig:uniform-udpd-linear-full}
}
\caption{Average user weight on each feature (blue, left bar) and fraction of producers going for each feature (red, right bar) $n = 100$ producers, embedding dimension $d = 15$. Lower softmax temperature leads to more producer specialization. Uniform distribution of users. \label{fig:full-uniform_udpd}}
\end{figure}

\begin{figure}[H]
\centering
\captionsetup[subfigure]{justification=centering}
\subfloat[Softmax $\tau = 100$]{
    \includegraphics[width=0.25\textwidth]{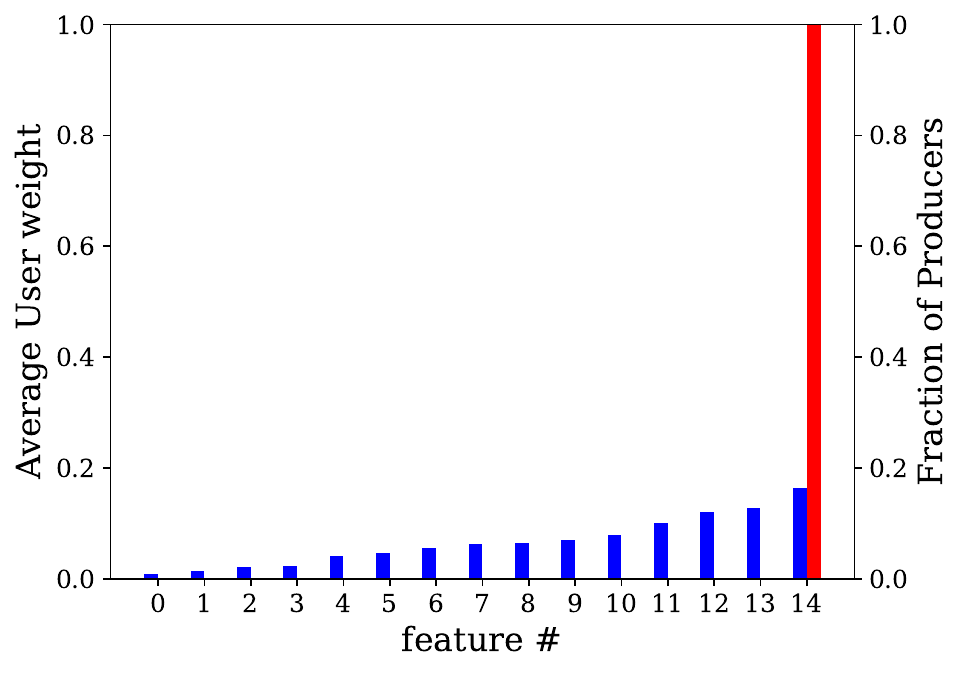}
    \label{fig:skewed-udpd-tau100-full}
}
\hspace{0.05\textwidth}   
\subfloat[Softmax $\tau = 10$]{
    \includegraphics[width=0.25\textwidth]{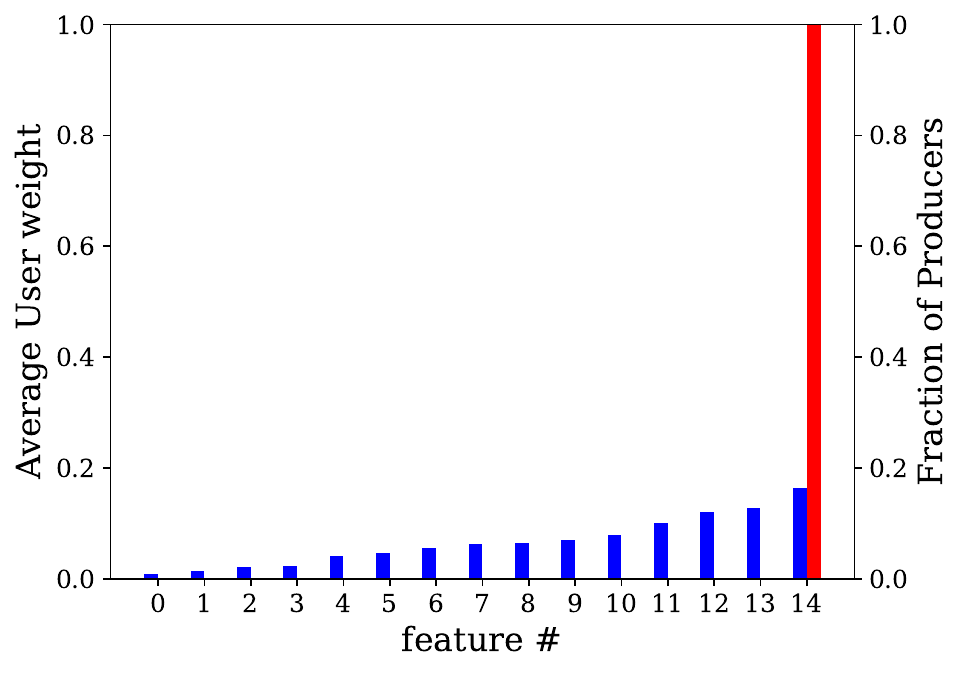}}
\hspace{0.05\textwidth}   
\subfloat[Softmax $\tau = 1$]{
    \includegraphics[width=0.25\textwidth]{NewFigures/udpd/synth-skewed/synth-skewed_temp1_dim_15.pdf}
}
\\
\subfloat[Softmax $\tau = 0.1$]{
    \includegraphics[width=0.25\textwidth]{NewFigures/udpd/synth-skewed/synth-skewed_temp01_dim_15.pdf}
}
\hspace{0.05\textwidth}   
\subfloat[Softmax $\tau = 0.01$]{
    \includegraphics[width=0.25\textwidth]{NewFigures/udpd/synth-skewed/synth-skewed_temp001_dim_15.pdf}
    \label{fig:skewed-udpd-tau001-full}
}
\hspace{0.05\textwidth}   
\subfloat[the linear-proportional serving rule]{
    \includegraphics[width=0.25\textwidth]{NewFigures/udpd/synth-skewed/synth-skewed_linear_dim_15.pdf}
    \label{fig:skewed-udpd-linear-full}
}
\caption{Average user weight on each feature (blue, left bar) and fraction of producers going for each feature (red, right bar) $n = 100$ producers, embedding dimension $d = 15$. Lower softmax temperature leads to more producer specialization. Skewed-uniform distribution of users.
}
\label{fig:full-skewed-udpd}
\end{figure}

\newpage

\section{{Experiments on the Sparse synthetic dataset}}
\label{app:sparse-dataset}
\subsection{Average producer utility}
Figure~\ref{fig:spsuni-allservingrules-util} illustrates the average producer utility across serving rules for the sparse synthetic dataset, revealing trends similar to those observed in Figure~\ref{fig:utility_serving_rules-ml100k} in the main text. Specifically, lower $k$ values and lower temperatures lead to higher producer utility; however, this comes with a decreased likelihood of convergence to a Nash equilibrium.

\begin{figure}[H]
\centering
\includegraphics[width=0.4\textwidth]{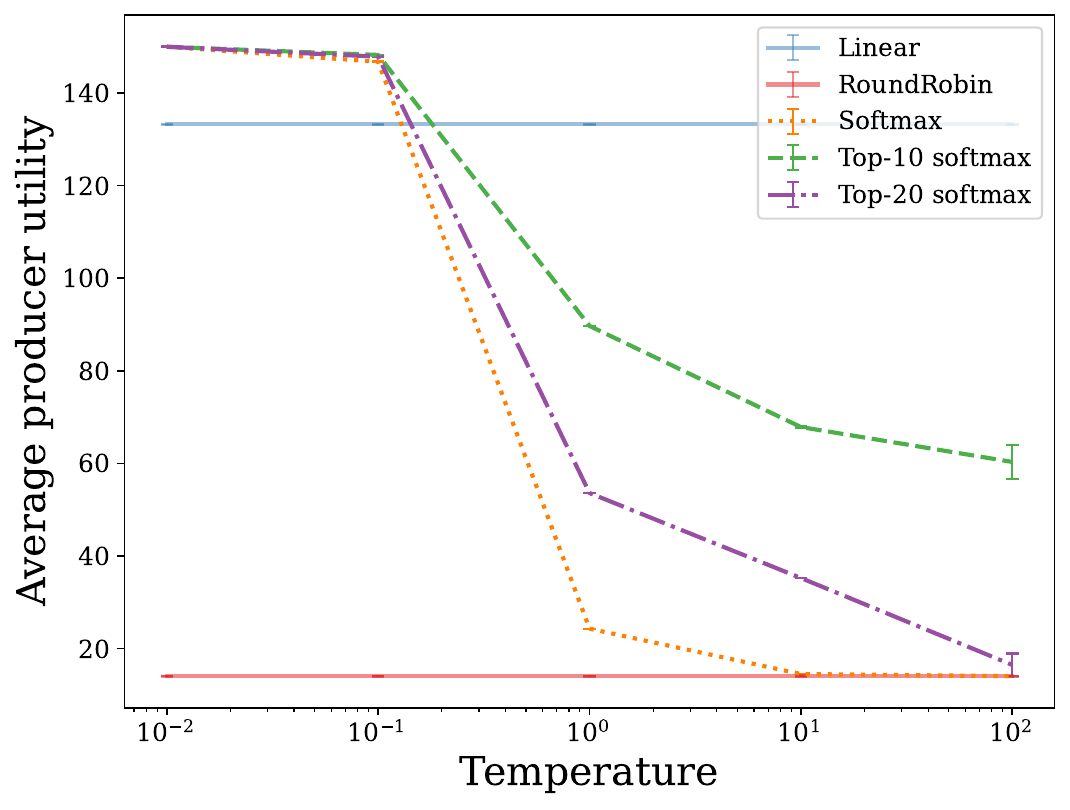}

\caption{Average producer utility on the Sparse Synthetic dataset: comparing producer utility across serving rules: Linear (blue), RoundRobin (red), Softmax (orange), Top-10/20 Softmax (green/purple) with $n=50$ producers and $d=15$. Error bars represent standard error over 5 seeds.\label{fig:spsuni-allservingrules-util}}
\end{figure}

\subsection{Producer distribution at Nash Equilibrium}

\begin{figure}[H]
\centering
\captionsetup[subfigure]{justification=centering}
\subfloat[Softmax $\tau = 100$]{
    \includegraphics[width=0.25\textwidth]{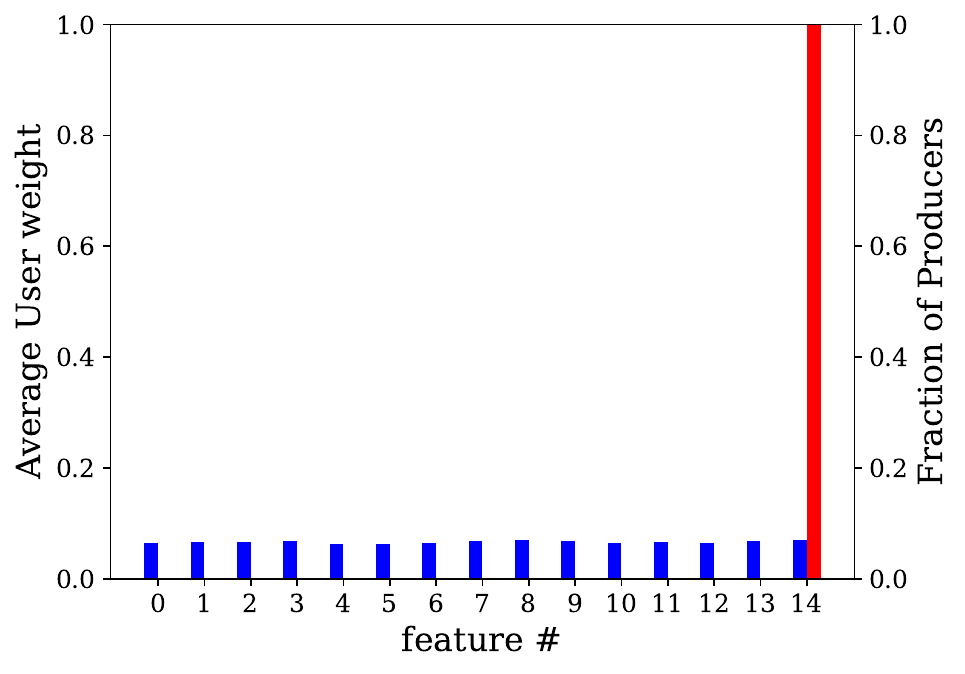}
\label{fig:spsuni-udpd-tau100}
}
\hspace{0.05\textwidth}   
\subfloat[Softmax $\tau = 10$]{
    \includegraphics[width=0.25\textwidth]{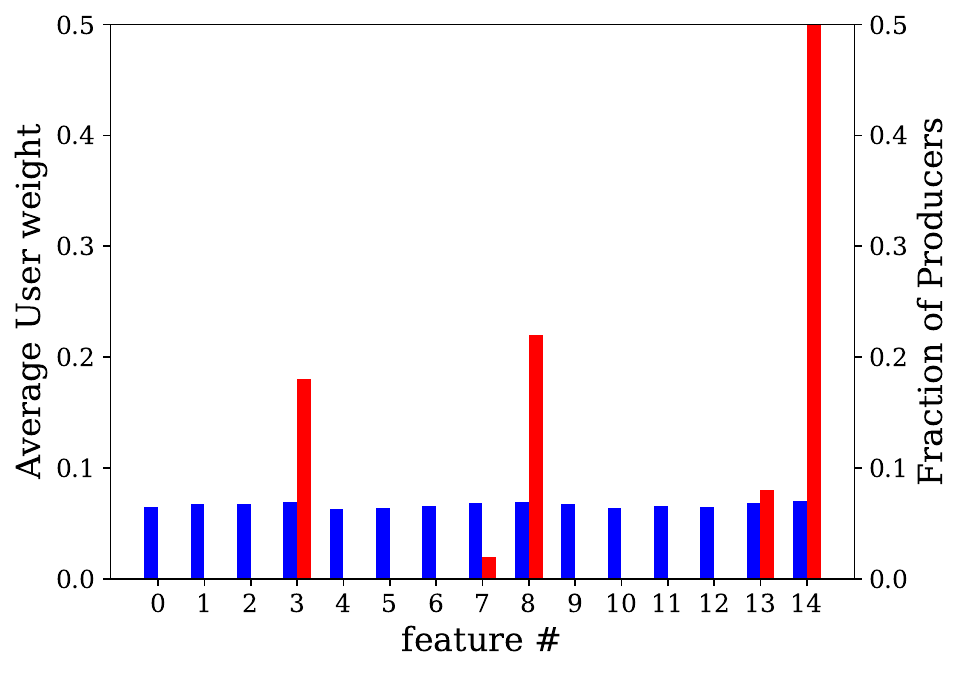}}
\hspace{0.05\textwidth}
\subfloat[Softmax $\tau = 1$]{
    \includegraphics[width=0.25\textwidth]{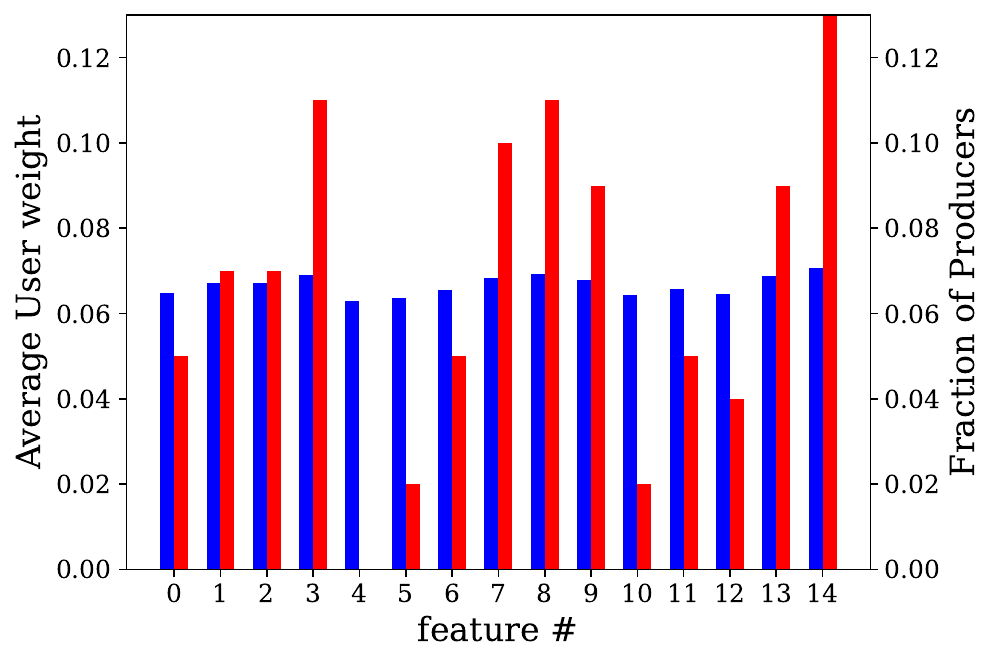}
}
\hspace{0.05\textwidth}
\\
\subfloat[Softmax $\tau = 0.1$]{
    \includegraphics[width=0.25\textwidth]{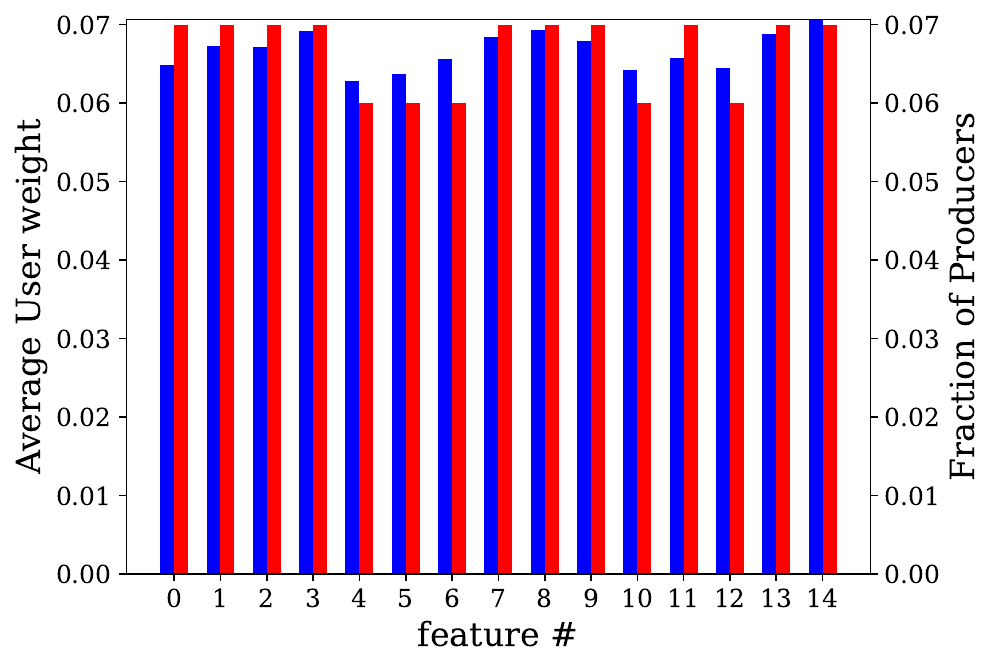}
}
\hspace{0.05\textwidth}
\subfloat[Softmax $\tau = 0.01$]{
    \includegraphics[width=0.25\textwidth]{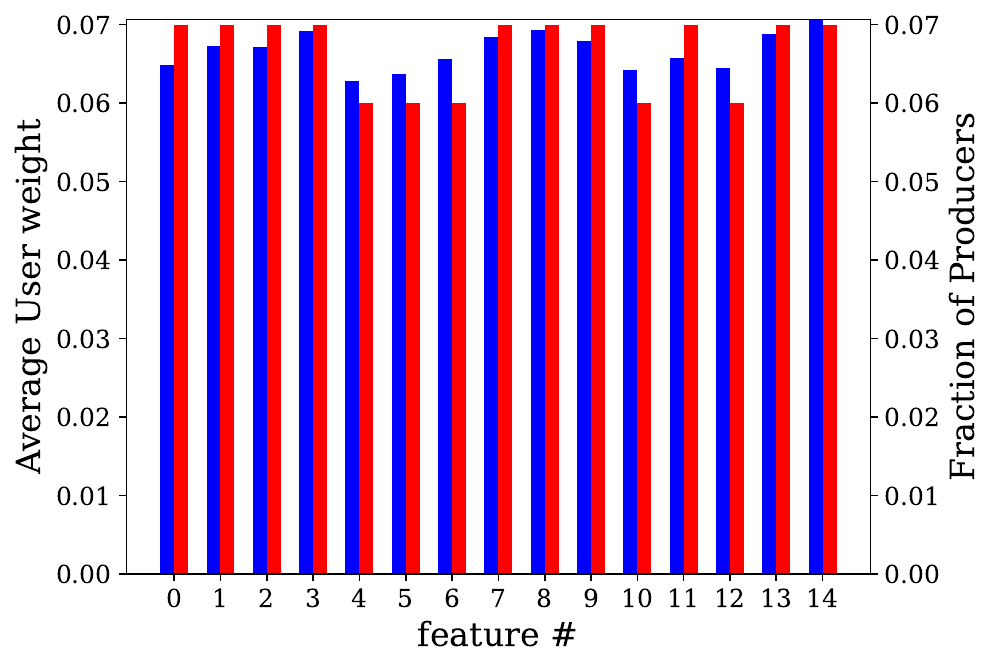}
\label{fig:spsuni-udpd-temp001-full}
}
\hspace{0.05\textwidth}
\subfloat[the linear-proportional serving rule]{
    \includegraphics[width=0.25\textwidth]{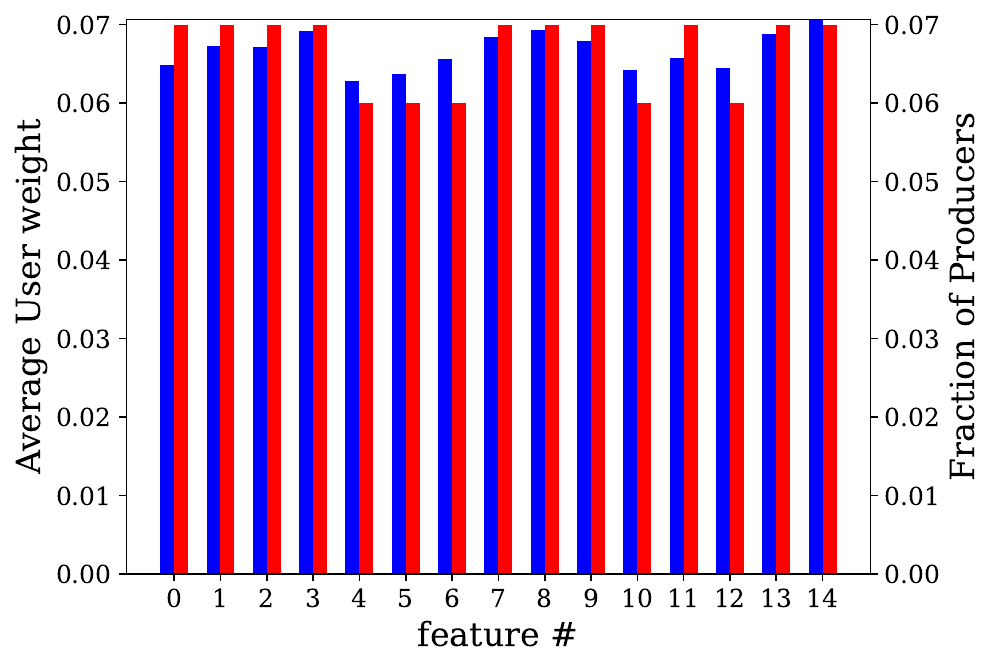}
\label{fig:spsuni-udpd-linear-full}
}
\caption{Average user weight on each feature (blue, left bar) and fraction of producers going for each feature (red, right bar) $n = 100$ producers, embedding dimension $d = 15$. Lower softmax temperature leads to more producer specialization. Sparse distribution of users.
\label{fig:full-spsuni-udpd}}
\end{figure}


\section{Experiments on the AmazonMusic and RentTheRunway datasets}\label{app:experiments}

\subsection{Number of iterations till convergence}\label{app:num_iters}
Figure \ref{fig:num_iters_convergence_amzn-rtr} plots the number of iterations of Algorithm \ref{alg:bestrep_dynamics} until convergence to NE, averaged over $40$ runs of best response dynamics, on the AmazonMusic and RentTheRunway datasets. Similar to Figure \ref{fig:num_iters_convergence}, we note a fast time to convergence that seems to scale linearly in the number of producers.

\begin{figure}[!h]
\centering
\subfloat[\centering Linear serving AmazonMusic]{
    \includegraphics[width=0.23\textwidth]{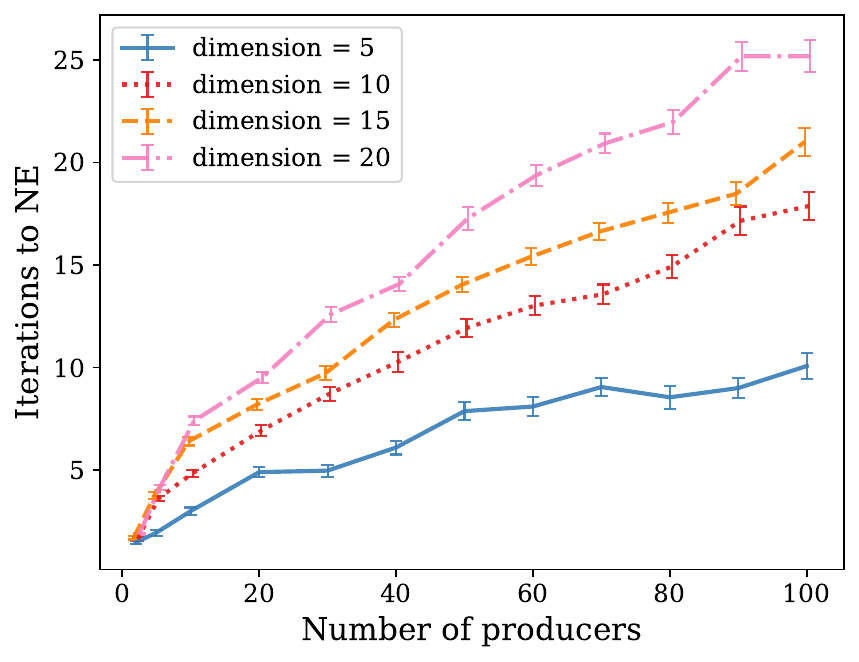}\label{fig:convlinear-AmazonMusic}
}
\hfill
\subfloat[\centering Softmax serving AmazonMusic]{
    \includegraphics[width=0.23\textwidth]{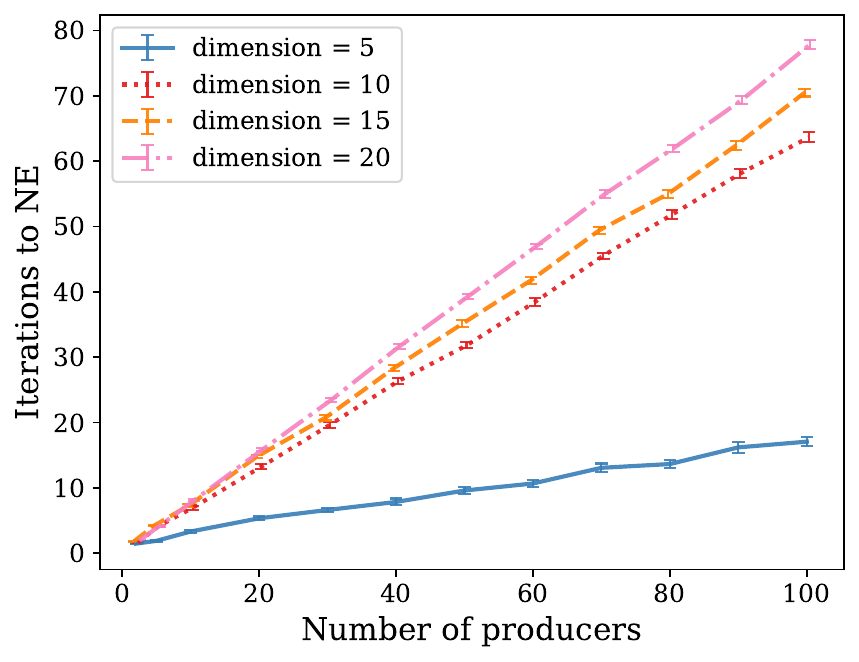}\label{fig:convsoftmax-AmazonMusic}
} 
\hfill
\subfloat[\centering Linear serving RentTheRunway]{
    \includegraphics[width=0.23\textwidth]{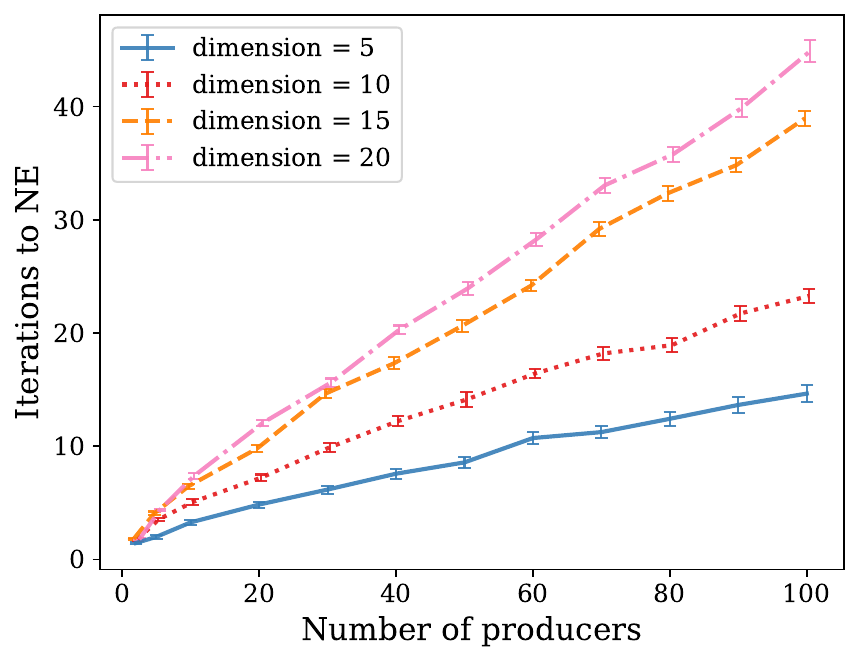}
    \label{fig:convlinear-RentTheRunway}
}
\hfill
\subfloat[\centering Softmax serving RentTheRunway]{
    \includegraphics[width=0.23\textwidth]{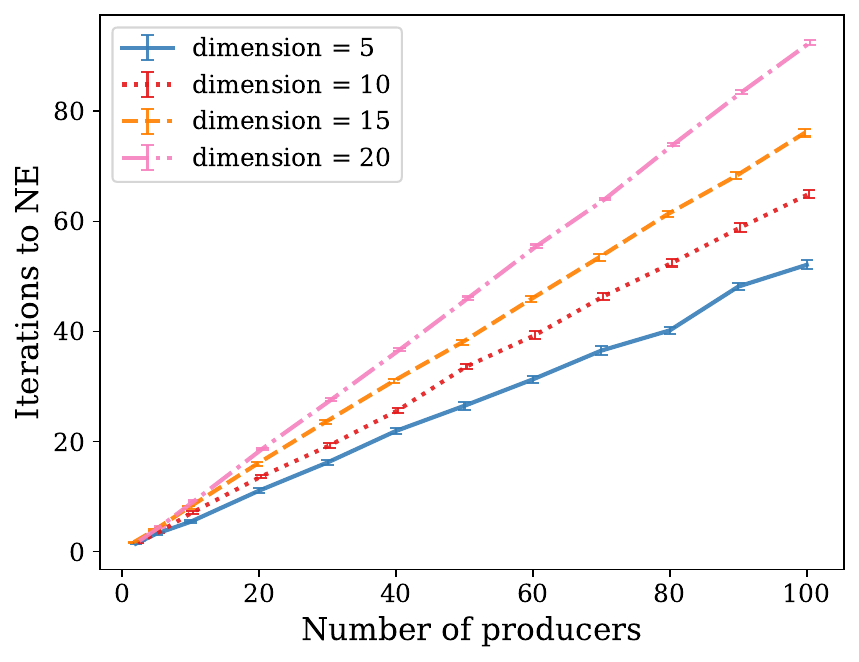}
    \label{fig:convsoftmax-RentTheRunway}
} 
\caption{Number of iterations in Algorithm~\ref{alg:bestrep_dynamics} until convergence to a NE on the AmazonMusic and RentTheRunway datasets. The different curves represent different embedding dimensions in the game $d \in \{5,10,15,20\}$, error bars represent standard error over $40$ runs.}\label{fig:num_iters_convergence_amzn-rtr}
\end{figure}

\subsection{Producer distribution at Nash equilibrium} \label{app:pd-NE}
Figures \ref{fig:full-amznmusic-udpd} and \ref{fig:full-rentrunway-udpd} provide plots for the producer distribution at Nash equilibrium on the AmazonMusic and RentTheRunway dataset respectively. We note that our insights on how the softmax temperature affects specialization at equilibrium also arise in these two additional datasets: namely, lower temperatures lead to higher degrees of specialization.

\begin{figure}[!h]
\centering
\captionsetup[subfigure]{justification=centering}
\subfloat[Softmax $\tau = 100$]{
    \includegraphics[width=0.25\textwidth]{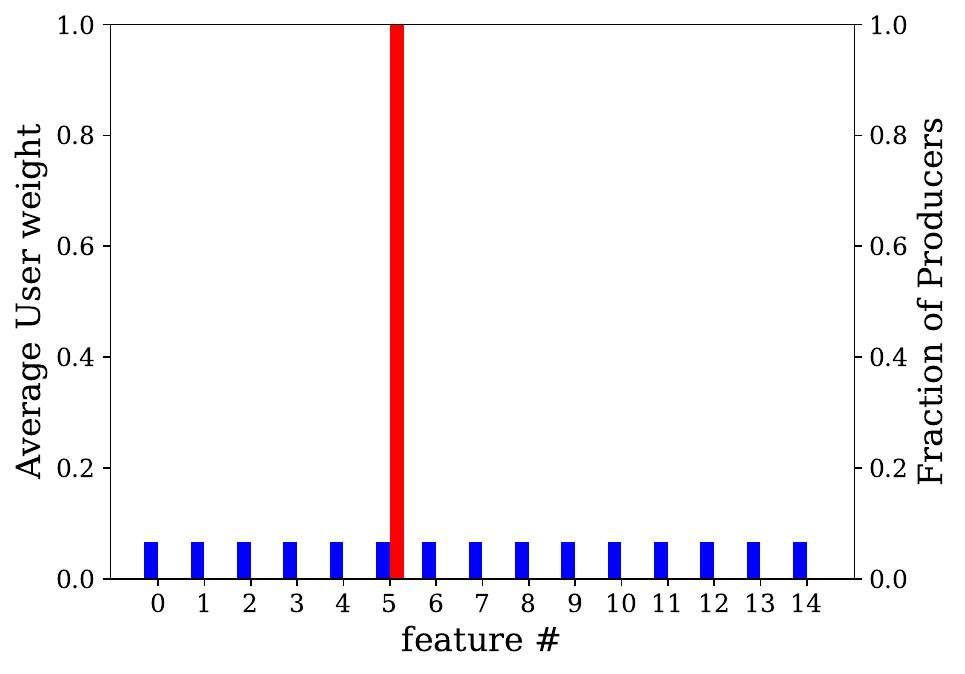}
    \label{fig:amznmusic-udpd-tau100}
}
\hspace{0.05\textwidth}   
\subfloat[Softmax $\tau = 10$]{
    \includegraphics[width=0.25\textwidth]{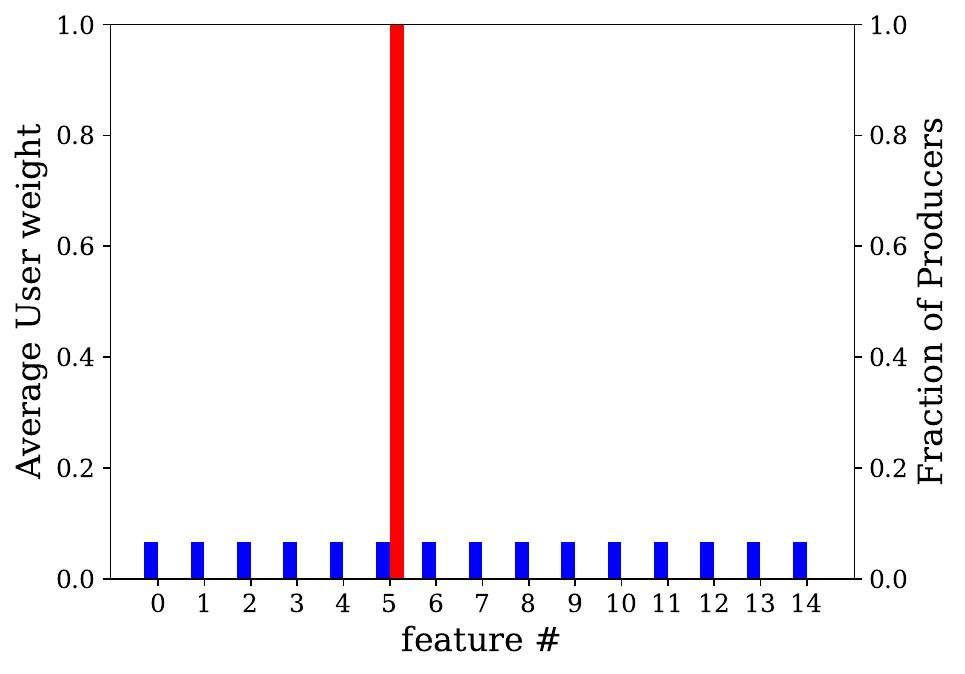}
}
\hspace{0.05\textwidth}   
\subfloat[Softmax $\tau = 1$]{
    \includegraphics[width=0.25\textwidth]{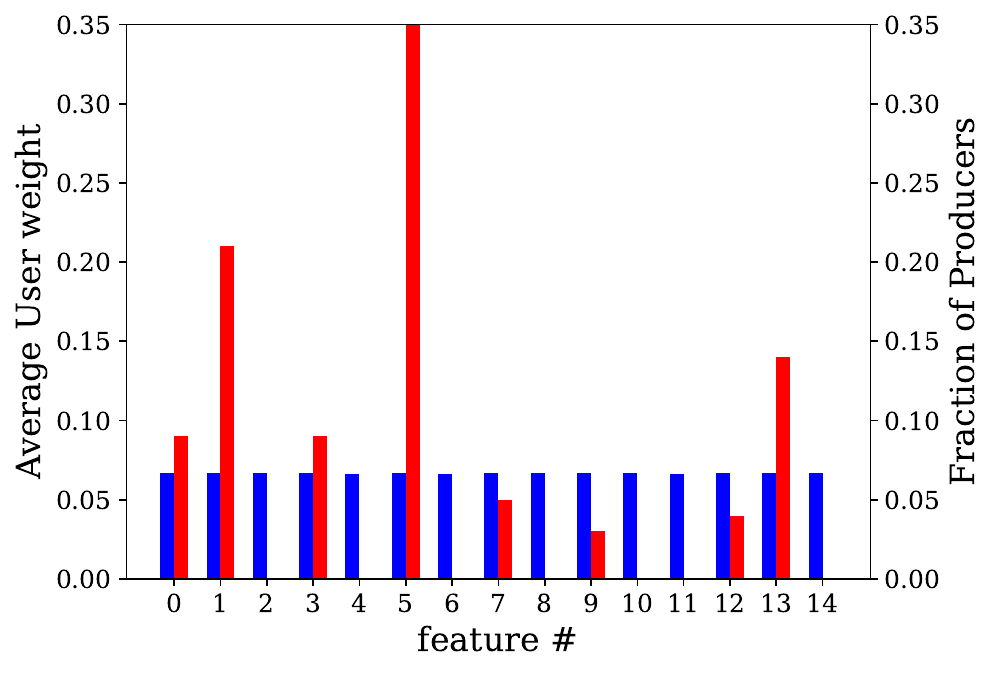}
}
\\
\subfloat[Softmax $\tau = 0.1$]{
    \includegraphics[width=0.25\textwidth]{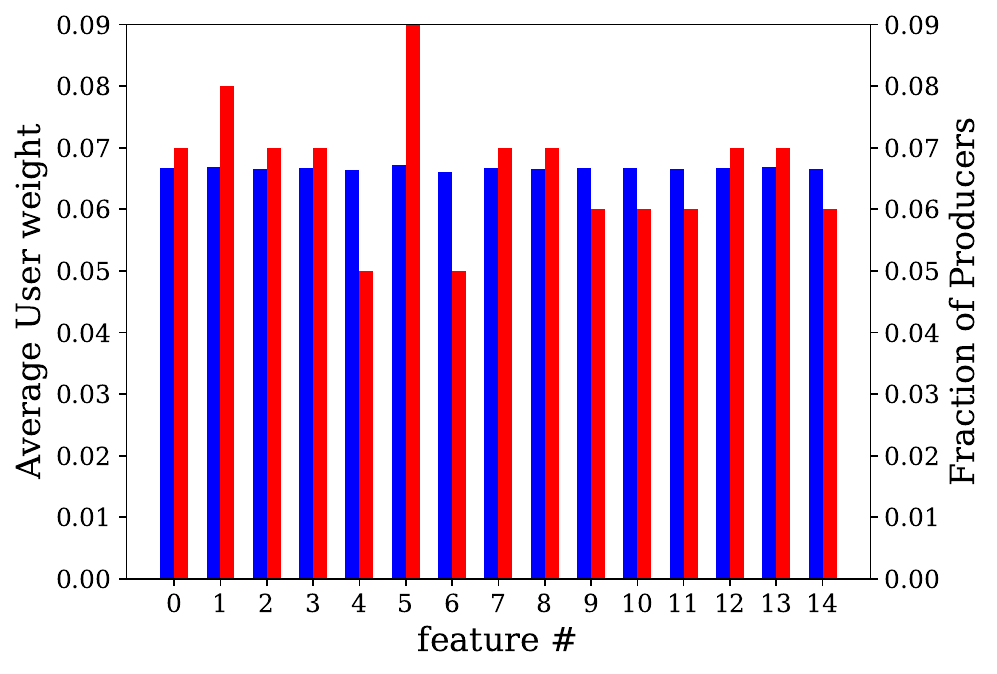}
}
\hspace{0.05\textwidth}   
\subfloat[Softmax $\tau = 0.01$]{
    \includegraphics[width=0.25\textwidth]{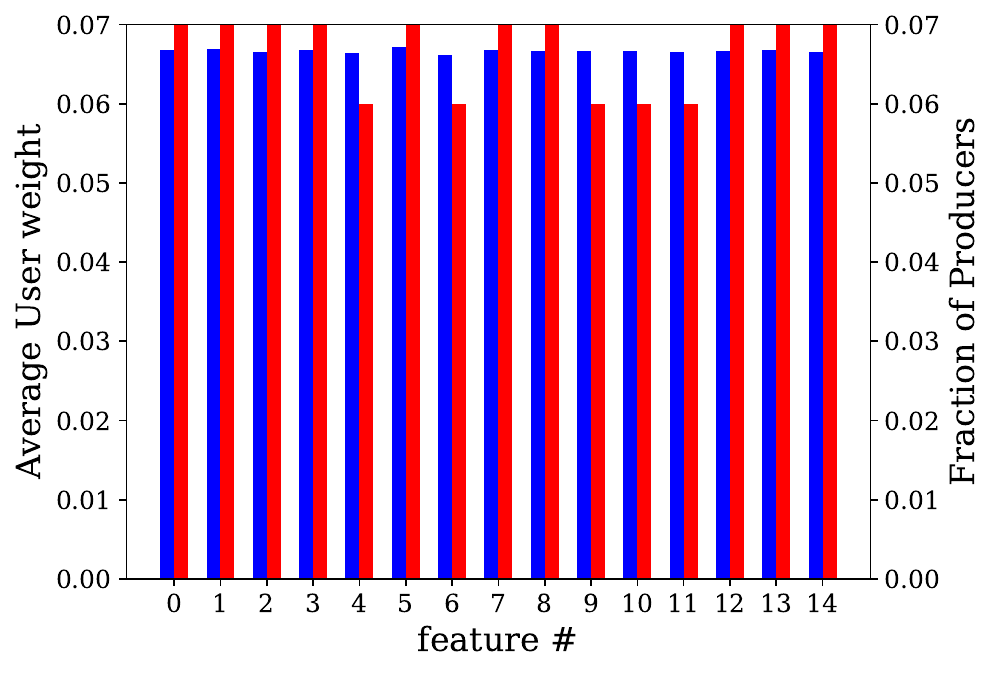}
    \label{fig:amznmusic-udpd-tau001}
}
\hspace{0.05\textwidth}   
\subfloat[the linear-proportional serving rule]{
    \includegraphics[width=0.25\textwidth]{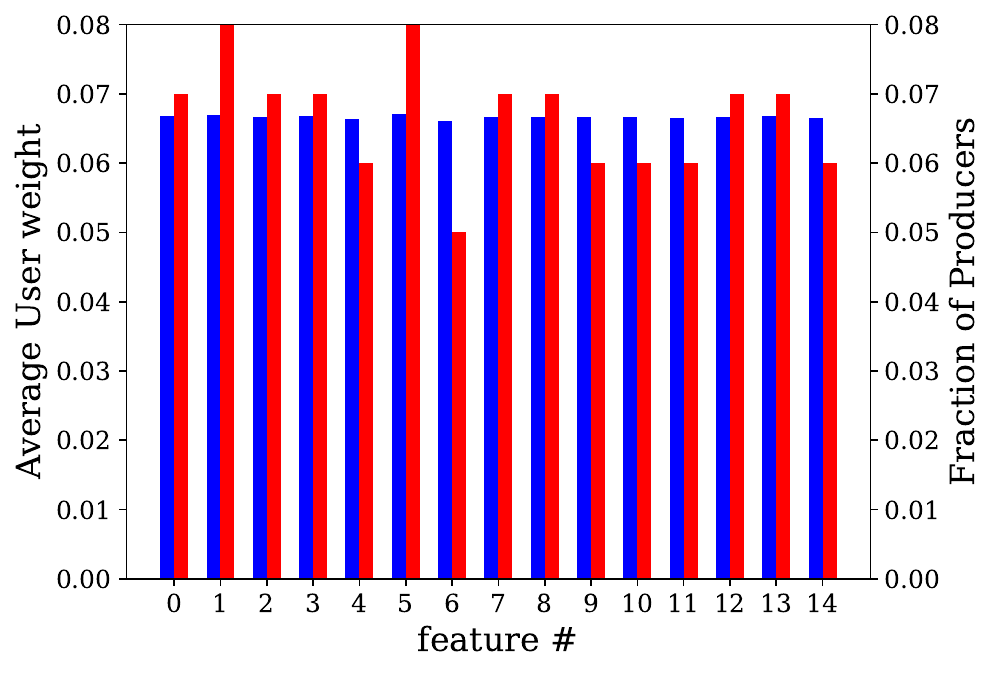}
    \label{fig:amznmusic-udpd-linear}
}
\caption{Average user weight on each feature (blue, left bar) and fraction of producers going for each feature (red, right bar) $n = 100$ producers, embedding dimension $d = 15$. Lower softmax temperature leads to more producer specialization. User embeddings obtained from NMF on the AmazonMusic dataset. \label{fig:full-amznmusic-udpd}}
\end{figure}

\begin{figure}[!h]
\centering
\captionsetup[subfigure]{justification=centering}
\subfloat[Softmax $\tau = 100$]{
    \includegraphics[width=0.25\textwidth]{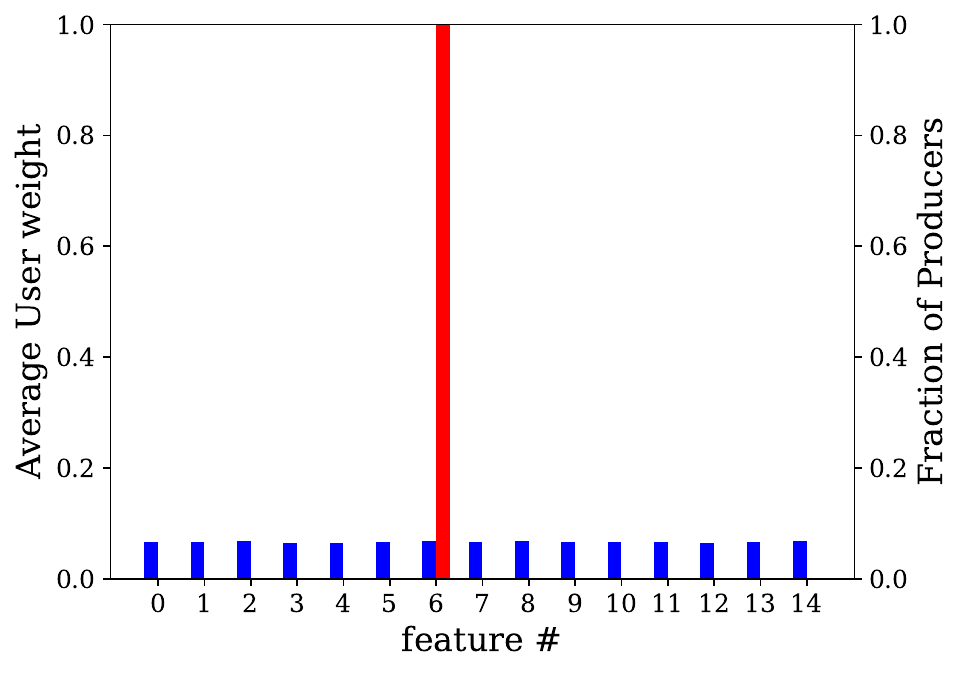}
    \label{fig:rentrunway-udpd-tau100}
}
\hspace{0.05\textwidth}   
\subfloat[Softmax $\tau = 10$]{
    \includegraphics[width=0.25\textwidth]{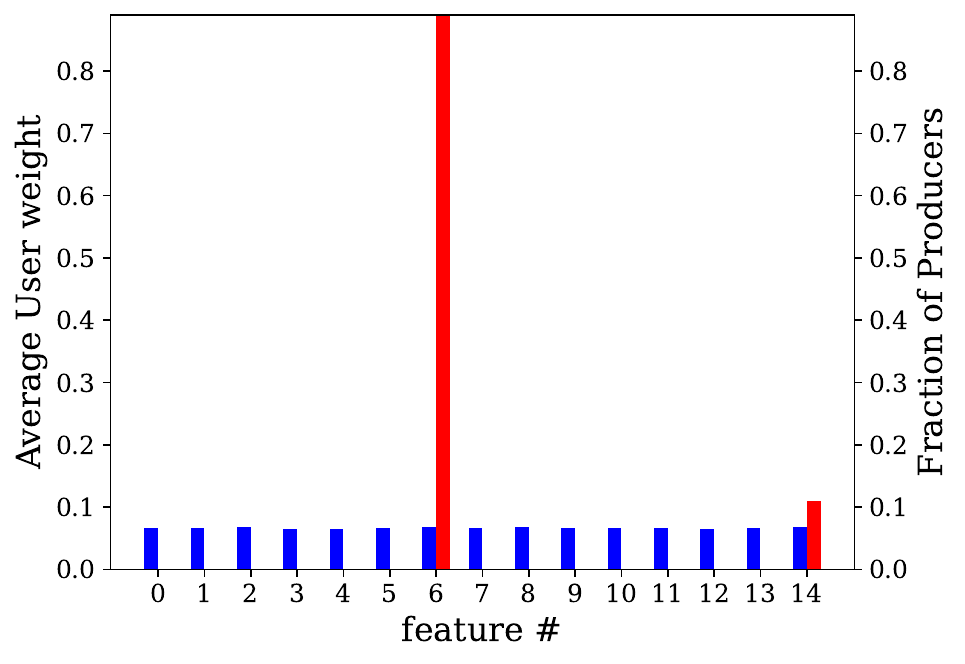}
}
\hspace{0.05\textwidth}   
\subfloat[Softmax $\tau = 1$]{
    \includegraphics[width=0.25\textwidth]{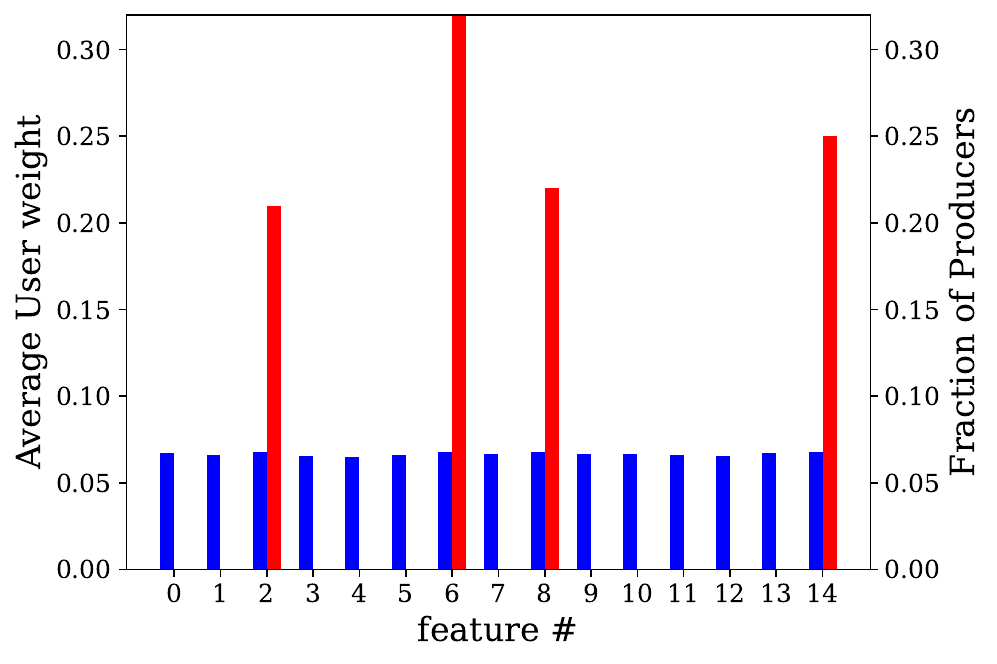}
}
\\
\subfloat[Softmax $\tau = 0.1$]{
    \includegraphics[width=0.25\textwidth]{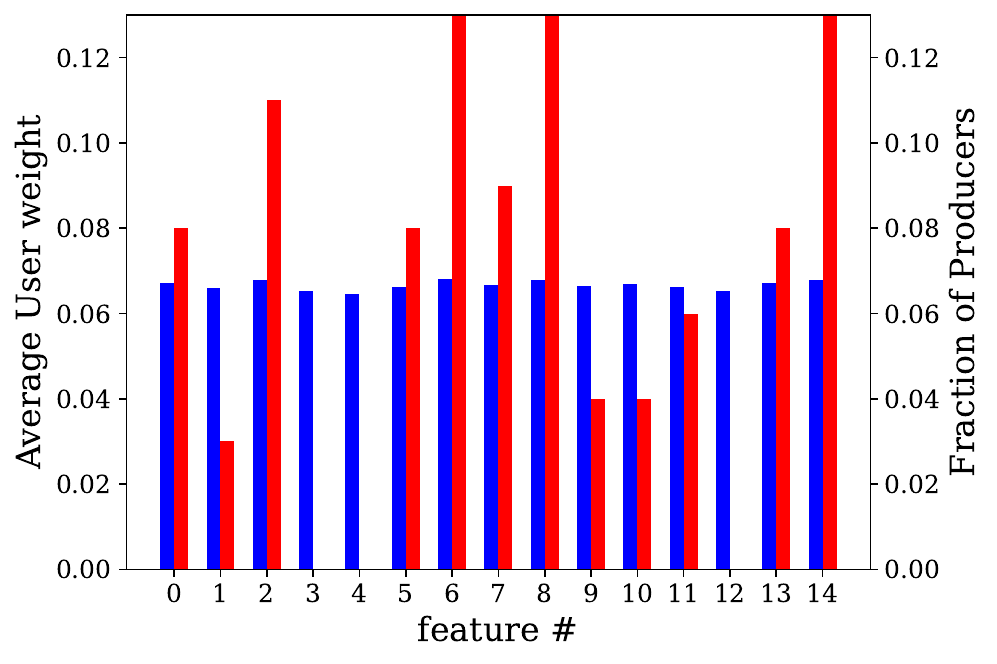}
}
\hspace{0.05\textwidth}   
\subfloat[Softmax $\tau = 0.01$]{
    \includegraphics[width=0.25\textwidth]{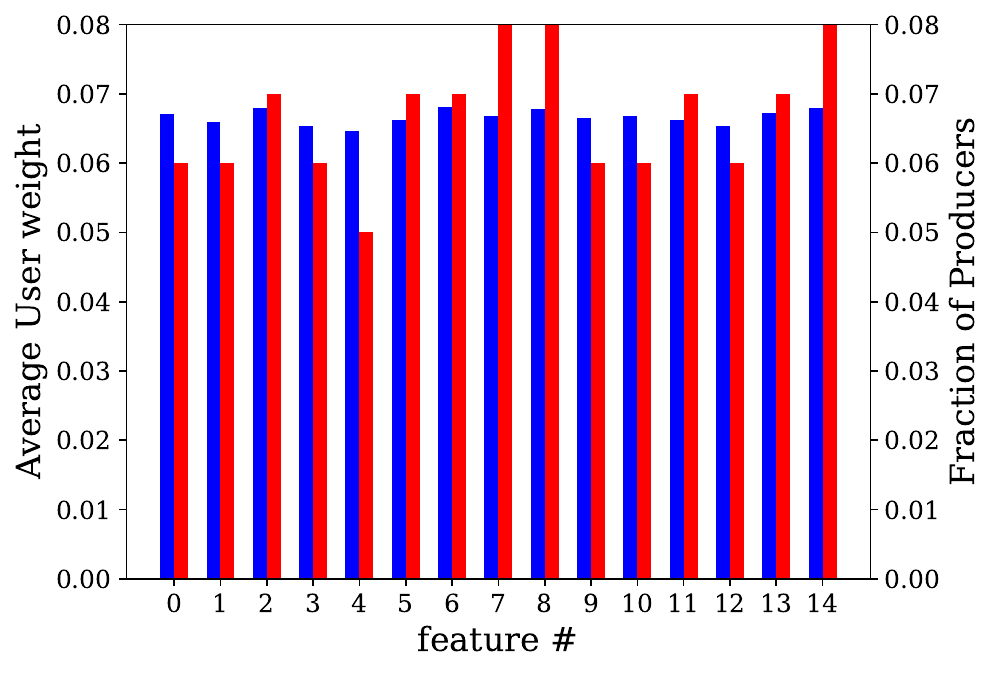}
    \label{fig:rentrunway-udpd-tau001}
}
\hspace{0.05\textwidth}   
\subfloat[the linear-proportional serving rule]{
    \includegraphics[width=0.25\textwidth]{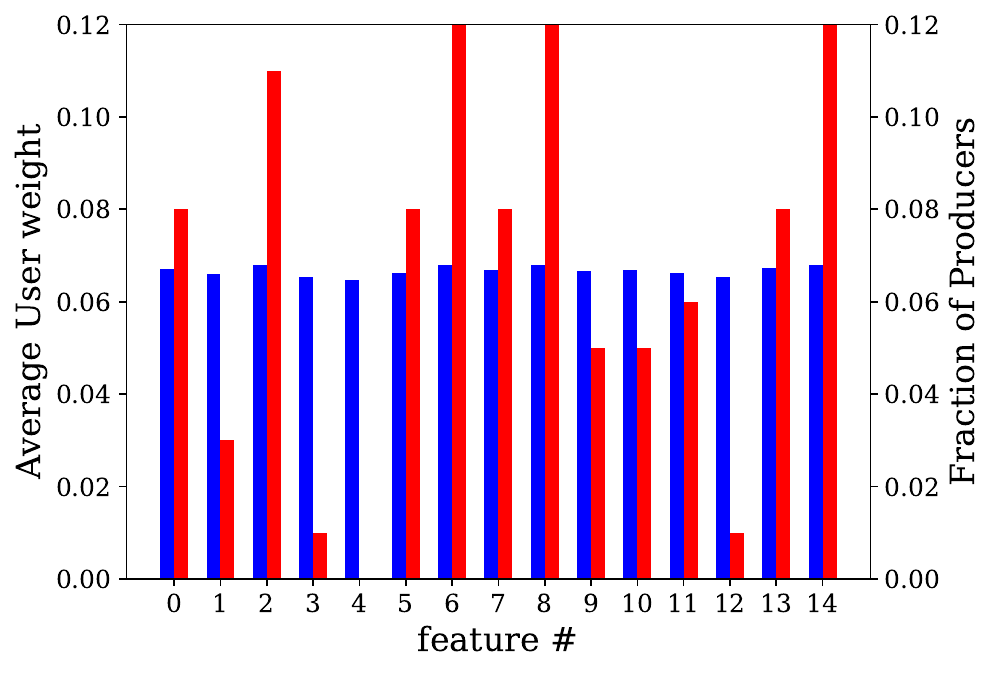}
    \label{fig:rentrunway-udpd-linear}
}
\caption{Average user weight on each feature (blue, left bar) and fraction of producers going for each feature (red, right bar) $n = 100$ producers, embedding dimension $d = 15$. Lower softmax temperature leads to more producer specialization. User embeddings obtained from NMF on the RentTheRunway dataset. \label{fig:full-rentrunway-udpd}}
\end{figure}

\subsection{Average producer utility}\label{app:pu-utils}
In Figure \ref{fig:utility_plot_app-AM-RTR}, we plot the average producer utility with increasing softmax-serving temperature and with varying numbers of producers on the AmazonMusic and RentTheRunway datasets. As in Figure \ref{fig:utility_producers-ml100k}, we observe that the producer utility is decreasing with temperature, and temperature $\tau = 0.01$ (near-hardmax) has the highest utility. This further supports using low temperatures in the softmax content-serving rule.

\begin{figure}[!h]
\centering
\captionsetup[subfigure]{justification=centering}
\subfloat[\centering Producer utility AmazonMusic]{
    \includegraphics[width=0.23\textwidth]{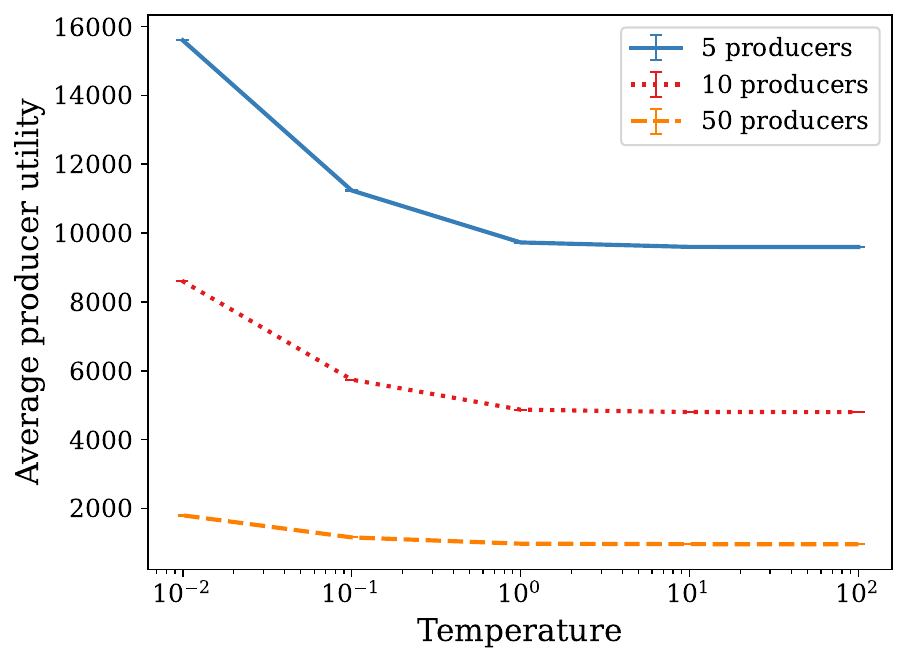}}
\hspace{0.05\textwidth}
\subfloat[\centering Producer utility RentTheRunway]{
    \includegraphics[width=0.23\textwidth]{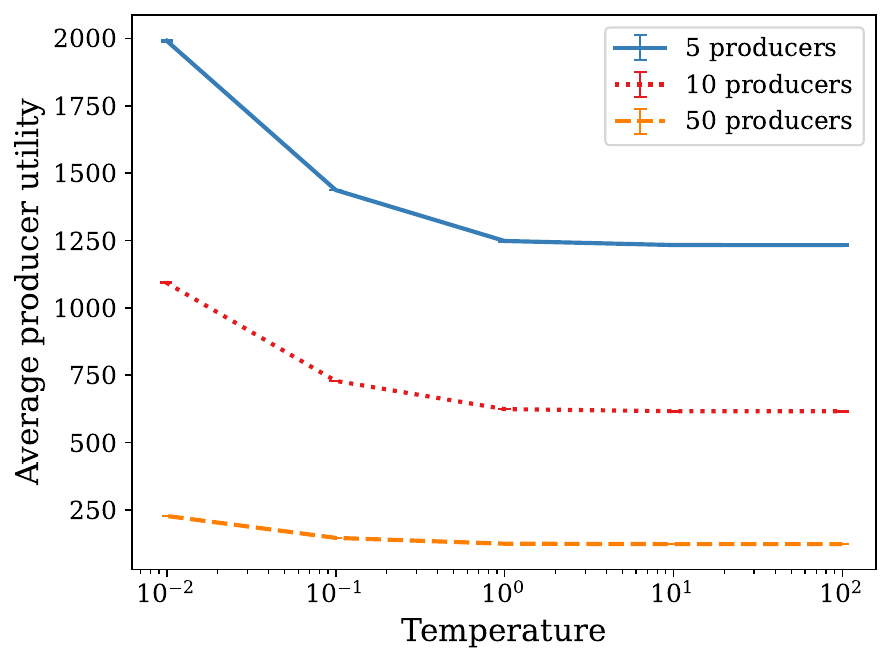}
}
\caption{Average producer utility is decreasing in the softmax temperature, results on the AmazonMusic and RentTheRunway datasets. The different curves represent different number of producers in the game $n \in \{5,10,50\}$, embedding dimension $d= 15$, error bars represent standard error over $5$ seeds} \label{fig:utility_plot_app-AM-RTR}
\end{figure}

In Figure~\ref{fig:utility_plot_app-AM-RTR-linvssm} we compare the average producer utility with the softmax serving rule (across increasing temperatures) v.s with the linear-proportional serving rule. We observe that across both datasets, the lowest softmax temperature we experiment with ($\tau = 0.01$) leads to a greater utility when compared to linear serving. However, linear serving still obtains a competitive utility, greater than that with softmax temperatures $\tau \in \{0.1, 1, 10, 100\}$.

\begin{figure}[!h]
\centering
\captionsetup[subfigure]{justification=centering}
\subfloat[\centering Producer utility AmazonMusic]{
    \includegraphics[width=0.23\textwidth]{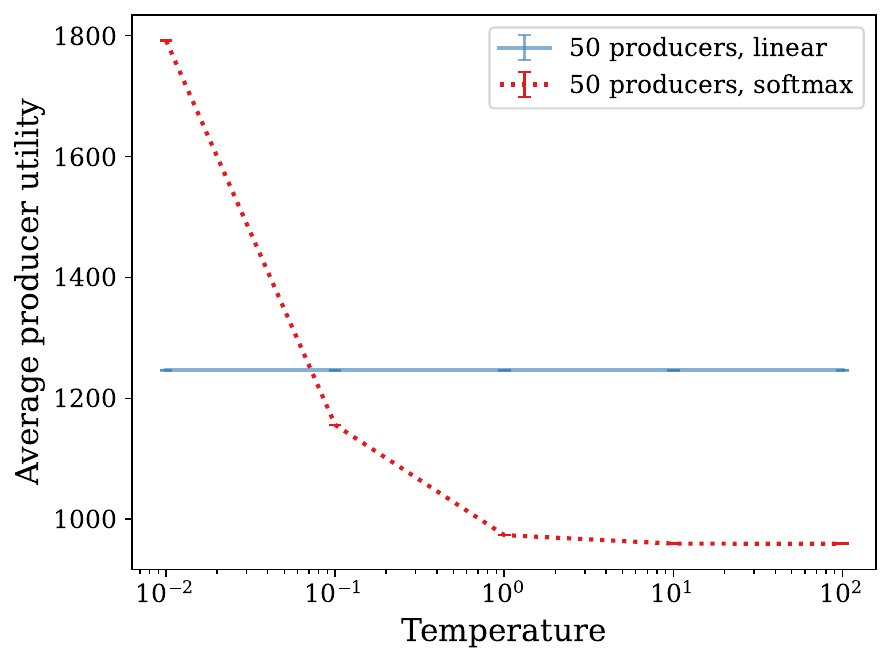}}
\hspace{0.05\textwidth}
\subfloat[\centering Producer utility RentTheRunway]{
    \includegraphics[width=0.23\textwidth]{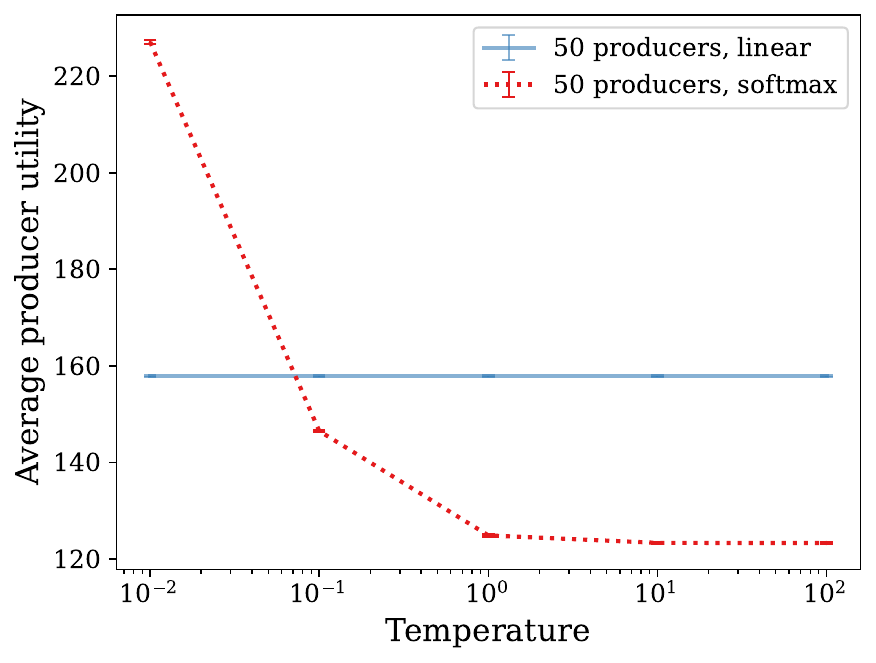}
}
\caption{Average producer utility with linear serving (in blue) v.s with softmax-serving (in red) across increasing temperatures on the AmazonMusic and RentTheRunway datasets. $n=50$ producers, embedding dimension $d=15$, error bars represent standard error over $5$ seeds.}
\label{fig:utility_plot_app-AM-RTR-linvssm}
\end{figure}

\section{Tables for Nash Equilibrium convergence}
\label{appendix-NE-convtables}
For our engagement game, we consider $12$ different number of producers $(2, 5, 10, 20, \ldots, 100)$ and $4$ different embedding dimensions $(5, 10, 15, 20)$. Each of these $12 \times 4$ games are instantiated with $5$ random seeds for the randomness in the draws of the synthetic data, and the randomness of NMF algorithm on the real datasets. \\
We observe that on each of these $240$ instances of (\#producer, dimension, seed), the softmax content-serving rule with temperatures $\tau \in \{100, 10, 1, 0.1\}$ \emph{always} converges to a unique NE on all the datasets.  With softmax temperature $\tau=0.01$, Algorithm \ref{alg:bestrep_dynamics} still converges to a unique NE in a large number of instances. The linear-proportional serving also converges to a unique NE almost always.

\begin{table}[!h]
\centering
\begin{tabular}{lrrrrrr}
\hline
\multirow{2}{*}{Dataset $\backslash$ Serving Rule} & Linear & \multicolumn{5}{c}{Softmax} \\ 
                         &        & $\tau = 100$ & $\tau = 10$ & $\tau = 1$ & $\tau = 0.1$ & $\tau = 0.01$ \\ 
\hline
Uniform                  & 239    & 240          & 240         & 240         & 240           & 232            \\ 
Skewed-uniform           & 240    & 240          & 240         & 240         & 240           & 240            \\ 
Movielens-100k           & 240    & 240          & 240         & 240         & 240           & 233            \\ 
AmazonMusic              & 240    & 240          & 240         & 240         & 240           & 236            \\ 
RentTheRunway            & 240    & 240          & 240         & 240         & 240           & 237            \\ 
\hline
\end{tabular}
\captionof{table}{Number of instances in which Algorithm~\ref{alg:bestrep_dynamics} converges to a Nash equilibrium \label{table:expconv_table}}
\end{table}

\section{Reproducibility} \label{app:reproducibility}
Here we briefly describe the datasets, embedding generation, and experiments. The code and further details are available in the supplementary.

\paragraph{Datasets:} We import the Movielens-100k dataset from the \code{scikit-surprise} package. For the AmazonMusic ratings we use the ``ratings only'' \code{Digital\_Music.csv} from \url{https://nijianmo.github.io/amazon/index.html}, this dataset has approximately $1.5$ million ratings, $840k$ unique users and $450k$ unique items.  For RentTheRunway we use  \url{https://cseweb.ucsd.edu/~jmcauley/datasets.html#clothing_fit}, this dataset has around $190k$ ratings, $100k$ unique users and $5.8k$ unique items. 
\paragraph{Embedding generation:}
We consider the following embedding seeds $\{13, 17, 19, 23, 29\}$ for randomness in the embedding generation. These $5$ embeddings seeds are used in the random draws for the synthetic uniform and skewed embeddings, and for the randomness in the Non-negative matrix factorization embeddings for the Movielens-100k, AmazonMusic and RentTheRunway datasets.
Note that the real data embeddings are generated using the NMF implementation in \code{scikit-surprise} where we pick $4$ different factors $d \in \{5, 10, 15, 20\}$. 
The synthetic and real data embeddings take 2.6 CPU hours in total to generate and are saved for further use in the experiments.

Note that to parallelize workloads for embedding generation and for the following experiments we use Slurm job arrays.

\paragraph{Experiment 1 : Convergence of Algorithm \ref{alg:bestrep_dynamics}}
For our engagement game, we consider $12$ different number of producers $n \in \{2, 5, 10, 20, 30, 40, 50, 60, 70, 80, 90, 100\}$ and $4$ different embedding dimensions $ d \in \{5, 10, 15, 20\}$. Each of these $12 \times 4$ engagement games are instantiated with the $5$ embedding seeds described above which determine the user embeddings for a dataset.
We then run Algorithm \ref{alg:bestrep_dynamics} $40$ times with the maximum number of iterations $N_{max}$ set to 500. In each of the $40$ trials of Algorithm \ref{alg:bestrep_dynamics} the producer strategies are randomly initialized using the sequential seeds $\{1, 2, \ldots 40\}$. We then plot the mean and standard error across these in Figure \ref{fig:num_iters_convergence},\ref{fig:num_iters_convergence-synthetic} and \ref{fig:num_iters_convergence_amzn-rtr}. All plots use a softmax temperature of $\tau = 1$ for simplicity of exposition.

This experiment on the Uniform, skewed and Movielens-100k datasets take approximately $50, 74$ and $3$ total CPU hours respectively. With the larger scale AmazonMusic and RentTheRunway it takes $\sim$ 4.5k and 1k total CPU hours.  

\textbf{Experiment 2 : Producer distribution and utility \\\\} 
\textbf{Producer distribution:}
Here we fix the embedding seed to $17$, and plot the producer distribution for the linear serving rule and for the softmax-serving rule after Algorithm \ref{alg:bestrep_dynamics} terminates. 
Note that here we don't report mean producer distribution across embedding seeds as this can hide lack of specialization, in the following manner: for a given embedding, only a few features are targeted by the producers, like in Figure~\ref{fig:uniform-udpd-temp10-full} all producers go for index $\#2$. However, the specific index of the feature could change as the embedding seed changes. In that case, when averaging over the seeds, we may be under the impression that specialization occurs, when it does not. 

\textbf{Producer utility:}
Here for each of the $12 \times 4$ instances of \#producers $n$ and dimension $d$, we instantiate the engagement game with linear-serving and $5$ different softmax-serving temperatures $\tau \in \{0.01, 0.1, 1, 10, 100\}$. We measure the average producer and user utility at the unique Nash equilibrium and report its mean across the $5$ embedding seeds. Average utilities are measured by taking the average across converging runs, and non-converging seeds are dropped. 

This experiment on the Uniform, skewed and Movielens-100k datasets take approximately $4.4, 7$ and $0.5$ total CPU hours respectively. With the larger scale datasets, AmazonMusic and RentTheRunway, it takes $\sim$ 530 and 105 total CPU hours.

\end{document}